\newtheorem{proposition}{Proposition}
\def\BibTeX{{\rm B\kern-.05em{\sc i\kern-.025em b}\kern-.08em
    T\kern-.1667em\lower.7ex\hbox{E}\kern-.125emX}}
\begin{document}

\title{Attenuation Compensation in Lossy Media via the Wave Operator Model}

\author{Tianchen Shao, \IEEEmembership{Graduate Student Member, IEEE}, Zekui Jia, Maokun Li, \IEEEmembership{Fellow, IEEE}, \\
Shenheng Xu, \IEEEmembership{Member, IEEE} and Fan Yang, \IEEEmembership{Fellow, IEEE}.

\thanks{This work is supported by Noncommunicable Chronic Diseases-National Science and Technology Major Project under Grant 2024ZD0530000.}
\thanks{Tianchen Shao, Maokun Li, Shengheng Xu and Fan Yang are with the Department of Electronic Engineering, Beijing National Research Center for Information Science and Technology (BNRist), 
and State Key laboratory of Space Network and Communications, Tsinghua University, Beijing 100084, China (e-mail: maokunli@tsinghua.edu.cn)}
\thanks{Zekui Jia is with the Elmore Family School of Electrical and Computer Engineering, Purdue University, West Lafayette, IN 47907 USA.}}

\maketitle

\begin{abstract}
The wave operator model provides a framework for modeling wave propagation by encoding material parameter distributions into matrix-form operators. 
This paper extends this framework from lossless to lossy media. 
We present a derivation of the wave operator solution for the electric field in dissipative environments, which can be decomposed into a closed-form propagation term and a non-closed-form dissipation term. 
Based on an analysis of the dominant exponential decay within the propagation term, an attenuation compensation strategy is proposed to restore the attenuated data to an approximate lossless state. 
The performance of this compensation strategy is analyzed and validated through numerical experiments, establishing the theoretical foundation for reduced order model (ROM)-based techniques in lossy media.

\end{abstract}

\begin{IEEEkeywords}
Inverse scattering problem, lossy media, attenuation compensation, wave operator model, wave propagation, reduced order model
\end{IEEEkeywords}

\section{Introduction}
\label{sec:introduction}

\IEEEPARstart{E}{lectromagnetic} inverse scattering problems (ISPs) have wide applications across various domains, such as geophysical exploration \cite{Bucci2001subsurface}, \cite{guo2020joint}, remote sensing \cite{Moghaddam2020Soil}, \cite{tan2021precise}, nondestructive evaluation \cite{pastorino2002global}, \cite{wu2019time}, and biomedical imaging \cite{song2020study, abubakar2002imaging, bolomey1990microwave, henriksson2010quantitative, mojabi2009microwave}, etc., to reconstruct distributions of target parameters from measured scattering data \cite{chen2018computational}.
However, ISPs are challenging to solve due to their inherent high nonlinearity and ill-posedness, arising from the complex interactions between waves and inhomogeneous media, and insufficient physical information in limited measurements \cite{chew1999waves}, \cite{mueller2012linear}.
In lossy media, the reconstruction of buried targets also suffers from the attenuation of internal waveforms \cite{Catapano2008buried, cui2000novel}.

A variety of methods have been developed to address ISPs, which can be broadly categorized into qualitative and quantitative methods. 
Qualitative methods, such as the linear sampling method (LSM) \cite{colton1996simple, crocco2012linear, bevacqua2025improved}, back-projection (BP) algorithm \cite{haynes2023t}, time-reversal technique \cite{yavuz2008sensitivity, wu2022tailoring}, and the MUSIC algorithm \cite{kirsch2002music}, are noted for their computational efficiency but are limited to providing structural information rather than precise reconstructions. 
In contrast, quantitative methods aim for high-fidelity inversion of material parameters. 
Analytical approaches, such as the Gelfand-Levitan-Marchenko (GLM) method \cite{gel1951determination, slob2016green} and the method of characteristics \cite{habashy1987some, parker1990forward}, transform the governing partial differential equations (PDEs) into more manageable linear equations. 
These methods compute exact solutions for one-dimensional ISPs but struggle with extensions to higher dimensions. 
Various quantitative methods based on numerical optimization have been developed, including the Born iterative method (BIM) \cite{wang1989iterative}, distorted Born iterative method (DBIM) \cite{chew1990reconstruction, wu2025inverse}, contrast source inversion (CSI) \cite{van2001contrast, xu2020deep}, subspace-based optimization method (SOM) \cite{Chen2010subspace, wang2024cross}, iterative multiscaling approach (IMSA) \cite{zhong2020multiresolution, salucci2023recent}, and the contraction integral equation (CIE) \cite{zhong2016new, bevacqua2021effective, bevacqua2024support}. 
These approaches formulate the ISP as a nonlinear optimization problem that is solved with iterative schemes. 
Although these quantitative methods have demonstrated the potential for accurate reconstruction, their practical application is often hindered by heavy computational burden and limited convergence performance.

Recently, the wave operator model has emerged as a framework for addressing ISPs \cite{borcea2024data, borcea2023waveform, borcea2023waveforminternal}.
In this model, parameter distributions are encoded into a wave operator that preserves the propagation characteristics of the governing wave equation \cite{borcea2020reduced}.
In lossless media, leveraging the properties of the self-adjoint wave operator, the electric field admits a time-domain closed-form solution, which thereby enables the construction of reduced order models (ROMs) from data directly \cite{borcea2020reduced, shao2025robust}.
The ROM can be employed to extract single scattering (Born) data from measurements, mitigating artifacts in the inversion result \cite{borcea2018untangling, jia2021enhanced, jia2022linearization, shao2024robust}.
This technique has also been applied to reconstruct layered media with single-input single-output (SISO) data \cite{jia2024estimation}, and has been integrated with the Lippmann-Schwinger-Lanczos (LSL) system in the frequency domain to facilitate target inversion \cite{druskin2021lippmann, druskin2024reduced, druskin2024rom, baker2025regularized}.
However, in lossy media, the wave is attenuated during propagation and the time-reversal symmetry of the wave equation is broken \cite{borcea2021reduced}.
Consequently, these developed ROM-based inversion techniques cannot be applied directly, which highlights the need for extending the wave operator model to lossy environments.

Addressing the challenge of wave attenuation is a crucial step toward extending the ROM to dissipative environments.
In lossy media, wave attenuation causes amplitude decay and phase distortion, leading to a loss of information essential for high-accuracy inversion \cite{yavuz2006full}.
To mitigate these effects and preserve the precision of parameter inversion, various attenuation compensation techniques have been developed, which can be broadly classified into two categories: data-domain and path-based methods \cite{wang2024enhanced}.
Data-domain methods are applied directly to the attenuated data for waveform enhancement, such as time-varying deconvolution, inverse Q-filtering \cite{hargreaves1991inverse, wang2002stable}, time-variant spectral whitening \cite{yilmaz2001seismic}, the short-time Fourier transform (STFT) method \cite{yavuz2005frequency, abduljabbar2016frequency}, and the wavelet transform method \cite{abduljabbar2017continuous, xanthos2021resolution}, which are computationally efficient.
However, these techniques may lack a rigorous physical basis and suffer from numerical instability. 
In contrast, path-based compensation methods are constrained by the governing wave equation and conducted to the wave propagation process, which are more physically interpretable. 
This category, which includes numerical time-reversal algorithm \cite{kosmas2005time},  Q-compensated one-way wave equation migration \cite{mittet1995prestack, zhang2013compensation}, Q-compensated reverse time migration (Q-RTM) \cite{zhang2010compensating, wang2018adaptive} and least-squares reverse time migration \cite{dutta2014attenuation, yang2019viscoacoustic}, can achieve more accurate energy recovery and enhanced stability, but these advantages come at the cost of high algorithmic complexity and reliance on accurate input models.

In this paper, we leverage the wave operator model to develop an attenuation compensation strategy.
Based on our previous work that introduced a ROM-based inversion method for lossless media \cite{shao2025robust}, this paper extends the wave operator model to lossy media.
We begin by deriving the non-closed-form wave operator solution for the electric field, which can be decomposed into a closed-form propagation term and a non-closed-form dissipation term.
Based on the wave operator solution, we propose an efficient attenuation compensation strategy by analyzing the exponential decay within the propagation term.
This strategy amplifies the measured data to its approximate lossless state, and its performance is analyzed and validated through numerical experiments.
This work establishes the necessary theoretical foundation for facilitating the ROM-based parameter inversion procedure in lossy environments, which will be the focus of our future work.

This paper is organized as follows. Section II details the formulation of the wave operator model in lossy media and introduces the attenuation compensation strategy.
Section III presents several numerical examples to evaluate the performance of attenuation compensation.
Conclusions are presented in Section IV.

\section{Theory}

\subsection{Wave Operator Solution}

In 2D lossy medium, the propagation of the TM mode electric field $E_0(\boldsymbol{x}, t)$, generated by a time-even-symmetric source $f_0(t)$ with compact support $(-t_0, t_0)$ located at $\boldsymbol{x_s}$, is governed by the second-order wave equation \cite{chew1999waves}
\begin{equation}
  (\partial_t^2 + \frac{\sigma(\boldsymbol{x})}{\epsilon(\boldsymbol{x})}\partial_t - \frac{1}{\epsilon(\boldsymbol{x})}\nabla\cdot\frac{1}{\mu(\boldsymbol{x})}\nabla) E_0(\boldsymbol{x}, t) = \partial_t f_0(t)\delta_{\boldsymbol{x_s}}(\boldsymbol{x}).
  \label{Wave Equation 1}
\end{equation}
Here $\sigma(\boldsymbol{x})$, $\epsilon(\boldsymbol{x})$, and $\mu(\boldsymbol{x})$ represent the conductivity, permittivity and permeability distributions, respectively.
The Dirac function $\delta(\boldsymbol{x}-\boldsymbol{x_s})$ is abbreviated as $\delta_{\boldsymbol{x_s}}(\boldsymbol{x})$.

To facilitate the analysis, the Liouville transformation \cite{borcea2019robust} and initial condition is introduced.
By converting the field as $E=\sqrt{\epsilon(\boldsymbol{x})}E_0$ and the source as $f = \sqrt{\epsilon(\boldsymbol{x})}f_0$, the wave equation \eqref{Wave Equation 1} is transformed into the following form:
\begin{equation}
  \begin{aligned}
    (\partial_t^2 + \frac{\sigma(\boldsymbol{x})}{\epsilon(\boldsymbol{x})}\partial_t + A) E(\boldsymbol{x}, t) &= \partial_t f(\boldsymbol{x}, t) \delta_{\boldsymbol{x_s}}(\boldsymbol{x}), &t\in\mathbb{R}, \\
    E(\boldsymbol{x}, t) &= 0, \quad &t<0,
  \end{aligned}
  \label{Wave Equation 2}
\end{equation}
where the self-adjoint wave operator $A$ is defined as
\begin{equation}
  A = -\frac{1}{\sqrt{\epsilon(\boldsymbol{x})}}\nabla\cdot\frac{1}{\mu(\boldsymbol{x})}\nabla\frac{1}{\sqrt{\epsilon(\boldsymbol{x})}}.
  \label{Wave Operator}
\end{equation}
This wave operator preserves the propagation characteristics of the field and exhibits a quadratic relationship with $\eta=1 / \sqrt{\epsilon(\boldsymbol{x})}$.

The solution to \eqref{Wave Equation 2} can be formulated using Green's function $G(\boldsymbol{x}-\boldsymbol{x_s}, t)$, which is the impulse response of the system satisfying:
\begin{equation}
  \begin{aligned}
    (\partial_t^2 + p(\boldsymbol{x})\partial_t + A) G(\boldsymbol{x}-\boldsymbol{x_s}, t) &= \partial_t \delta(t) \delta_{\boldsymbol{x_s}}(\boldsymbol{x}), &t\in\mathbb{R}, \\
    G(\boldsymbol{x}-\boldsymbol{x_s}, t) &= 0, \quad &t<0.
  \end{aligned}
  \label{Green Function 1}
\end{equation}
Here, we introduce the dissipation parameter $p(\boldsymbol{x})=\frac{\sigma(\boldsymbol{x})}{\epsilon(\boldsymbol{x})}$ to represent the distribution of media loss.
The solution for $G(\boldsymbol{x}-\boldsymbol{x_s}, t)$ depends on the spatial distribution of $p(\boldsymbol{x})$, as stated in the following proposition.

\begin{proposition}
The solution to the Green's function in \eqref{Wave Equation 1} is determined by the spatial distribution of the dissipation parameter $p(\boldsymbol{x})$.

\textbf{1) Uniform Dissipation:} For a medium with a spatially uniform dissipation parameter $p(\boldsymbol{x})=p$, where $p$ is a constant (i.e., $\sigma(\boldsymbol{x})\propto\epsilon(\boldsymbol{x})$), the Green's function has an exact closed-form solution:
\begin{equation}
 G(\boldsymbol{x}, t) = e^{-\frac{p}{2}t}\cos(t\sqrt{A'})H(t)\delta_{\boldsymbol{x_s}}(\boldsymbol{x}),
 \label{Green Function Solution 1}
\end{equation}
where $A'=A-p^2/4$ and $H(t)$ is the Heaviside step function.

\textbf{2) Non-Uniform Dissipation:} In a medium with non-uniform dissipation, $p(\boldsymbol{x})=p_0+\Delta p(\boldsymbol{x})$, the uniform solution \eqref{Green Function Solution 1} can be used as an approximation, denoted $G_{app}$, by setting $p=p_0$. The resulting approximation error is quantified by the residual from applying the solution \eqref{Green Function Solution 1} in the wave system \eqref{Wave Equation 1}:
\begin{equation}
 \left[\partial_t^2 + p(\boldsymbol{x})\partial_t + A\right]G_{app}(t) \approx \frac{t}{2}\left[\Delta p(\boldsymbol{x}), A'_0\right]\cos(t\sqrt{A'_0})H(t),
 \label{Green Function Solution 2}
\end{equation}
where $A'_0=A-p_0^2/4$. The term on the right is governed by the commutator $\left[\Delta p(\boldsymbol{x}), A'_0\right]=\Delta p(\boldsymbol{x})A'_0-A'_0\Delta p(\boldsymbol{x})$, which measures the degree of non-commutativity between the spatial variation of the loss and the wave operator.
\end{proposition}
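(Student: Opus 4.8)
The plan is to handle both cases with a single device — a time-dependent gauge (conjugation) that peels off the first-order damping term — and, in the non-uniform case, to read off the obstruction that prevents this gauge from being exact.

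\emph{Uniform dissipation.} I would substitute $G = e^{-\frac{p}{2}t}W$ into \eqref{Green Function 1}. Since $p$ is a constant scalar, $e^{-\frac{p}{2}t}$ commutes with $\partial_t$ and with $A$, so the product rule yields, for $t>0$, the exact identity $\partial_t^2 G + p\,\partial_t G + AG = e^{-\frac{p}{2}t}\big(\partial_t^2 W + (A-\tfrac{p^2}{4})W\big)$: the cross term generated by $p\,\partial_t$ cancels the first-order term coming from differentiating the exponential, and the $p^2/4$ pieces assemble into the spectral shift $A'=A-p^2/4$. Hence $W$ satisfies the \emph{undamped} operator wave system, whose causal solution is given by the functional calculus of the self-adjoint operator $A'$ as $W=\cos(t\sqrt{A'})H(t)\delta_{\boldsymbol{x_s}}$; one checks this by differentiating $\cos(t\sqrt{A'})H(t)$ twice and using $\cos 0=I$, $\sin 0=0$, and $\partial_t^2\cos(t\sqrt{A'})=-A'\cos(t\sqrt{A'})$, so that the only surviving singular contribution is $\partial_t\delta(t)\delta_{\boldsymbol{x_s}}$. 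Reinstating the gauge factor gives \eqref{Green Function Solution 1}. The one delicate point is matching the $\delta(t)$- and $\delta'(t)$-type distributions at $t=0$ against the causality condition $G=0$ for $t<0$; the bulk identity for $t>0$ is immediate.

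\emph{Non-uniform dissipation.} I would run the same conjugation, but now with the \emph{spatially varying} amplitude factor $e^{-\frac{p(\boldsymbol{x})}{2}t}$, i.e. take $G_{app}$ to be the closed form \eqref{Green Function Solution 1} evaluated with the \emph{local} dissipation $p(\boldsymbol{x})$ (the frozen-coefficient approximation; the operator $A'$ is simplified to its leading-order value $A_0'$ only in the final estimate). Repeating the uniform-case algebra, every cancellation survives \emph{except} the one that used the decay factor being a scalar, and one finds, for $t>0$, $\big(\partial_t^2+p(\boldsymbol{x})\partial_t+A\big)G_{app}=\big[A,\,e^{-\frac{p(\boldsymbol{x})}{2}t}\big]\cos\!\big(t\sqrt{A-\tfrac{p(\boldsymbol{x})^2}{4}}\big)H(t)\delta_{\boldsymbol{x_s}}$, plus the $t=0$ discrepancy already present above. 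Writing $p(\boldsymbol{x})=p_0+\Delta p(\boldsymbol{x})$ and expanding $e^{-\frac{p(\boldsymbol{x})}{2}t}=1-\tfrac{t}{2}p(\boldsymbol{x})+O(t^2)$, together with $\cos\!\big(t\sqrt{A-p(\boldsymbol{x})^2/4}\big)=\cos(t\sqrt{A_0'})+O(t^3\Delta p)$, the commutator linearizes to $-\tfrac{t}{2}[A,p(\boldsymbol{x})]=\tfrac{t}{2}[p(\boldsymbol{x}),A]=\tfrac{t}{2}[\Delta p(\boldsymbol{x}),A_0']$, where the last two equalities hold because the constant $p_0$, hence $p_0^2/4$, commutes with everything. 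This is exactly \eqref{Green Function Solution 2}.

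I expect the main obstacle to be the bookkeeping in the non-uniform case: one must verify, term by term, that the contributions of $\partial_t^2$, $p(\boldsymbol{x})\partial_t$ and the shift inside $A'\mapsto A-\tfrac{p(\boldsymbol{x})^2}{4}$ cancel in pairs exactly as in the uniform case — this hinges on $\cos(t\sqrt{A-p(\boldsymbol{x})^2/4})$ commuting with $A-\tfrac{p(\boldsymbol{x})^2}{4}$ by functional calculus but \emph{not} with multiplication by $\Delta p(\boldsymbol{x})$, which is precisely what keeps the commutator alive — and then that everything discarded in the final step (the $O(t^2)$ tail of the exponential, the terms quadratic in $\Delta p$, and the singular $t=0$ terms) is genuinely higher order, so that \eqref{Green Function Solution 2} holds to first order in the loss contrast for short propagation times. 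Making the functional-calculus manipulations of $\cos(t\sqrt{\,\cdot\,})$ rigorous for the unbounded self-adjoint operators involved is routine once $A$ is set up as a positive self-adjoint operator on the appropriate weighted $L^2$ space.
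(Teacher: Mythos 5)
Your proposal is correct and follows essentially the same route as the paper's Appendix~\ref{Appendix Green's Function}: the gauge factor $e^{-\frac{p}{2}t}$ you peel off is exactly the factor the paper carries through its direct substitution, the cross-term cancellation producing $A'=A-p^2/4$ is the same algebra, and in the non-uniform case you isolate the same commutator $\left[A',e^{-\frac{p(\boldsymbol{x})}{2}t}\right]$ and linearize it in $t\,\Delta p(\boldsymbol{x})$ to arrive at \eqref{Green Function Solution 2}. The $t=0$ distributional matching you flag but defer is treated at the same informal level in the paper itself (it only notes that the $\sin$-weighted $\delta(t)$ term vanishes), so your argument is on par with the paper's own proof.
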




The proof is detailed in Appendix \ref{Appendix Green's Function}.
This proposition indicates that the closed-form solution \eqref{Green Function Solution 1} is a precise representation for media with homogeneous loss distribution.
However, for media with inhomogeneous loss distribution, its accuracy decreases as the variation $\Delta p(\boldsymbol{x})$ and propagation time $t$ increases.


Using the Green's function from \eqref{Green Function Solution 1}, we calculate the electric field $E(\boldsymbol{x}, t)$ as
\begin{equation}
  \begin{aligned}
    E(\boldsymbol{x}, t) =& \int_{-\infty}^{\infty} G(\boldsymbol{x}-\boldsymbol{x_s}, t-t')f(t')dt' \\
    =& G(\boldsymbol{x}-\boldsymbol{x_s}, t) \ast_t f(t).
  \end{aligned}
  \label{E Field Time}
\end{equation}
Assume $\tilde{G}(t)=e^{\frac{p}{2}t}G(t)=\cos(t\sqrt{A'})H(t)$, then in Fourier domain
\begin{equation}
  \hat{G}(\omega) = \hat{\tilde{G}}(\omega-j\frac{p}{2}).
\end{equation}
We apply the Fourier transform on \eqref{E Field Time} and obtain
\begin{equation}
  \hat{E}(\omega) = \hat{f}(\omega)\hat{\tilde{G}}(\omega-j\frac{p}{2}).
\end{equation}

To calculate $\widehat{E}(\omega)$ in frequency domain, we compute $\hat{\tilde{G}}(\omega)$ by Fourier transformation
\begin{equation}
    \begin{aligned}
        \widehat{\tilde{G}}(\omega) =& \int_{-\infty}^{\infty}\cos(t\sqrt{A'})H(t)e^{-j\omega t}dt \\
        =& \frac{1}{2}\int_{0}^{\infty}(e^{-j\sqrt{A'}t} + e^{j\sqrt{A'}t})e^{-j\omega t}dt \\
        =& -\frac{j}{2(\omega+\sqrt{A'})} - \frac{j}{2(\omega-\sqrt{A'})}  \\
        &+ \left.\frac{je^{-j(\omega+\sqrt{A'})t}}{2(\omega+\sqrt{A'})}\right|_{t=\infty} + \left.\frac{je^{-j(\omega-\sqrt{A'})t}}{2(\omega-\sqrt{A'})}\right|_{t=\infty} \\
        =& \frac{j\omega}{A'-\omega^2}+\frac{\pi}{2}\left[\delta(\omega+\sqrt{A'})+\delta(\omega-\sqrt{A'})\right].
    \end{aligned}
    \label{Fourier Green 1}
\end{equation}
The formulation of Dirac functions $\delta(\omega\pm\sqrt{A'})$ in the last equation is referred to \cite[Appendix B]{shao2024robust}.
Then we get
\begin{equation}
    \begin{aligned}
        \widehat{G}(\omega) =& \widehat{\tilde{G}}(\omega - j\frac{p}{2}) \\
        =& \frac{\pi}{2}\left[\delta(\omega+\sqrt{A'}-\frac{jp}{2})+\delta(\omega-\sqrt{A'}-\frac{jp}{2})\right] \\
        &+ \frac{j\omega+\frac{p}{2}}{A-\omega^2+j\omega p}.
    \end{aligned}
    \label{Fourier Green 2}
\end{equation}
Note that $A'-(\omega-j\frac{p}{2})^2 = A-\omega^2+j\omega p$.

Then we obtain the electric field in Fourier domain:
\begin{equation}
    \begin{aligned}
        \widehat{E}(\omega) =& \hat{f}(\omega)\widehat{G}(\omega) \\
        =& \frac{\pi}{2}\left[\delta(\omega+\sqrt{A'}-\frac{jp}{2})+\delta(\omega-\sqrt{A'}-\frac{jp}{2})\right]\hat{f}(\omega) \\
        &+ \frac{j\omega+\frac{p}{2}}{A-\omega^2+j\omega p}\hat{f}(\omega).
    \end{aligned}
    \label{Fourier E 1}
\end{equation}
The electric field in time domain is derived via inverse Fourier transform
\begin{equation}
    \begin{aligned}
        E(t) =& \frac{1}{2\pi}\int_{-\infty}^{\infty}\widehat{E}(\omega)e^{j\omega t}d\omega \\
        =& \frac{1}{2\pi}\int_{-\infty}^{\infty}\frac{j\omega+\frac{p}{2}}{A-\omega^2+j\omega p}\hat{f}(\omega)e^{j\omega t}d\omega \\
        &+ \frac{1}{4}\int_{-\infty}^{\infty}\hat{f}(\omega)\delta(\omega+\sqrt{A'}-\frac{jp}{2})e^{j\omega t}d\omega \\
        &+ \frac{1}{4}\int_{-\infty}^{\infty}\hat{f}(\omega)\delta(\omega-\sqrt{A'}-\frac{jp}{2})e^{j\omega t}d\omega \\
        =& \frac{1}{2\pi}\int_{-\infty}^{\infty}\frac{j\omega+\frac{p}{2}}{A-\omega^2+j\omega p}\hat{f}(\omega)e^{j\omega t}d\omega \\
        &+ \frac{1}{4}\hat{f}(\sqrt{A'}-\frac{jp}{2})e^{-j(\sqrt{A'}-j\frac{p}{2})t} \\
        &+ \frac{1}{4}\hat{f}(\sqrt{A'}+\frac{jp}{2})e^{j(\sqrt{A'}+j\frac{p}{2})t}. 
    \end{aligned}
    \label{Time E 1}
\end{equation}

From \eqref{Time E 1}, we can observe that the electric field distribution $E(\boldsymbol{x}, t)$ is decomposed into two components:
\begin{equation}
  \begin{aligned}
    E(\boldsymbol{x}, t) =& E(t)\delta_{\boldsymbol{x_s}}(\boldsymbol{x}) \\
    =& E^{Pro}(\boldsymbol{x}, t) + E^{Dis}(\boldsymbol{x}, t).
  \end{aligned}
    \label{Time E 2}
\end{equation}
Here, we denote the closed-form term $E^{Pro}(\boldsymbol{x}, t)$ as the propagation term, and the non-closed-form term $E^{Dis}(\boldsymbol{x}, t)$ as the dissipation term.
These components are expressed as follows:
\begin{equation}
  \begin{aligned}
    E^{Pro}(\boldsymbol{x}, t) =& \frac{1}{4}e^{-\frac{p}{2}t} \hat{f}\left(\sqrt{A'} + j\frac{p}{2}\right)e^{j\sqrt{A'}t}\delta_{\boldsymbol{x_s}}(\boldsymbol{x}) \\
    &+ \frac{1}{4}e^{-\frac{p}{2}t} \hat{f}\left(\sqrt{A'} - j\frac{p}{2}\right)e^{-j\sqrt{A'}t}\delta_{\boldsymbol{x_s}}(\boldsymbol{x}),
  \end{aligned}
  \label{Propagation Solution}
\end{equation}
\begin{equation}
  E^{Dis}(\boldsymbol{x}, t) = \frac{1}{2\pi}\int_{-\infty}^{\infty}\frac{j\omega+\frac{p}{2}}{A-\omega^2+j\omega p}\hat{f}(\omega)e^{j\omega t}\delta_{\boldsymbol{x_s}}(\boldsymbol{x})d\omega.
  \label{Dissipative Solution}
\end{equation}
This decomposition separates the electric field distribution into a propagating component, which has a structure analogous to the lossless case but is modulated by an exponential decay, and a dissipative component that arises purely from the lossy nature of the medium.


In pratice, we work with the even-extended electric field:
\begin{equation}
  \begin{aligned}
    \tilde{E}(\boldsymbol{x}, t) =& E(\boldsymbol{x}, t) + E(\boldsymbol{x}, -t) \\
    =& \tilde{E}^{Pro}(\boldsymbol{x}, t) + \tilde{E}^{Dis}(\boldsymbol{x}, t).
  \end{aligned}
  \label{Even Extension}
\end{equation}
The corresponding even-extended propagation term and dissipation term are
\begin{equation}
  \begin{aligned}
    \tilde{E}^{Pro}(\boldsymbol{x}, t) =& \frac{1}{2}\hat{f}(\sqrt{A'}+j\frac{p}{2}) \cos\left[(\sqrt{A'}+j\frac{p}{2})t\right]\delta_{\boldsymbol{x_s}}(\boldsymbol{x}) \\
    &+ \frac{1}{2}\hat{f}(\sqrt{A'}-j\frac{p}{2}) \cos\left[(\sqrt{A'}-j\frac{p}{2})t\right]\delta_{\boldsymbol{x_s}}(\boldsymbol{x}),
    \label{Even Propagation Solution}
  \end{aligned}
\end{equation}
\begin{equation}
  \tilde{E}^{Dis}(\boldsymbol{x}, t) = \frac{p}{2\pi}\int_{-\infty}^{\infty}\frac{(A+\omega^2)\hat{f}(\omega)}{(A-\omega^2)^2+\omega^2 p^2}\cos(\omega t)\delta_{\boldsymbol{x_s}}(\boldsymbol{x})d\omega.
  \label{Even Dissipative Solution}
\end{equation}

As the dissipation term $p(\boldsymbol{x})=p$ approaches to zero, the even-extended electric field \eqref{Even Extension} will converge to the lossless wave operator solution:
\begin{equation}
  \lim_{p\rightarrow0}\tilde{E}(\boldsymbol{x}, t) = \hat{f}(\sqrt{A})\cos(t\sqrt{A})\delta_{\boldsymbol{x_s}}(\boldsymbol{x}),
\end{equation}
which is consistent with previous work \cite{shao2025robust}.

For the measurement, we assume the receiver is located at $\boldsymbol{x_r}$, and the measured data is given by
\begin{equation}
  \begin{aligned}
    D(t) =& \int_{\Omega} \delta^T_{\boldsymbol{x}_r}(\boldsymbol{x}) \tilde{E}(\boldsymbol{x}, t) d\boldsymbol{x} \\
    =& \int_{\Omega} \delta^T_{\boldsymbol{x}_r}(\boldsymbol{x})\left[\tilde{E}^{Pro}(\boldsymbol{x}, t)+\tilde{E}^{Dis}(\boldsymbol{x}, t)\right] d\boldsymbol{x} \\
    =& D^{Pro}(t) + D^{Dis}(t),
  \end{aligned}
  \label{Data Function}
\end{equation}
where $D^{Pro}(t)$ and $D^{Dis}(t)$ are the data components corresponding to the propagation term and dissipation term, respectively.

\subsection{Attenuation Compensation}
According to \cite{shao2025robust}, the ROM of wave operators in the lossless medium can be derived from measured data directly. 
However, in lossy media, the presence of the non-closed-form dissipation term $\tilde{E}^{Dis}(\boldsymbol{x}, t)$ and the complex arguments in the propagation term $\tilde{E}^{Pro}(\boldsymbol{x}, t)$ violate the ROM construction approach. 
To overcome this limitation and enable the ROM-based techniques, we propose an attenuation compensation strategy based on the wave operator solution \eqref{Even Extension}.

Observing the propagation term in (\ref{Propagation Solution}), the primary effect of wave attenuation is governed by the exponential decay factor $e^{-\frac{p}{2}t}$. 
Since the objective of attenuation compensation is to restore the attenuated data to its approximated lossless state, the influence of the decay can be reversed by applying a corresponding exponential gain. 
We define the compensated data $\tilde{D}(t)$ as:
\begin{equation}
  \tilde{D}(t) = D(t) \cdot e^{\frac{p_{app}t}{2}},
  \label{Data Compensation}
\end{equation}
where $p_{app}$ is an estimated compensation parameter.
The goal for $\tilde{D}(t)$ is to approximate the data that would be measured in a lossless medium with the same permittivity and permeability distributions.

We propose an estimation approach to determine a suitable value for compensation parameter $p_{app}$, which is based on the dissipation fraction defined in the following proposition.
\begin{proposition}
The dissipation fraction is defined as
\begin{equation}
  k(p_1, p_2) = \frac{D(t=0|p=p_1) - D(t=0|p=0)}{D(t=0|p=p_2) - D(t=0|p=0)},
  \label{Dissipation Fraction}
\end{equation}
where $D(t=0|p)$ is synthetic data at $t=0$ for a given dissipation parameter $p$.
Assume the $f(t)$ is a Gaussian-type source.
For small dissipation parameters, where $0 \le p_1 \le p_2 \ll \|A\|$, this dissipation fraction is approximately linear:
\begin{equation}
  k(p_1, p_2) = \frac{p_1}{p_2} + O\left(\frac{p_2^2-p_1^2}{\|A\|}\right).
\label{Dissipation Approximation}
\end{equation}
\end{proposition}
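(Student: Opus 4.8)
The plan is to reduce the statement to a low‑order Taylor expansion in the scalar parameter $p$ of the single quantity $\Delta(p):=D(t=0|p)-D(t=0|0)$, since \eqref{Dissipation Fraction} is exactly $k(p_1,p_2)=\Delta(p_1)/\Delta(p_2)$. It suffices to establish an expansion $\Delta(p)=c_1\,p\,\bigl(1+c\,p^2/\|A\|+O(p^4/\|A\|^2)\bigr)$ in which the leading coefficient $c_1$ is nonzero and the relative correction coefficient $c$ is independent of $p$; dividing the two expansions and expanding the denominator geometrically then yields $k(p_1,p_2)=\tfrac{p_1}{p_2}(1+c\,p_1^2/\|A\|+\cdots)(1+c\,p_2^2/\|A\|+\cdots)^{-1}=\tfrac{p_1}{p_2}+O\bigl((p_2^2-p_1^2)/\|A\|\bigr)$, which is \eqref{Dissipation Approximation}. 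Thus everything comes down to (i) showing the leading behaviour of $\Delta(p)$ is linear in $p$, and (ii) controlling the first correction at relative order $p^2/\|A\|$.

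For the expansion I would work from the wave‑operator solution. Writing $D(t=0|p)=D^{Pro}(0)+D^{Dis}(0)$ via \eqref{Data Function} and evaluating \eqref{Even Propagation Solution}--\eqref{Even Dissipative Solution} at $t=0$, where each cosine collapses to $1$, the propagation contribution is $\tfrac12\bigl[\hat f(\sqrt{A'}+j\tfrac p2)+\hat f(\sqrt{A'}-j\tfrac p2)\bigr]$ with $A'=A-p^2/4$, which is manifestly \emph{even} in $p$; for a Gaussian source $\hat f(\omega)\propto e^{-\tau^2\omega^2/2}$ it factorizes into a real amplitude part $e^{\tau^2 p^2/4}$ and a phase part $\cos(\tau^2 p\sqrt{A'}/2)$, so using $\sqrt{A'}=\sqrt A-p^2/(8\sqrt A)+O(p^4/\|A\|^{3/2})$ one sees that this piece contributes to $\Delta(p)$ only at order $p^2$ and carries no linear term. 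Consequently the linear‑in‑$p$ part of $\Delta(p)$ must originate entirely from the non‑closed‑form dissipation term \eqref{Even Dissipative Solution}: its kernel $p\,[(A-\omega^2)^2+\omega^2p^2]^{-1}$ behaves, as $p\to0$, like a Lorentzian of width $O(p)$ concentrating on $\omega=\pm\sqrt A$, so one should split off its $p\to0$ limit and extract a finite part that is linear in $p$ with a coefficient $c_1$ that is generically nonzero — the rapid decay of the Gaussian $\hat f$ being exactly what makes this splitting and the $\omega$‑integral well posed. Assembling the two contributions gives $\Delta(p)=c_1p+c_2p^2+O(p^3)$, and the point of the Gaussian hypothesis together with $p\ll\|A\|$ is to show that $c_2/c_1$ (and the higher ratios $c_{k+1}/c_1$) are $O(\|A\|^{-1})$, which is what recasts $\Delta(p)$ in the form required in the first paragraph.

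The step I expect to be the real obstacle is the precise asymptotics of the dissipation term: one must promote the ``concentrating Lorentzian'' picture into a genuine two‑term expansion of $\tfrac{p}{2\pi}\int_{-\infty}^{\infty}\frac{(A+\omega^2)\hat f(\omega)}{(A-\omega^2)^2+\omega^2p^2}\,d\omega$ in $p$, uniform in the spectral variable of $A$, and then verify that when this is combined with the $p^2$ part of the propagation term — including the $-p^2/(8\sqrt A)$ shift hidden in $\sqrt{A'}$ — the net $p^2$ coefficient of $\Delta(p)$ survives only at relative size $p^2/\|A\|$ rather than at $O(p^2)$. Two further items need care but are essentially bookkeeping once that estimate is available: keeping the whole computation at the operator (functional‑calculus) level so the scalar asymptotics persist after pairing with $\delta_{\boldsymbol{x_r}}$ and $\delta_{\boldsymbol{x_s}}$, and making all remainders uniform over the range $0\le p_1\le p_2\ll\|A\|$ so that forming and geometrically expanding the ratio $\Delta(p_1)/\Delta(p_2)$ is justified.
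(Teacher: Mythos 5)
The first half of your plan coincides with the paper's own reduction (Lemma 1, Appendix B): at $t=0$ the propagation contribution is even in $p$, so $\Delta^{Pro}(0,p)=O(p^2)$, and the linear behaviour of $\Delta(p)=D(t=0|p)-D(t=0|0)$ must come from the dissipation term \eqref{Even Dissipative Solution}; the problem then reduces to comparing $p_1 I(p_1)$ with $p_2 I(p_2)$, where $I(p)=\int K(\omega,p)\hat f(\omega)\,d\omega$ and $K$ is as in \eqref{Definition of K}. Up to this point you and the paper agree in substance.

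The gap is in the second half, which is the mathematical heart of the proposition. You reduce everything to the assertion $\Delta(p)=c_1 p\,\bigl(1+c\,p^2/\|A\|+O(p^4/\|A\|^2)\bigr)$ with $c_1\neq 0$ and a $p$-independent relative coefficient $c$, and you explicitly defer its proof (``the real obstacle''); nothing in the proposal establishes it, so the argument is incomplete precisely where the work lies. Moreover, that assertion is stronger than what is needed and is exactly what the kernel's structure obstructs: near the resonance $\omega^2\approx A$ the kernel $p(A+\omega^2)/[(A-\omega^2)^2+\omega^2p^2]$ is a Lorentzian of width $O(p)$, so $I(p)$ does not admit a termwise Taylor expansion in $p$ that is uniform over the spectrum of $A$, and the ``finite part linear in $p$'' you intend to split off is the delicate object, not a routine step. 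The paper never expands in $p$ at all. It differences the kernels, so that $K(\omega,p_1)-K(\omega,p_2)$ carries the \emph{exact} algebraic factor $(p_2^2-p_1^2)$; it bounds the resulting integral by splitting the frequency axis at $N=2\sqrt{A}$, obtaining \eqref{Upper Bound for Delta I}; and it pairs this with the lower bound \eqref{Lower Bound for I} for $I(p)$ — the only place the Gaussian hypothesis is actually used, through monotone decay of $\hat f$ on $[0,\infty)$, rather than through the explicit factorization you invoke. The ratio \eqref{Rational Fraction} is then $1+O\bigl((p_2^2-p_1^2)/\|A\|\bigr)$ without any individual coefficients $c_1,c$ having to exist or be $p$-independent; the $(p_2^2-p_1^2)$ form of the error comes from this exact factorization of the kernel difference, not from cancellation of a common quadratic coefficient between two separate expansions in $p$. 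Your route would additionally have to prove that such a common coefficient exists, a claim your own ``concentrating Lorentzian'' picture puts in doubt; as written, the proposal is an outline whose crucial estimate is missing.
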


The proof is detailed in Appendix \ref{Appendix Dissipation Approximation}.
This proposition suggests that the dissipation parameter of an unknown parameter can be estimated by comparing its measured data at $t=0$ to that of a reference medium.
We set the reference dissipation parameter to $p_2=1$, and the estimated compensation parameter $p_{app}$ is defined as:
\begin{equation}
  \begin{aligned}
      p_{app} =& k(p_{true}, 1) \\
      =& \frac{D(t=0|p=p_{true}) - D(t=0|p=0)}{D(t=0|p=1) - D(t=0|p=0)}, \\
  \end{aligned}
  \label{Loss Estimation}
\end{equation}
where $D(t=0|p=p_{true})$ is the measured data at $t=0$, while $D(t=0|p=0)$ and $D(t=0|p=1)$ are computed by simulation using a reference medium.
As validated by numerical experiments in Section \ref{Numerical Experiments}, this provides an accurate estimate for $p_{app}$ and enables effective compensation via \eqref{Data Compensation}.


\section{Numerical Experiments}
\label{Numerical Experiments}

In this section, we present three 2-D numerical experiments with various dissipation distributions to validate our proposed attenuation compensation strategy.
The forward modeling is simulated by the finite-difference time-domain (FDTD) method.

\begin{figure}[!htb]
  \centering
  \includegraphics[width=0.28\textwidth]{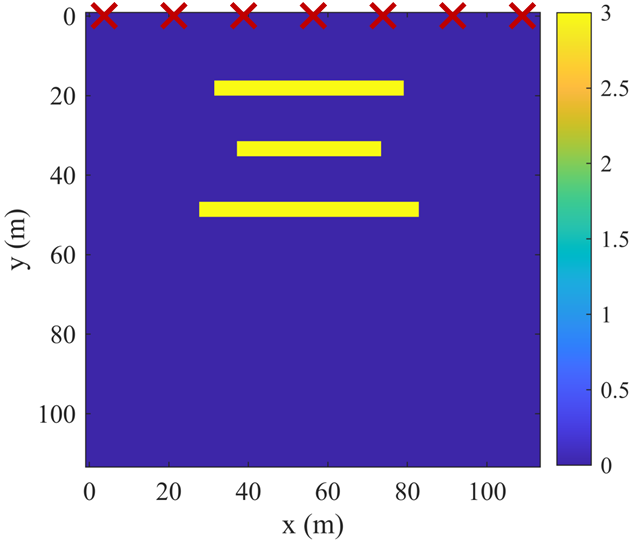}
  \caption{Contrast distribution of the three-stripe target example.
  Yellow stripes represent the target on a homogeneous background, with sensors marked by red crosses.
  The colorbar indicates the contrast value.} 
  \label{Model}
\end{figure}

\begin{figure}[!htb]
  \centering
  \includegraphics[width=0.28\textwidth]{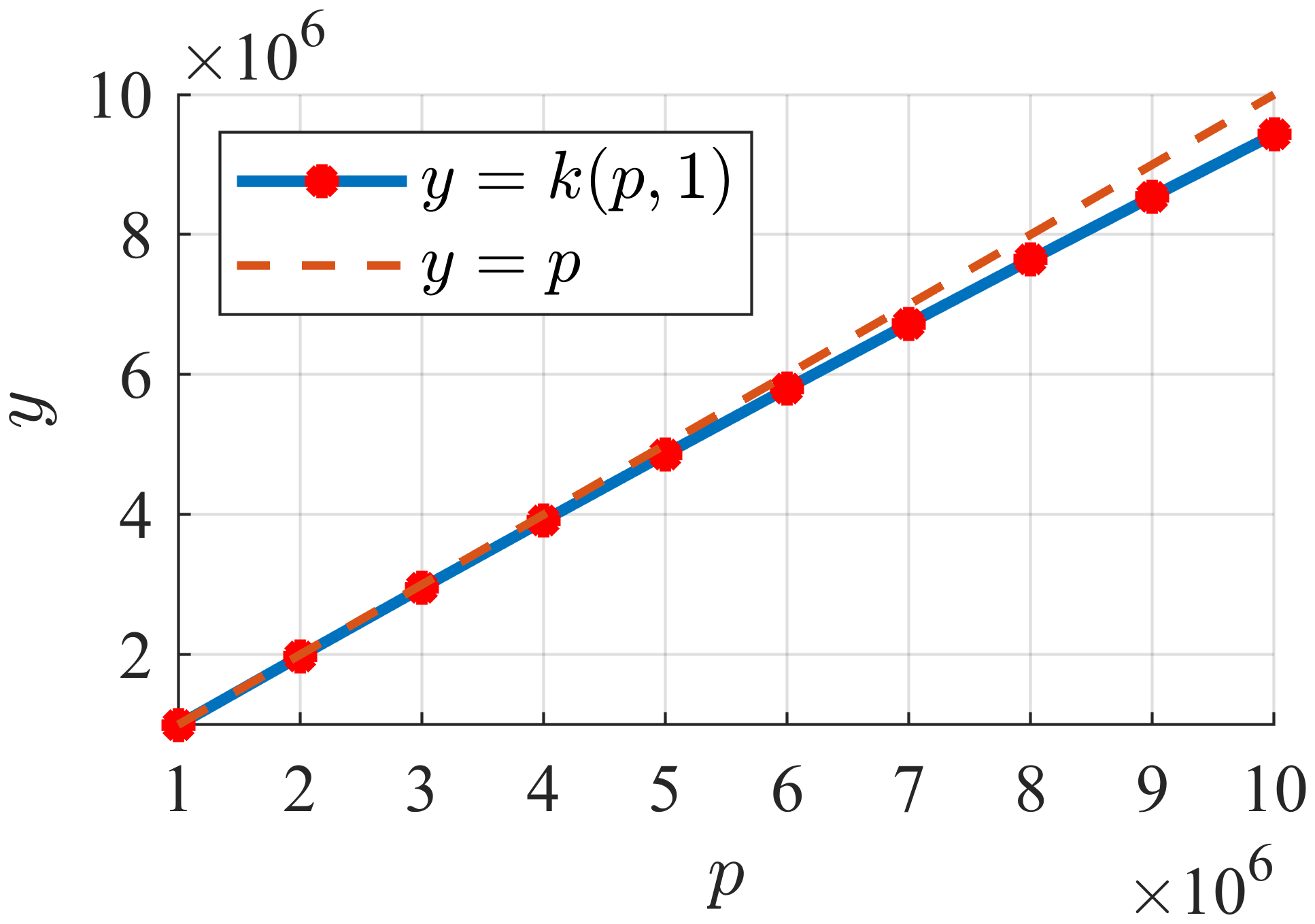}
  \caption{Comparison between $y=k(p, 1)$ and $y=p$.} 
  \label{Trend}
\end{figure}

\subsection{Case 1: Uniform Dissipation Distribution}

The first example investigates a scenario with a uniform loss distribution. 
As depicted in Fig.~\ref{Model}, the model consists of three equidistant horizontal stripes with a relative permittivity of 3, embedded in a background medium characterized by a uniform dissipation parameter of $p(\boldsymbol{x})=3\times10^6$ \si{\per\second}. 
The Domain of Interest (DoI) spans $3\lambda\times3\lambda$, where $\lambda$ is the wavelength corresponding to the central frequency of \SI{8}{M\hertz}, and is discretized into a $60\times60$ grid.
The top boundary is subject to the Neumann boundary condition, while the other boundaries are set as absorbing boundary conditions. 
A linear array of twenty sensors, serving as both transmitters and receivers, is uniformly positioned along the upper boundary of the DoI. 
These sensors emit modulated Gaussian pulses with a central frequency of \SI{8}{M\hertz}, and the scattered data are collected in a multiple-input, multiple-output (MIMO) configuration. 
The time-domain signals are recorded with a sampling interval of \SI{4.7}{n\second} over a total observation window of \SI{0.625}{m\second}.

The validation of our compensation approach begins with a numerical verification of the dissipation fraction \eqref{Dissipation Approximation}. 
By keeping the permittivity fixed and varying the dissipation parameter $p$, the linear relationship $k(p, 1)\approx p$ is confirmed, as shown in Fig.~\ref{Trend}. 
The deviation from linearity at large $p$ values is attributed to the increased attenuation effects, which aligns with theoretical expectations. 
Based on this relationship, the compensation parameter for the model in Fig.~\ref{Model} is estimated using (\ref{Loss Estimation}), yielding $p_{app}=2.94\times10^6$ \si{\per\second}.

\begin{figure}[!htb]
	\centering
	\begin{subfigure}[b]{43mm}
		\includegraphics[width=\textwidth]{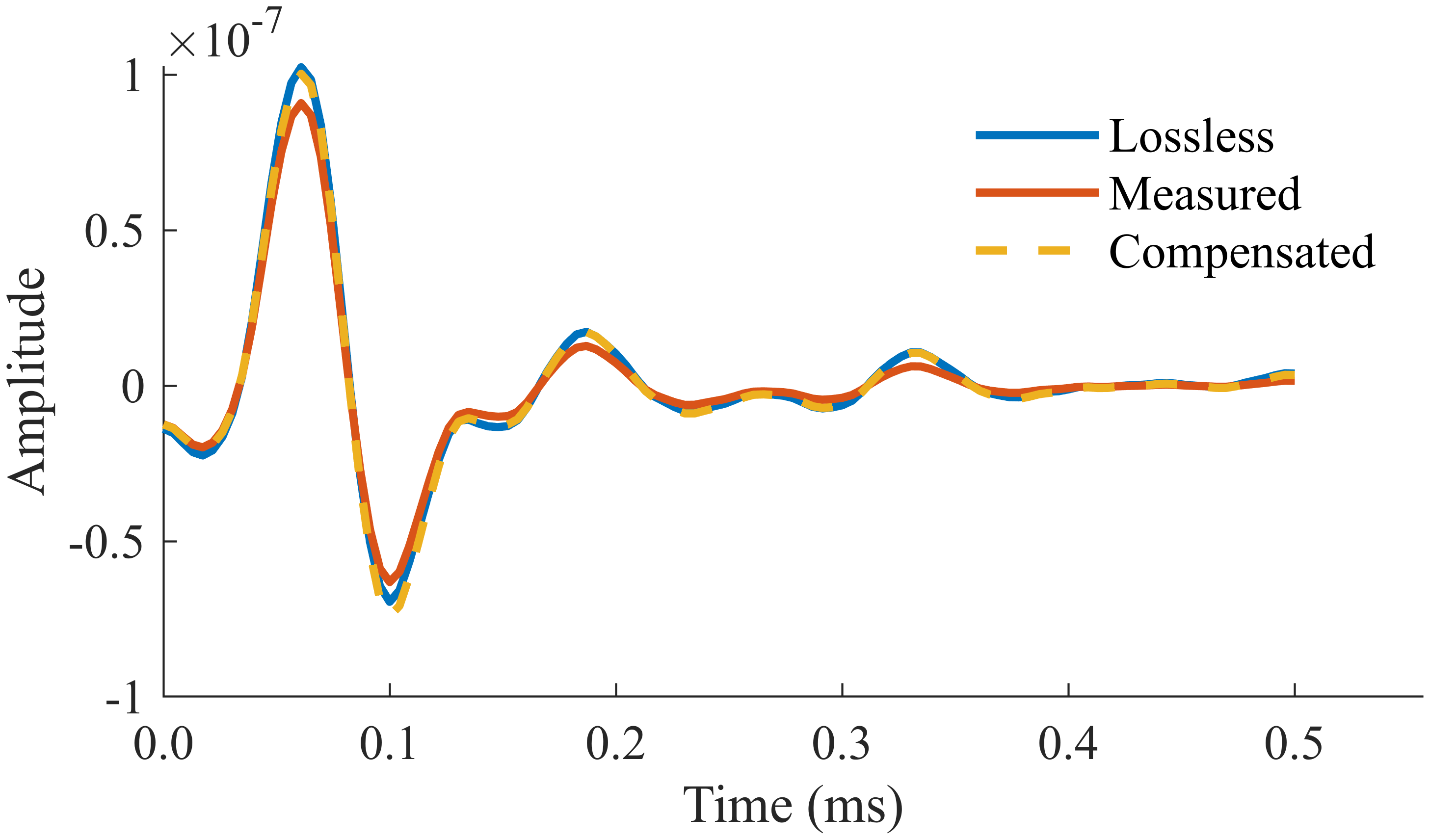}
		\caption{} 
	\end{subfigure}
    \hfill
    \begin{subfigure}[b]{43mm}
		\includegraphics[width=\textwidth]{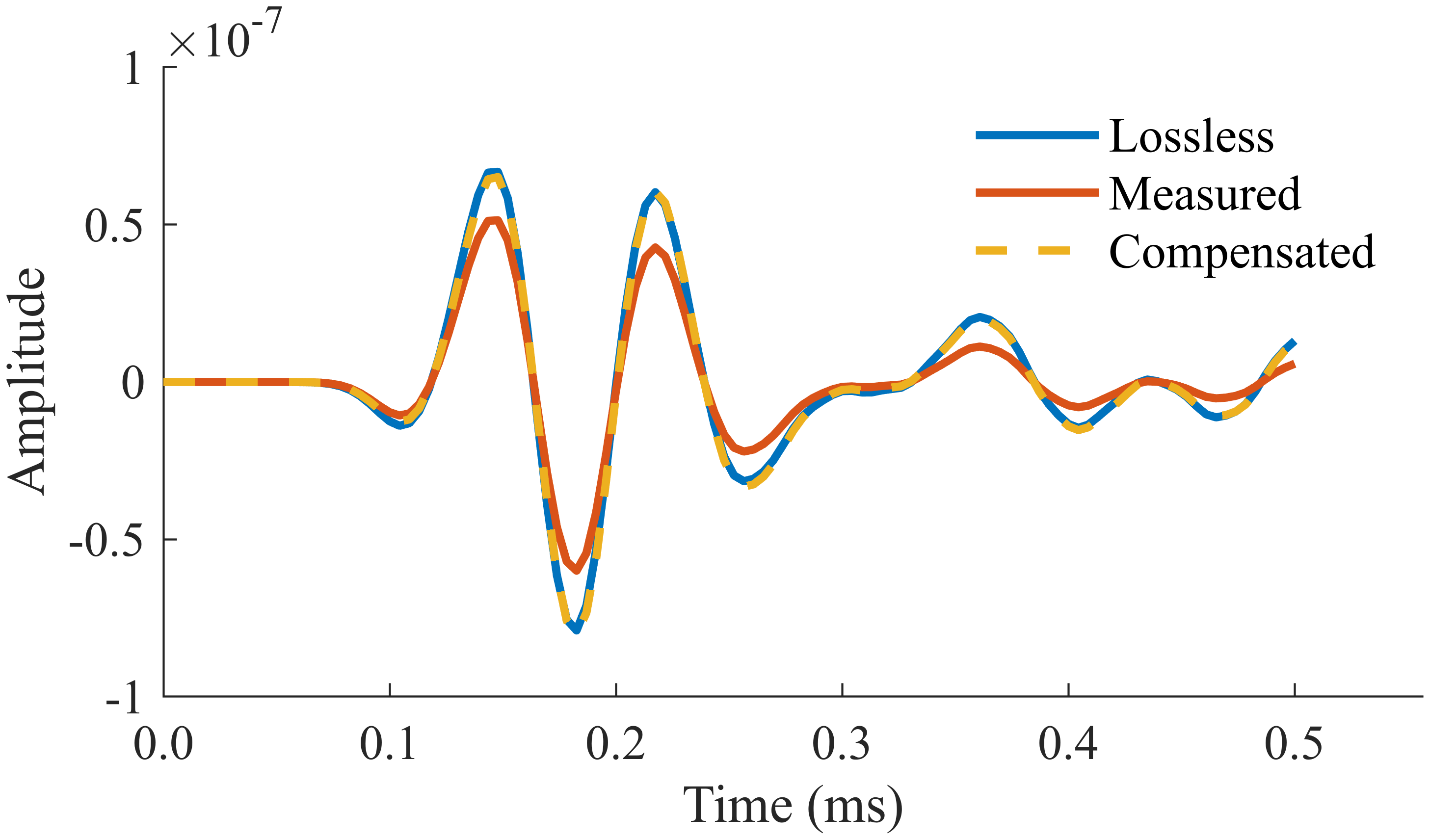}
		\caption{} 
	\end{subfigure}
  \centering
	\begin{subfigure}[b]{43mm}
		\includegraphics[width=\textwidth]{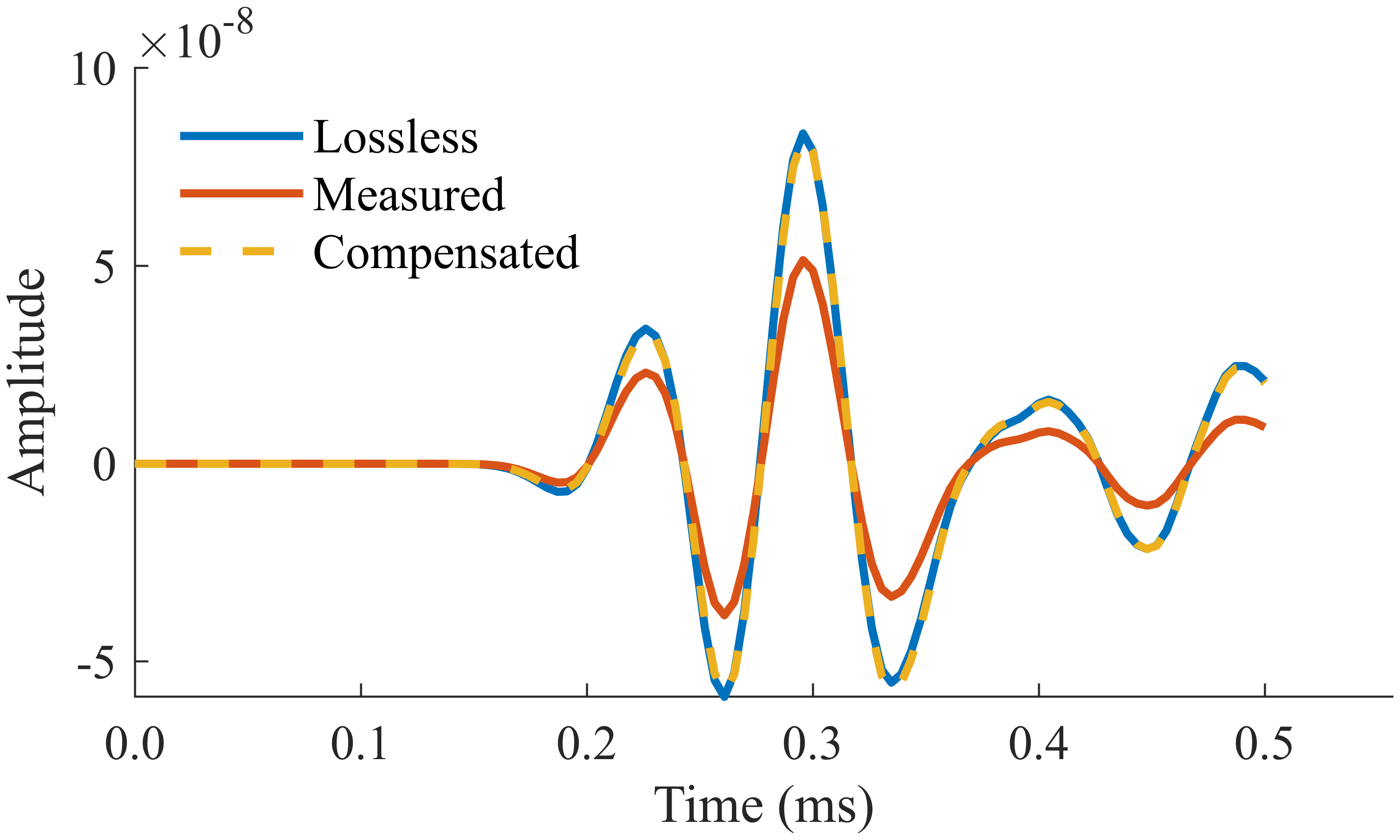}
		\caption{} 
	\end{subfigure}
    \hfill
    \begin{subfigure}[b]{43mm}
		\includegraphics[width=\textwidth]{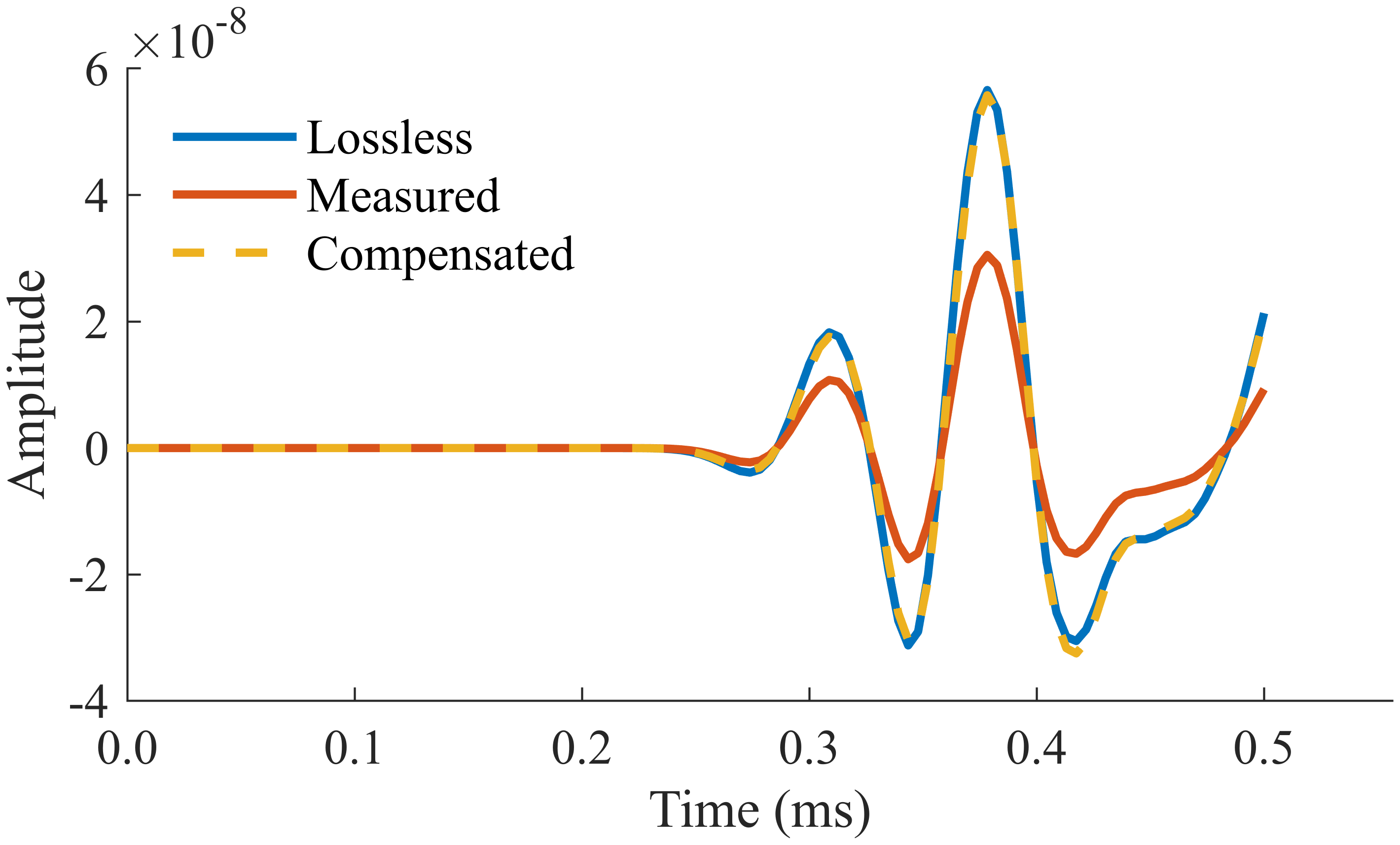}
		\caption{} 
	\end{subfigure}
	\caption{Comparison of lossless data, measured data and compensated data with uniform dissipation distribution. 
  The transmitter is at Sensor 1 and receivers are at (a) Sensor 5, (b) Sensor 10, (c) Sensor 15, (d) Sensor 20.}   
  \label{Data Comparison for Constant Distribution}
\end{figure}

\begin{figure}[!htb]
  \centering
	\begin{subfigure}[b]{43mm}
		\includegraphics[width=\textwidth]{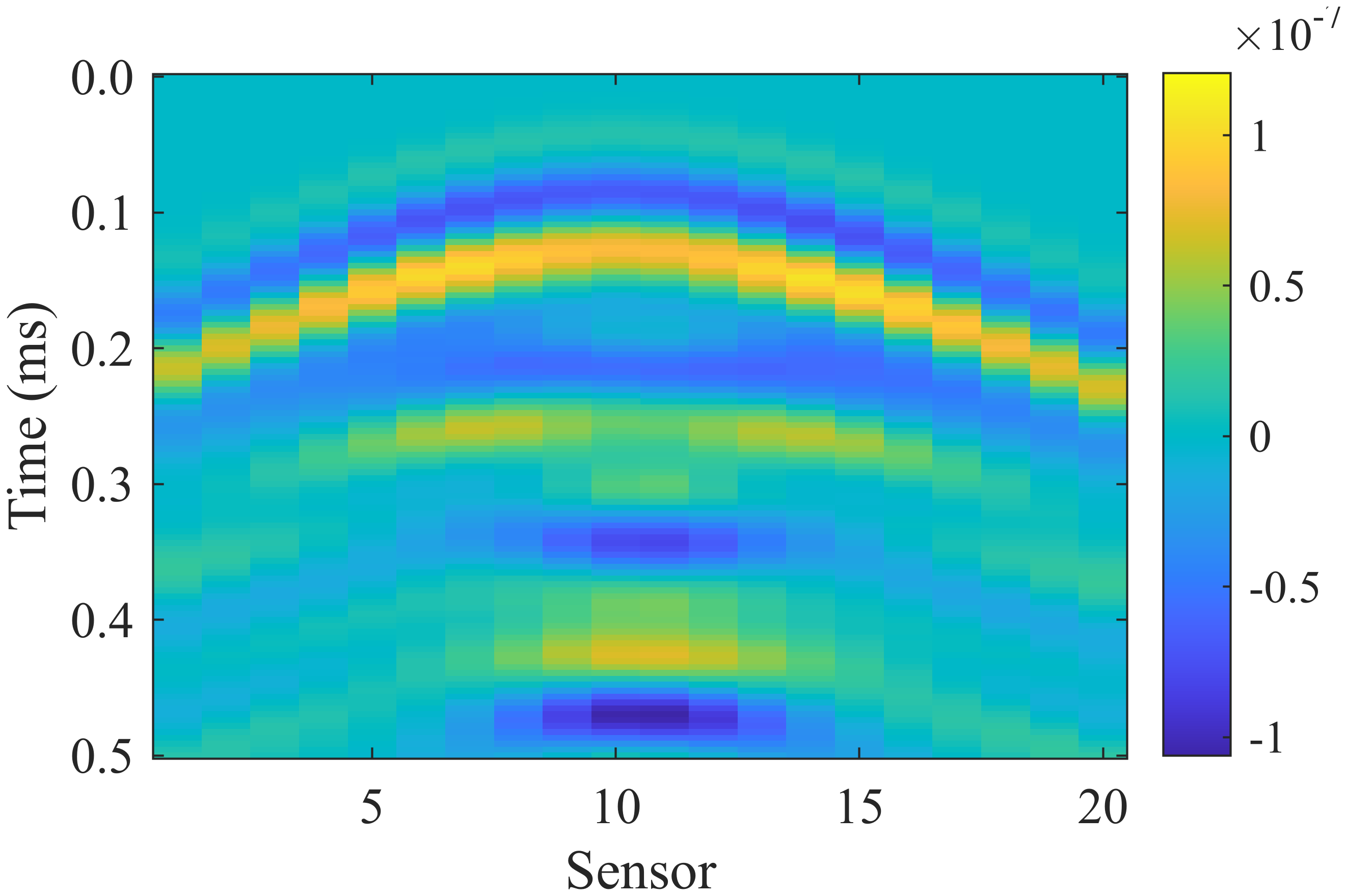}
		\caption{} 
    \label{CSG for Constant Distribution: Lossless} 
	\end{subfigure}
  \hfill
  \begin{subfigure}[b]{43mm}
		\includegraphics[width=\textwidth]{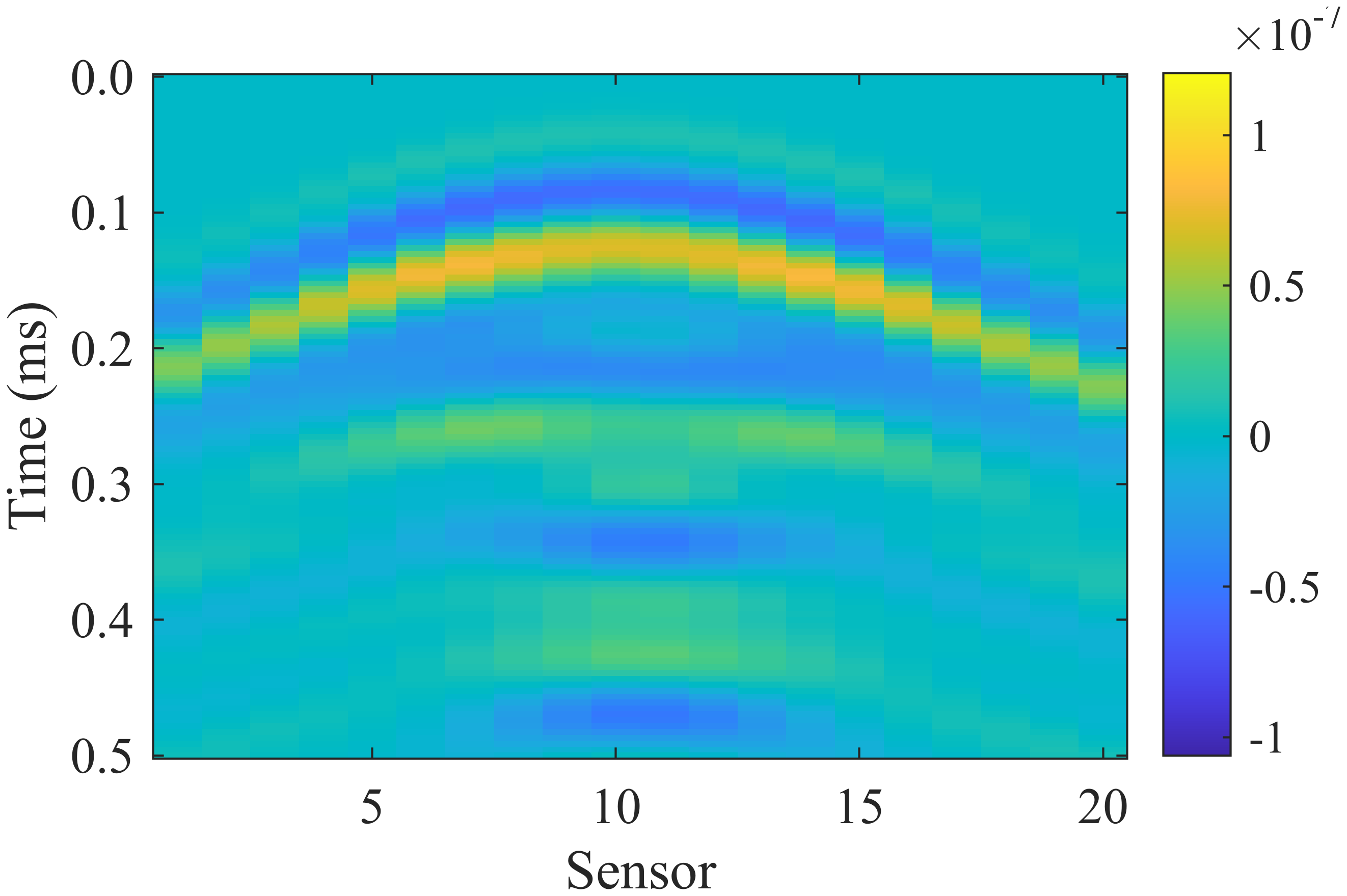}
    \caption{} \label{CSG for Constant Distribution: Lossy}
	\end{subfigure}
  \centering
	\begin{subfigure}[b]{43mm}
		\includegraphics[width=\textwidth]{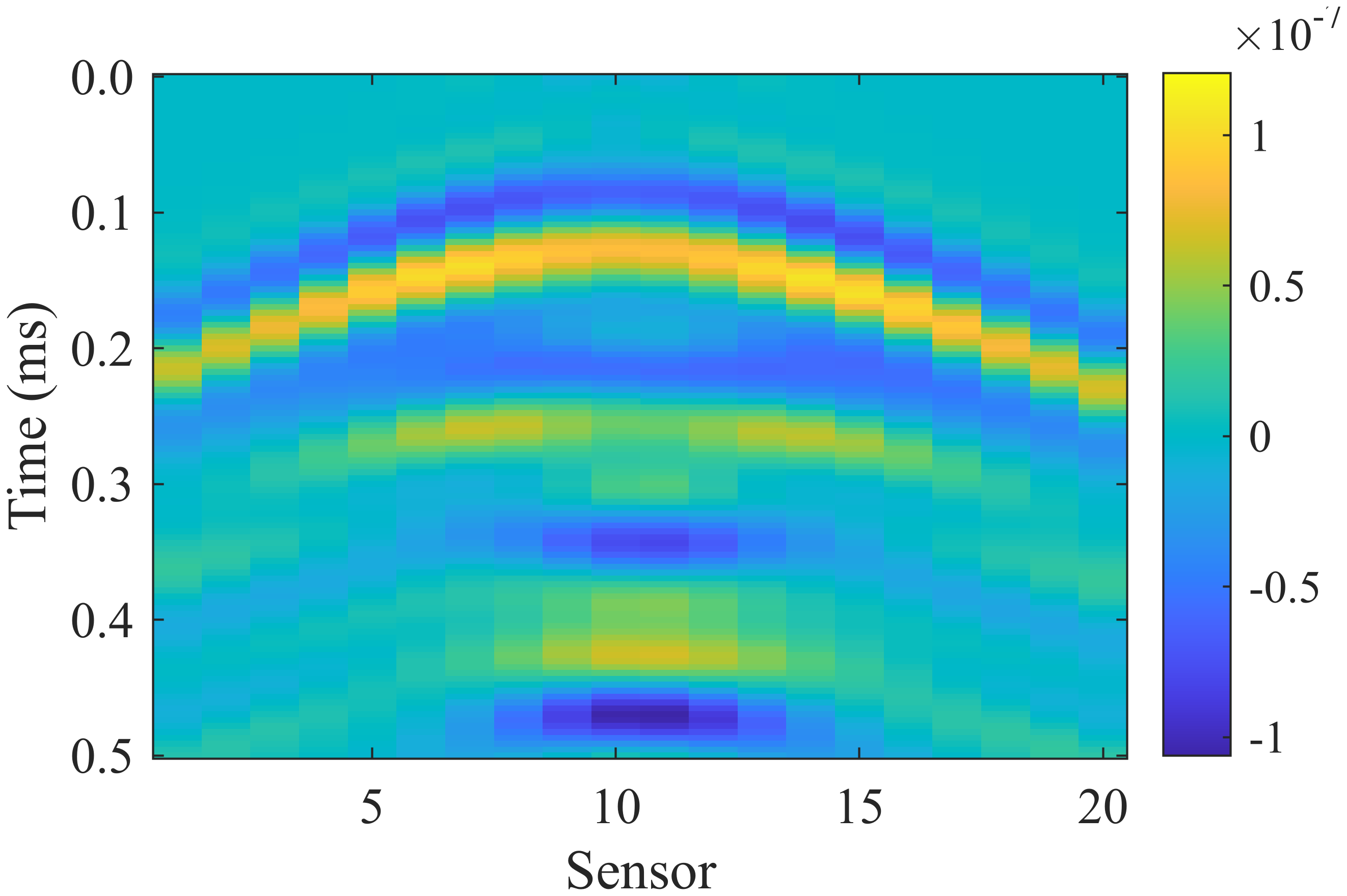}
		\caption{} \label{CSG for Constant Distribution: Comp} 
	\end{subfigure}
	\caption{Common shot gathers at Sensor 10 with uniform dissipation distribution. 
  The colorbar indicates the magnitude of the collected data. 
  (a) Synthetic data in lossless environment. (b) Measured data in lossy environment. (c) Compensated data.} 
  \label{CSG for Constant Distribution}
\end{figure}

The effectiveness of this compensation method is demonstrated in Fig.~\ref{CSG for Constant Distribution}, which compares the lossless, measured, and compensated bistatic data traces for a fixed transmitter at Sensor 1 and receivers at Sensors 5, 10, 15, and 20. 
The compensation method accurately restores the attenuated waveforms to the level of their lossless counterparts, yielding low relative data misfits in the $L2$-norm of only 4.86\%, 3.69\%, 3.40\%, and 4.38\% for these four traces, respectively. 
This is consistent across all collected data, leading to a low overall relative data misfit of only 3.66\% between the entire sets of compensated and ideal lossless data.

\begin{figure}[!htb]
  \centering
  \begin{subfigure}[b]{43mm}
		\includegraphics[width=\textwidth]{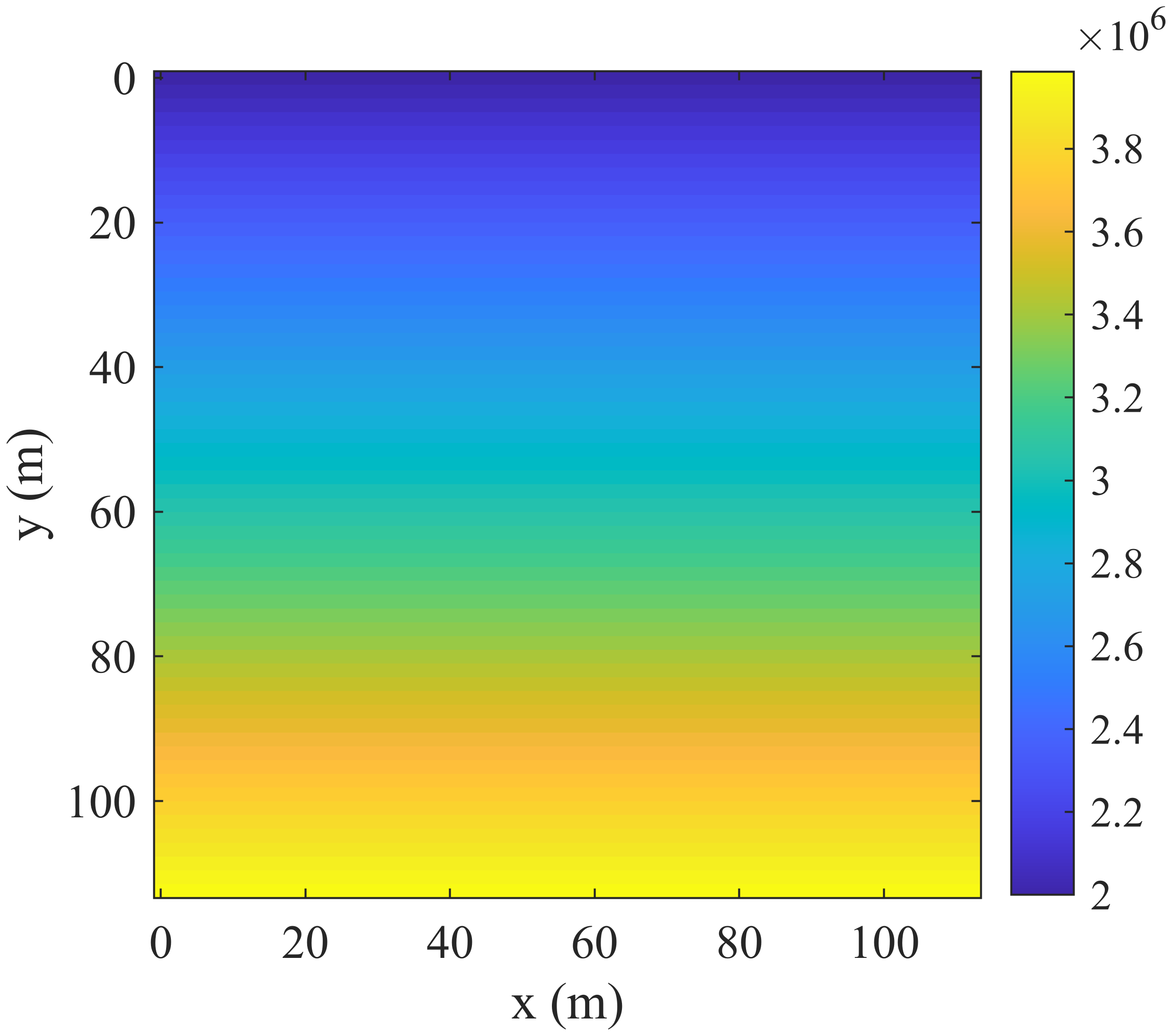}
		\caption{} 
    \label{R for Linear Distribution}
	\end{subfigure}
  \hfil
  \begin{subfigure}[b]{43mm}
    \includegraphics[width=\textwidth]{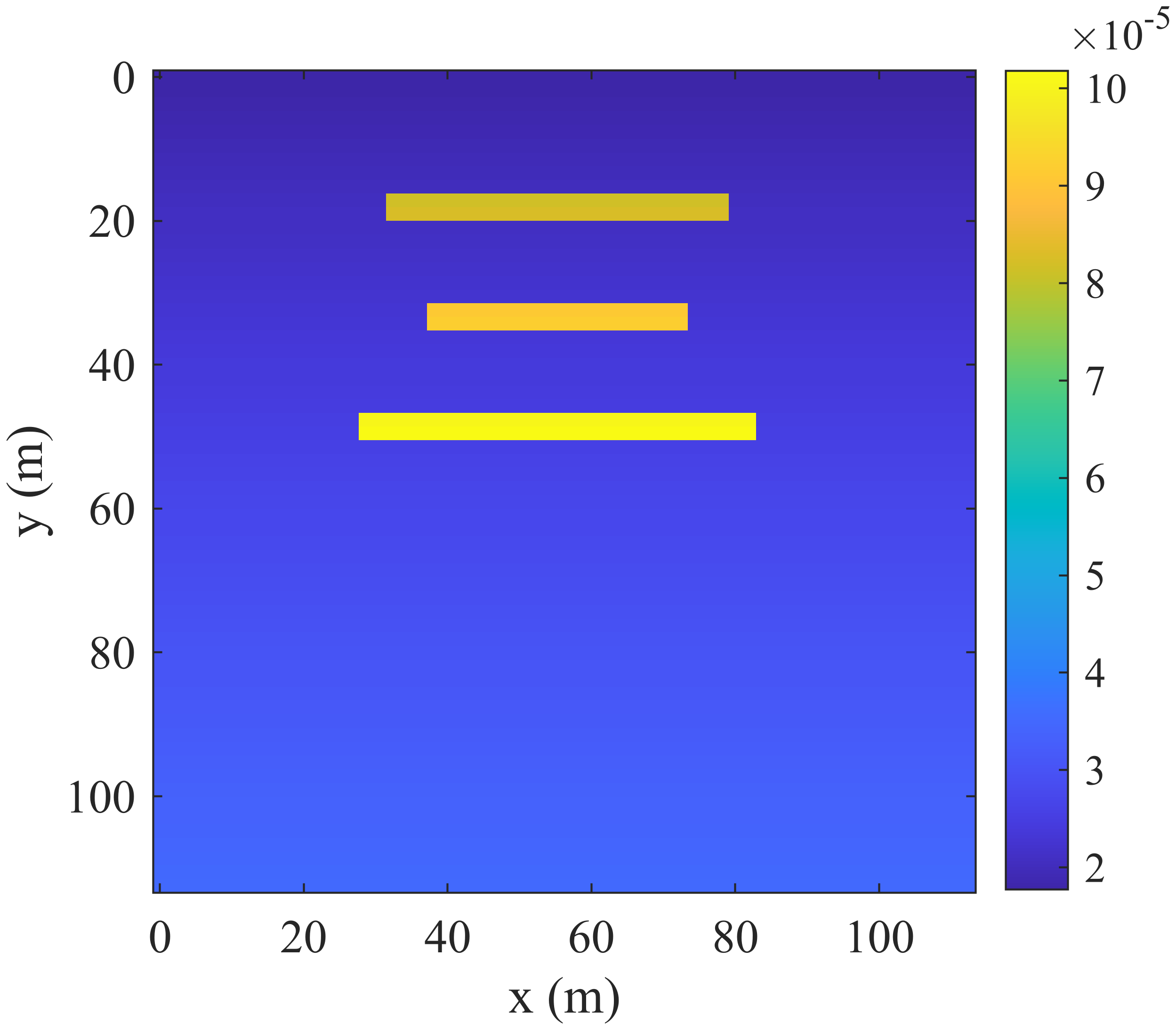}
    \caption{} 
    \label{Sigma for Linear Distribution}
  \end{subfigure}
  \centering
	\begin{subfigure}[b]{43mm}
		\includegraphics[width=\textwidth]{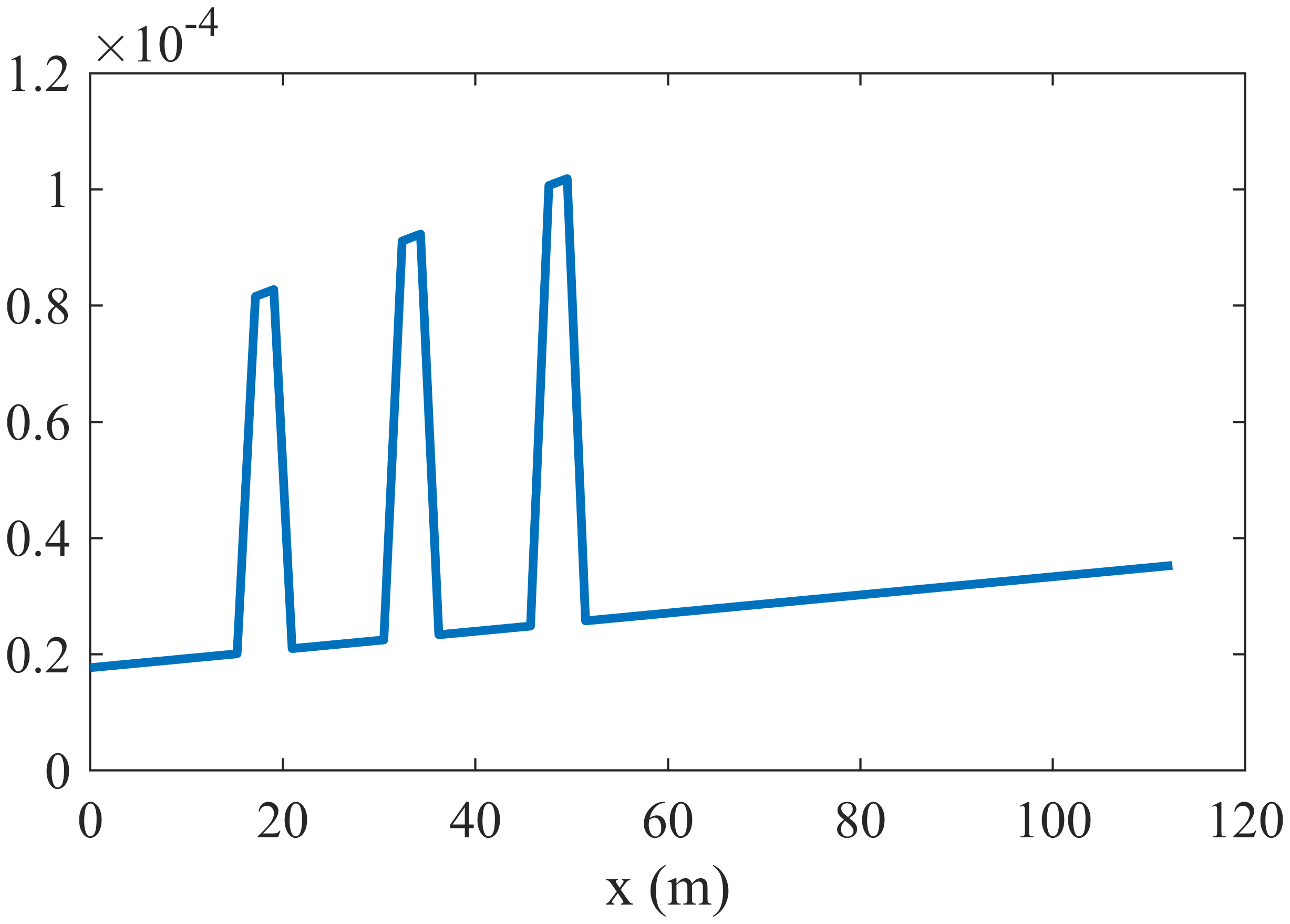}
		\caption{} 
    \label{Sigma Line for Linear Distribution}
	\end{subfigure}
  \caption{(a) Distribution of dissipation parameter $p(\boldsymbol{x})$. (b) Distribution of conductivity $\sigma(\boldsymbol{x})$. (c) Conductivity profile along the vertical line at $\boldsymbol{x}=60$ m.}
\end{figure}
Another assessment of the compensation performance is conducted by analyzing the Common Shot Gathers (CSGs) \cite{mahdad2011separation} corresponding to the transmitter at Sensor 10 shown in Fig.~\ref{CSG for Constant Distribution}. 
The ideal lossless CSG provides a clear depiction of the reflection phenomena, with the hyperbolic reflection curves fully capturing the physical information of wave propagation in this region. 
However, in the measured CSG depicted in Fig.~\ref{CSG for Constant Distribution: Lossy}, these crucial features are obscured by the rapid decay of reflected energy due to medium loss. 
This results in a loss of physical information about the wave propagation. 
After applying the compensation strategy, the CSG, depicted in Fig.~\ref{CSG for Constant Distribution: Comp}, successfully restores the data amplitudes across the entire receiver array. 
The relative error between the compensated and lossless CSG is only 5.74\%, which confirms the robustness and effectiveness of the method across the entire measurement aperture.

\subsection{Case 2: Linear Inhomogeneous Dissipation}

Next, we consider a more challenging scenario in which the dissipation distribution is spatially inhomogeneous. 
The permittivity distribution is the same as in Fig.~\ref{Model}, while the dissipation parameter $p(\boldsymbol{x})$ now varies linearly along the vertical axis. 
As depicted in Fig.~\ref{R for Linear Distribution}, its value increases from $2\times10^6$ \si{\per\second} at the top to $4\times10^6$ \si{\per\second} at the bottom boundary. 
This setup, where the dissipation parameter $p(\boldsymbol{x})$ is the sum of a uniform dissipation $p_0=3\times10^6$ \si{\per\second} and a linearly varying term $\Delta p(\boldsymbol{\boldsymbol{x}})$ as described in Proposition 1, results in the conductivity distribution, $\sigma(\boldsymbol{x})=p(\boldsymbol{x})\cdot\epsilon(\boldsymbol{x})$ shown in Fig.~\ref{Sigma for Linear Distribution}.
The profile of $\sigma(\boldsymbol{x})$ along the vertical line at $\boldsymbol{x}=60$ m is shown in Fig.~\ref{Sigma Line for Linear Distribution}, which illustrates the stepwise increments at the locations of the three stripes.
This experiment is thereby designed to validate the robustness of our compensation strategy against the approximation errors when the assumption of uniform dissipation is violated.

\begin{figure}[!htb]
	\centering
	\begin{subfigure}[b]{43mm}
		\includegraphics[width=\textwidth]{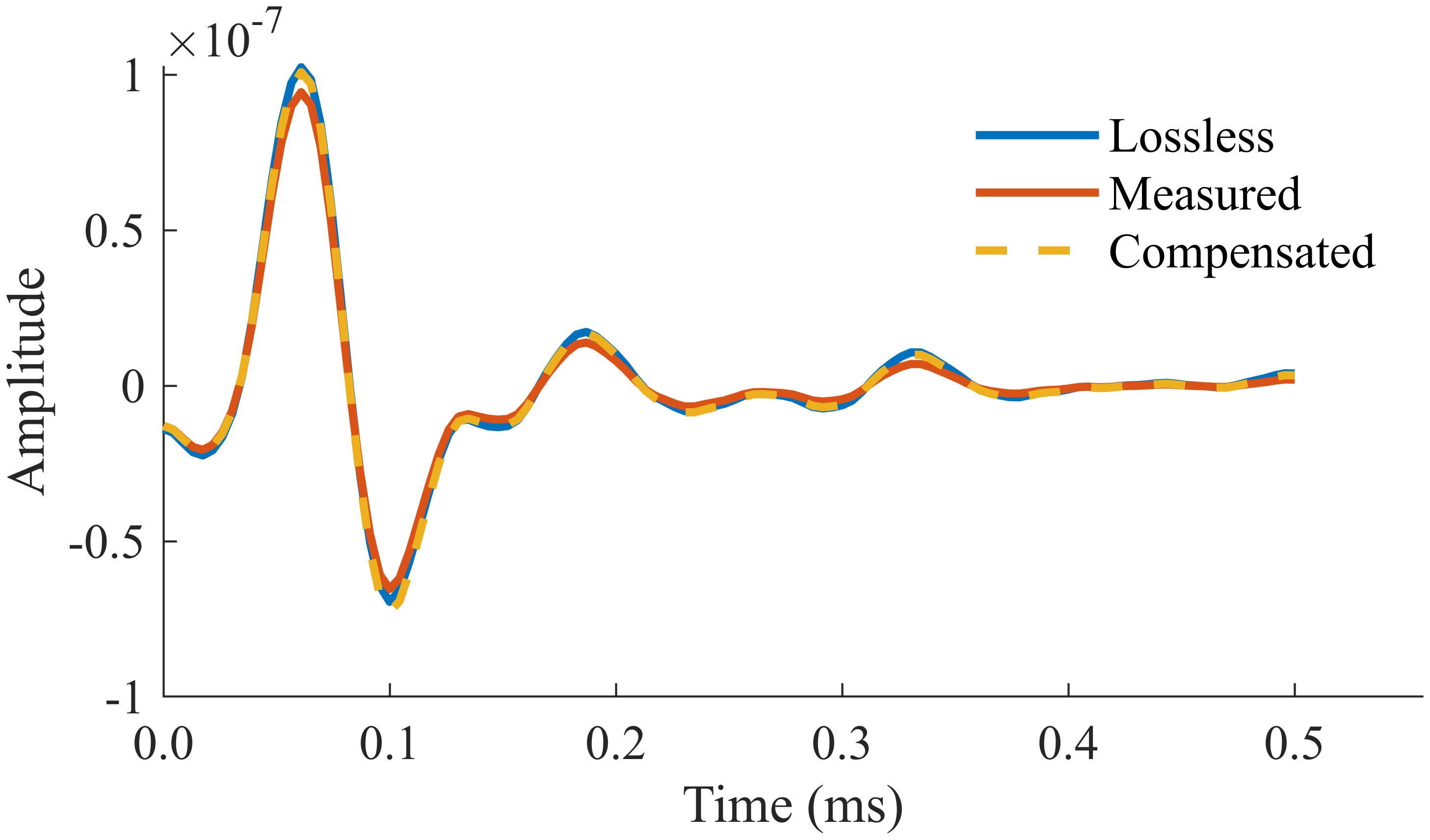}
		\caption{} 
	\end{subfigure}
    \hfill
    \begin{subfigure}[b]{43mm}
		\includegraphics[width=\textwidth]{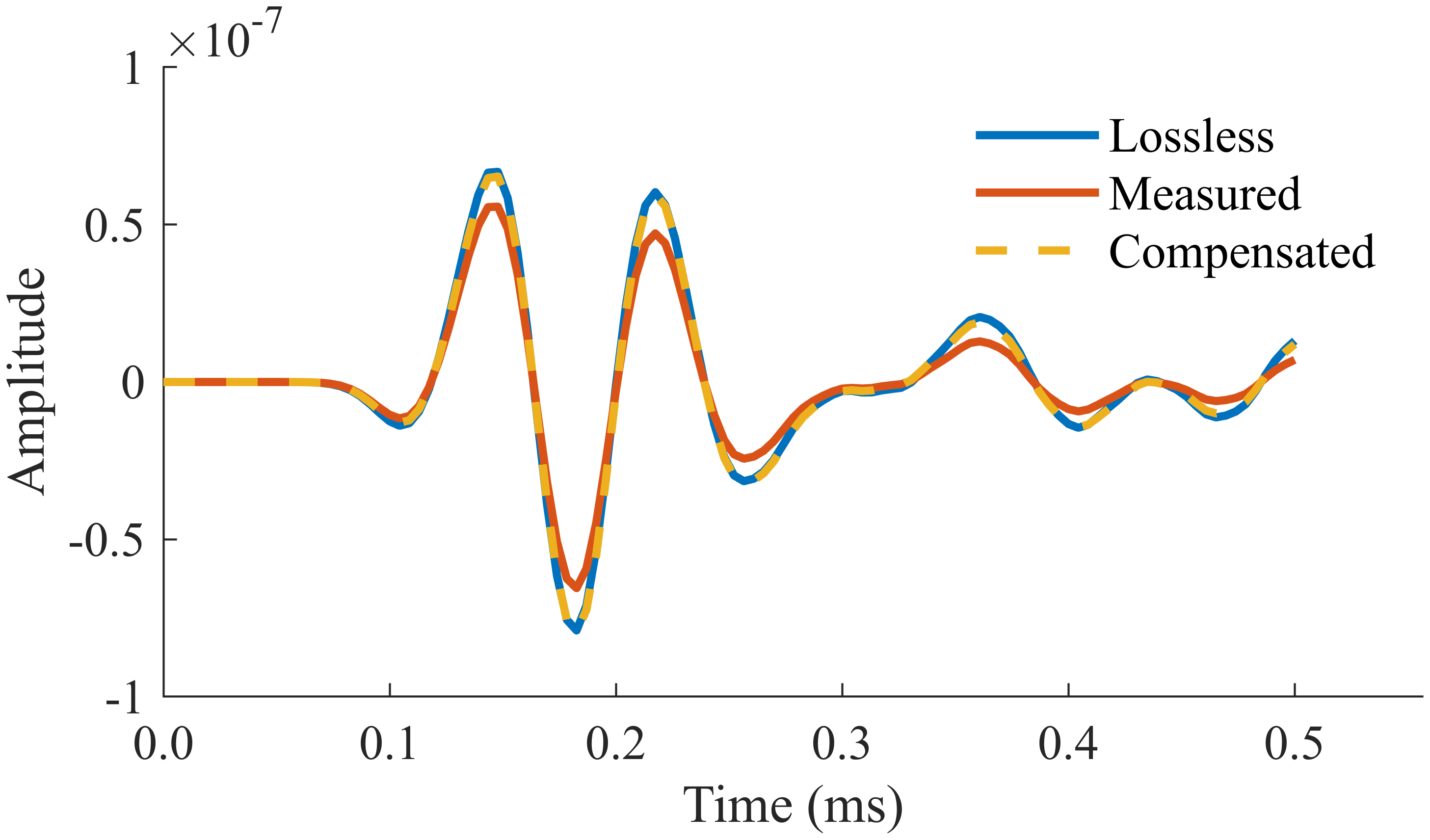}
		\caption{} 
	\end{subfigure}
  \centering
	\begin{subfigure}[b]{43mm}
		\includegraphics[width=\textwidth]{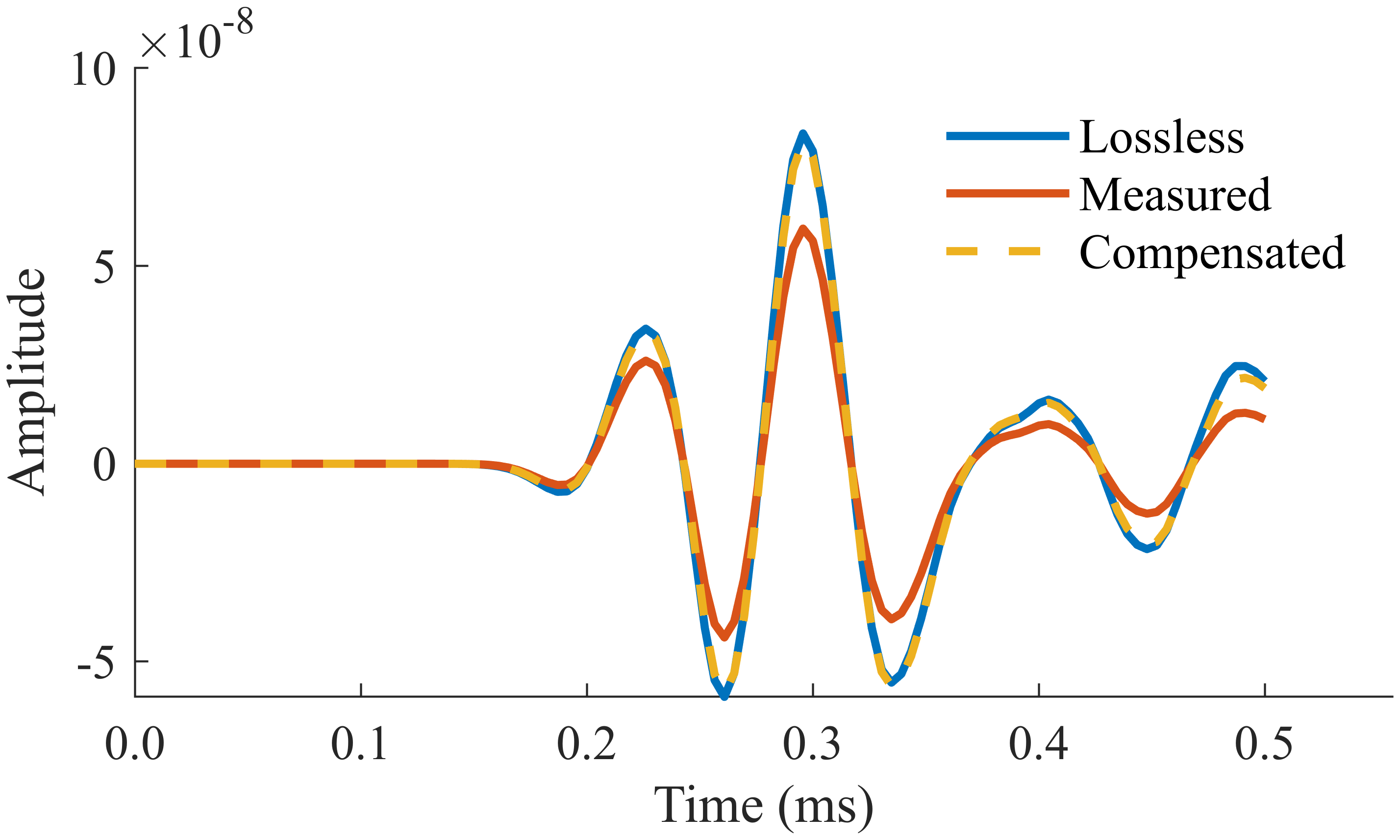}
		\caption{} 
	\end{subfigure}
    \hfill
    \begin{subfigure}[b]{43mm}
		\includegraphics[width=\textwidth]{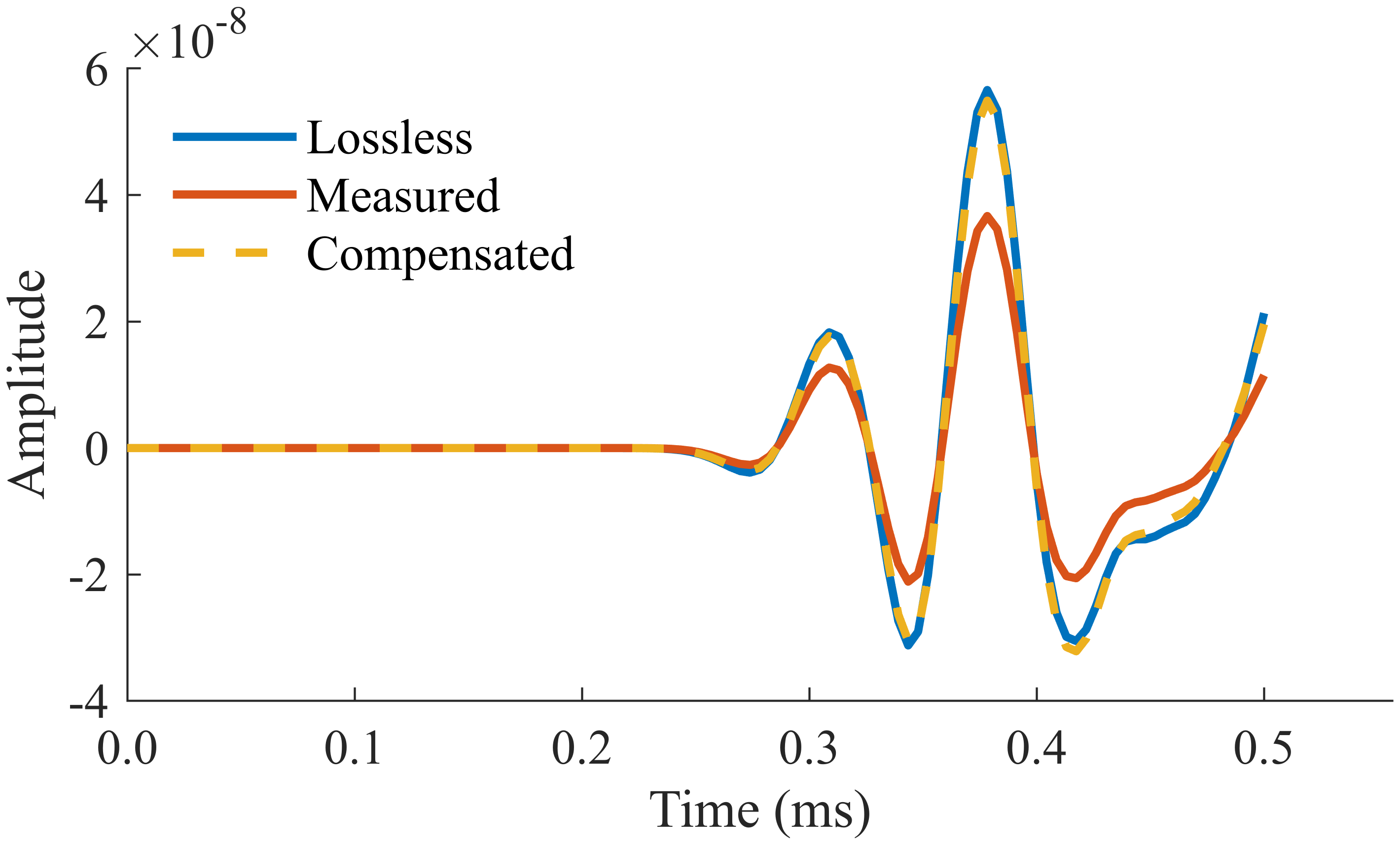}
		\caption{} 
	\end{subfigure}
	\caption{Comparison of lossless data, measured data and compensated data with linear inhomogeneous distribution. 
  The transmitter is at Sensor 1 and receivers are at (a) Sensor 5, (b) Sensor 10, (c) Sensor 15, (d) Sensor 20.} 
  \label{Data Comparison for Linear Distribution}
\end{figure}

\begin{figure}[!htb]
	\centering
	\begin{subfigure}[b]{43mm}
		\includegraphics[width=\textwidth]{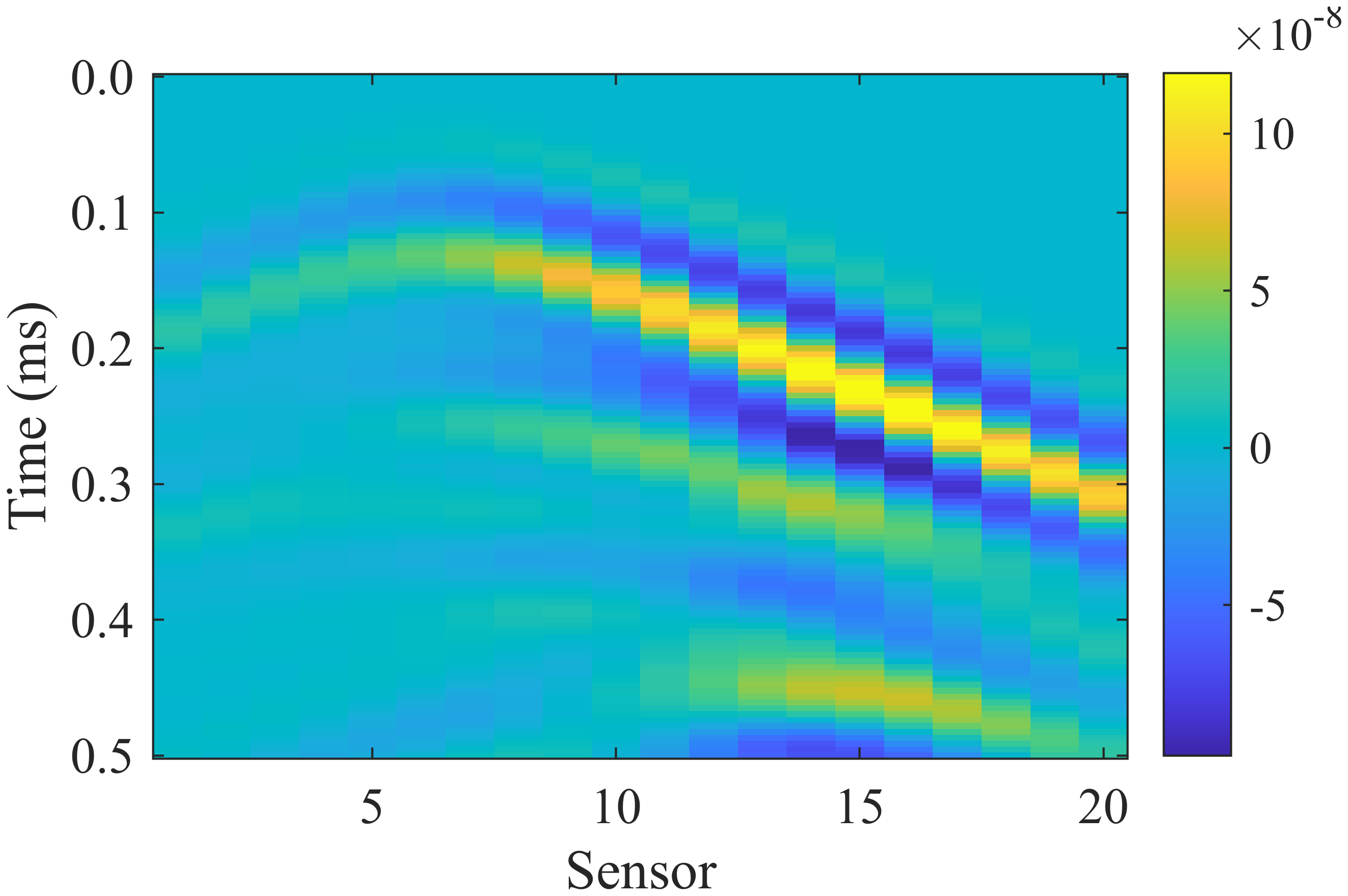}
		\caption{} \label{CSG for Linear Distribution: Lossless}
	\end{subfigure}
    \hfill
    \begin{subfigure}[b]{43mm}
		\includegraphics[width=\textwidth]{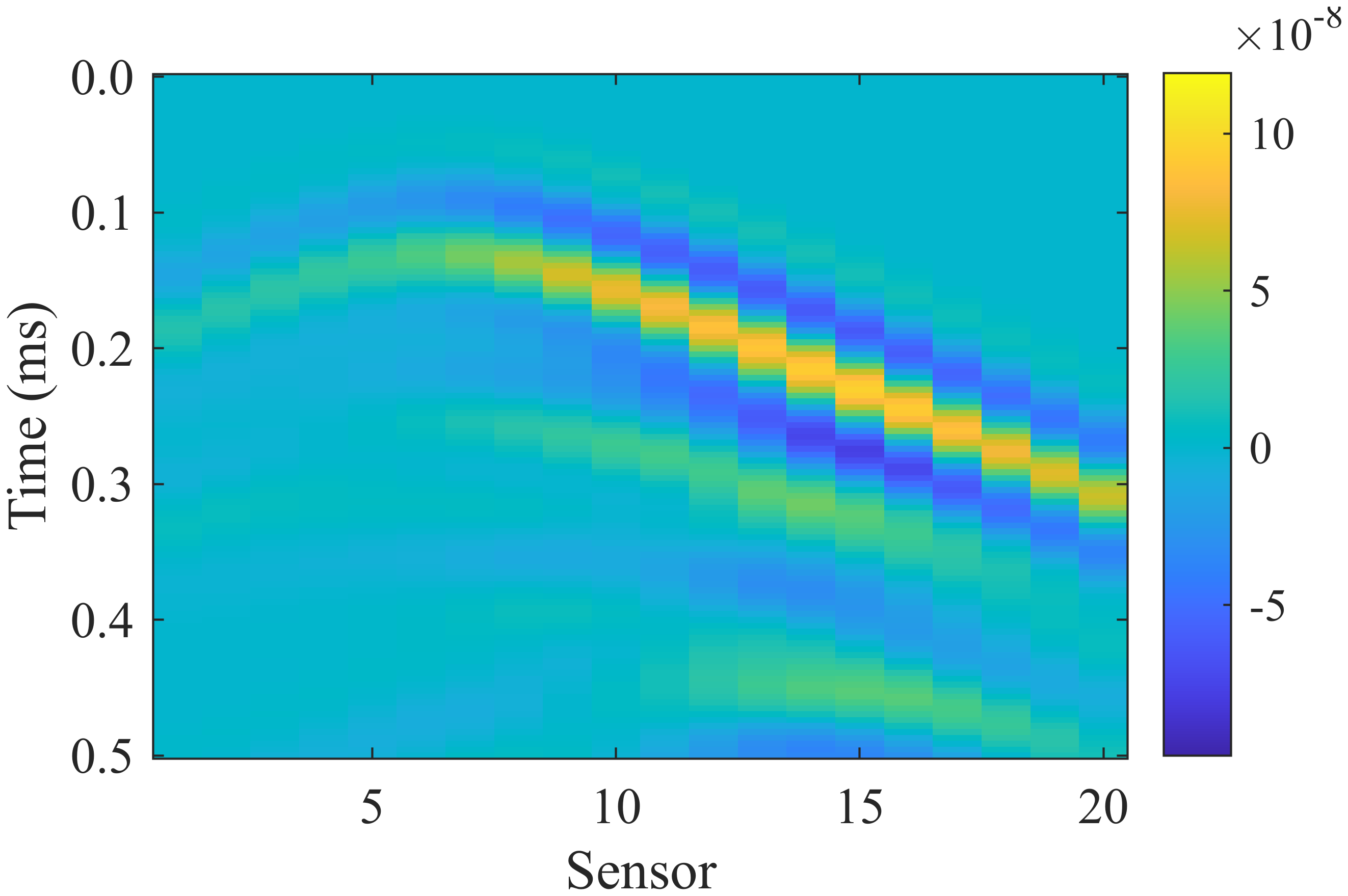}
		\caption{} \label{CSG for Linear Distribution: Lossy}
	\end{subfigure}
  	\centering
	\begin{subfigure}[b]{43mm}
		\includegraphics[width=\textwidth]{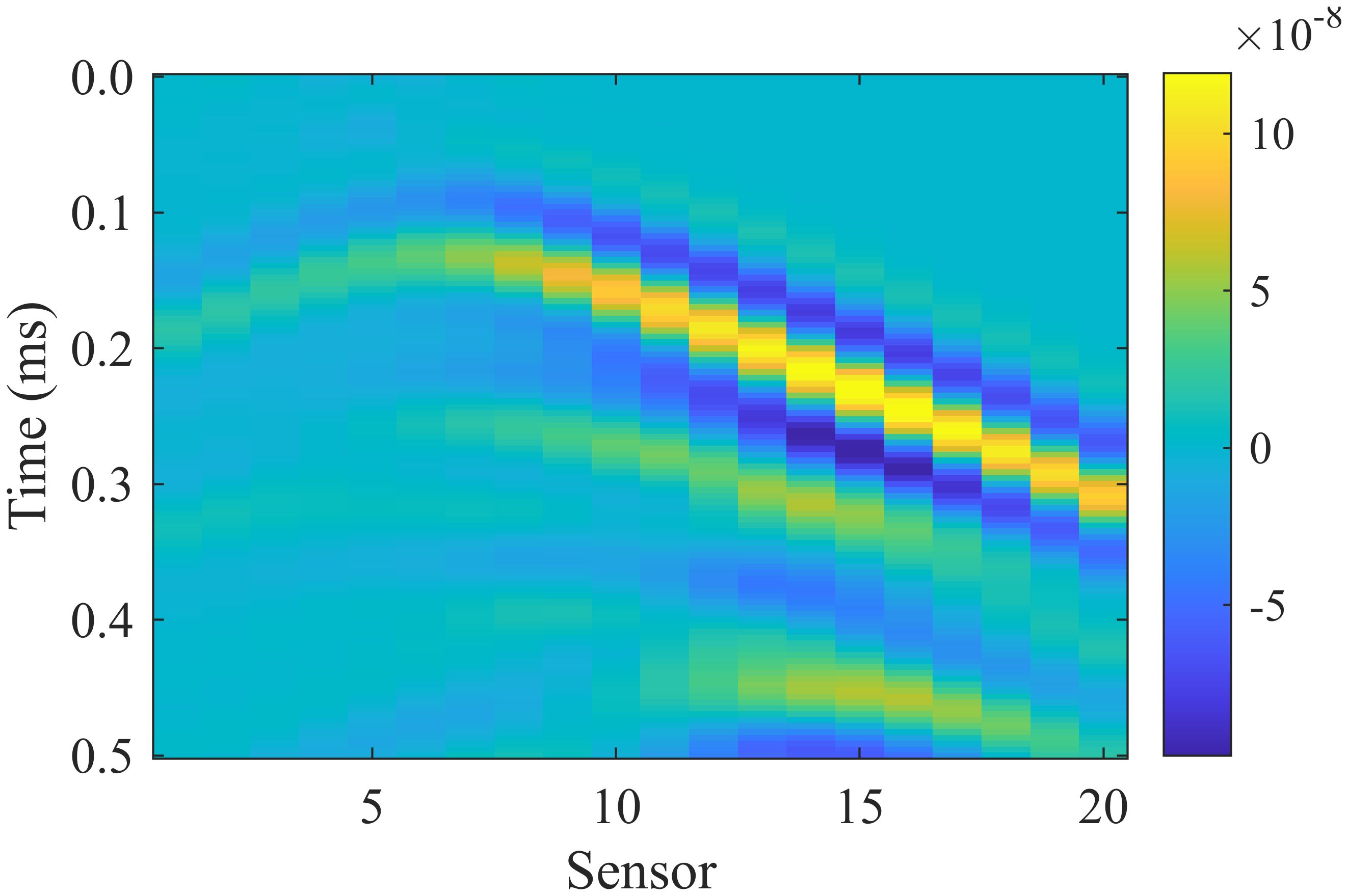}
		\caption{}  \label{CSG for Linear Distribution: Comp}
	\end{subfigure}
	\caption{Common shot gathers at Sensor 5 with linear inhomogeneous distribution. 
  The colorbar indicates the magnitude of the collected data. 
  (a) Synthetic data in lossless environment. (b) Measured data in lossy environment. (c) Compensated data.} 
  \label{CSG for Linear Distribution}
\end{figure}

The result is shown in Fig.~\ref{Data Comparison for Linear Distribution}, which compares the lossless, measured, and compensated bistatic data traces for transmitter at Sensor 1 and receivers at Sensors 5, 10, 15, and 20. 
Despite the spatially varying dissipation, the compensation method accurately amplifies the attenuated waveforms to the level of their lossless counterparts, yielding relative data misfits in the $L2$-norm of 3.56\%, 3.21\%, 4.25\%, and 5.16\%, respectively. 
This is consistent across the entire dataset, leading to a overall relative data misfit of 2.87\%.

The global attenuation compensation performance is also evaluated using the CSGs at Sensor 5, as presented in Fig.~\ref{CSG for Linear Distribution}.
The lossless CSG depicts the hyperbolic reflection events. 
In contrast, the measured CSG in Fig.~\ref{CSG for Linear Distribution: Lossy} suffers from spatially distributed attenuation, which obscures the reflection events at deep region.
After applying the compensation strategy, the resulting CSG in Fig.~\ref{CSG for Linear Distribution: Comp} shows an improvement.
\begin{figure}[!htb]
  \centering
  \begin{subfigure}[b]{43mm}
		\includegraphics[width=\textwidth]{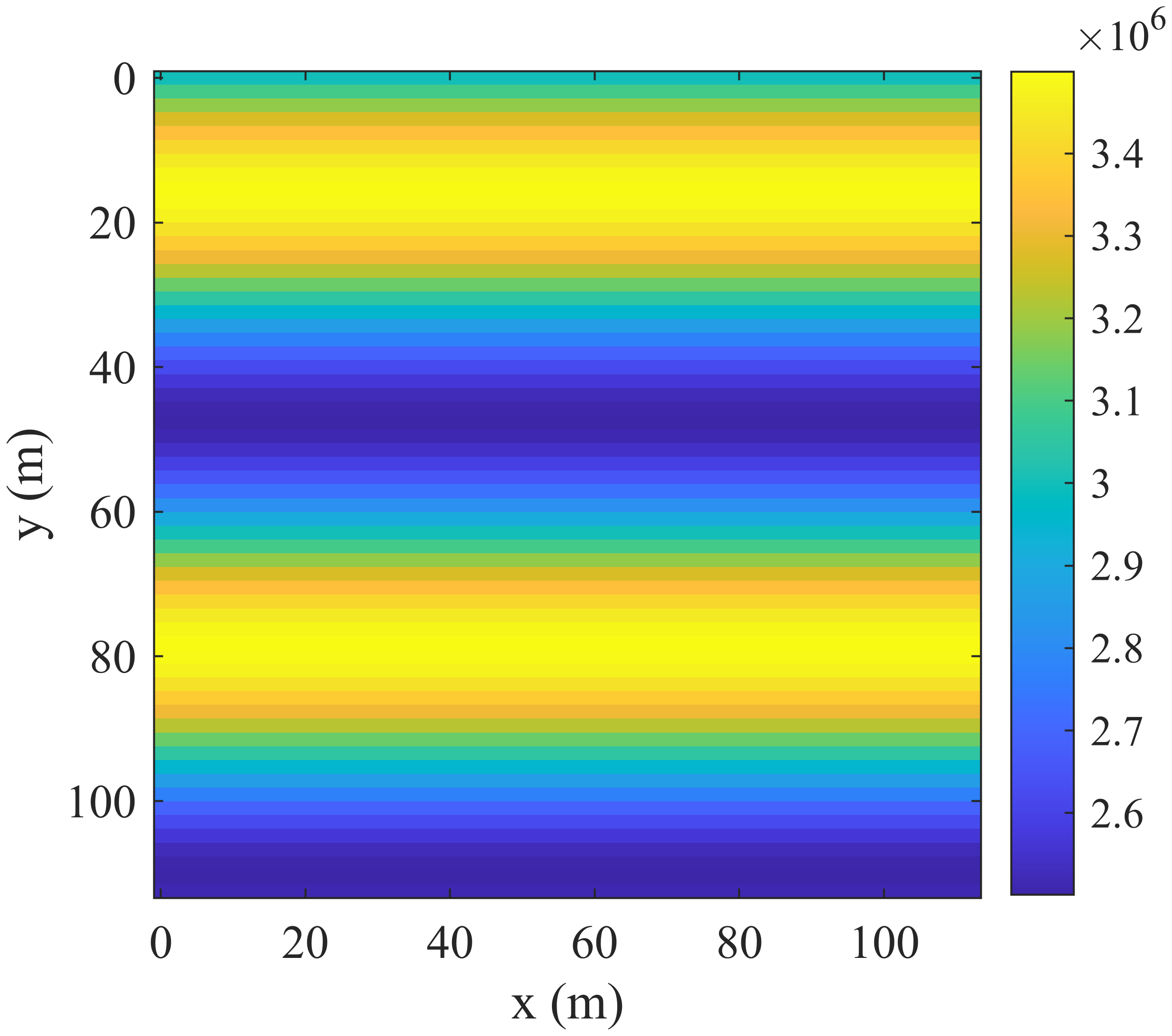}
		\caption{} \label{R for Sine Distribution}
	\end{subfigure}
  \hfil
  \begin{subfigure}[b]{43mm}
    \includegraphics[width=\textwidth]{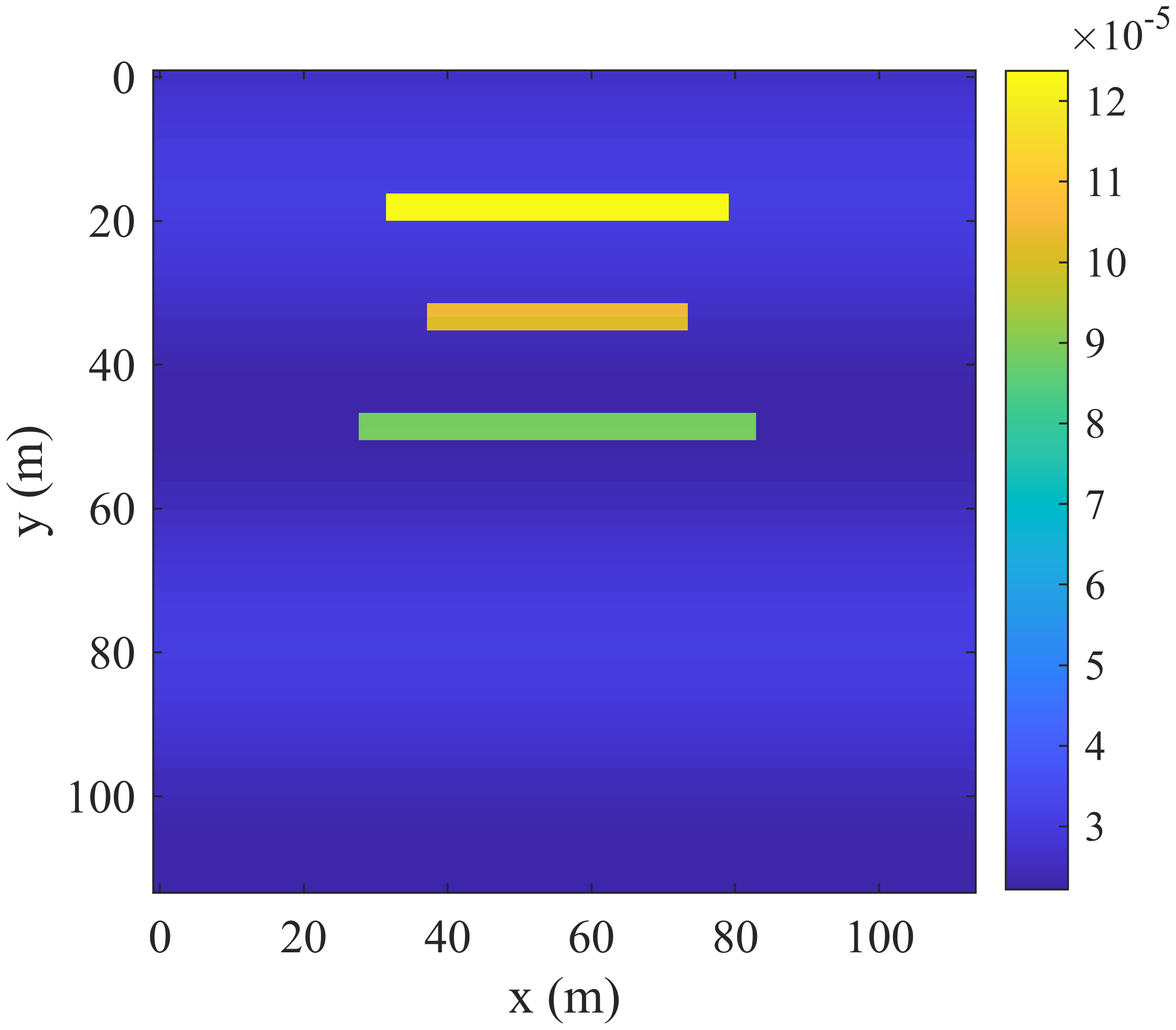}
    \caption{} \label{Sigma for Sine Distribution}
  \end{subfigure}
  \centering
	\begin{subfigure}[b]{43mm}
		\includegraphics[width=\textwidth]{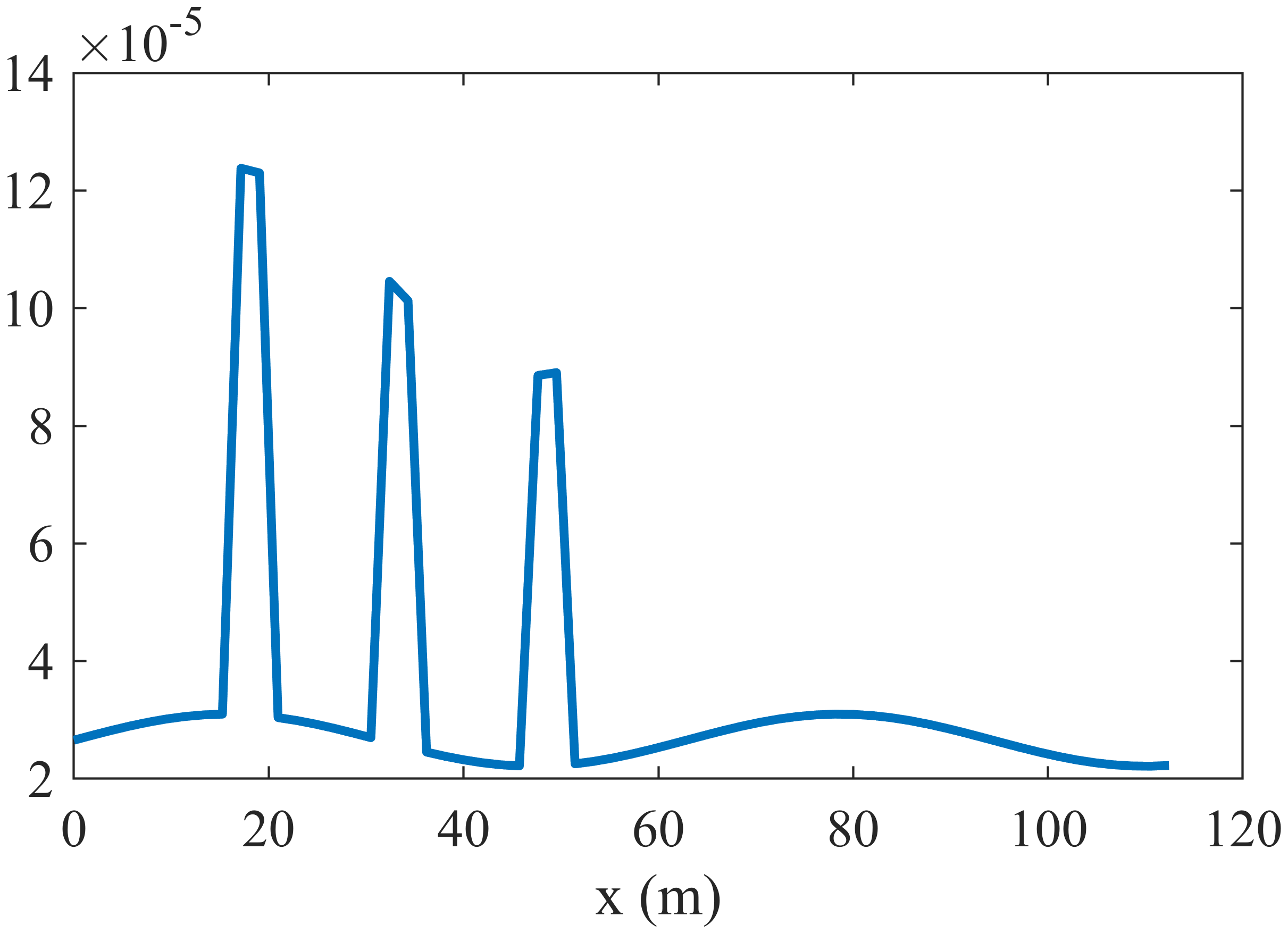}
		\caption{} \label{Sigma Line for Sine Distribution} 
	\end{subfigure}
  \caption{(a) Distribution of dissipation parameter $p(\boldsymbol{x})$. (b) Distribution of conductivity $\sigma(\boldsymbol{x})$. (c) Conductivity profile along the vertical line at $\boldsymbol{x}=60$ m.}
\end{figure}
The hyperbolic reflection events are effectively restored across the full array, and the relative error compared to the lossless CSG is 4.78\%.
This confirms that the proposed strategy performs reliably for non-uniform dissipation.

\subsection{Case 3: Biased Sinusoidal Inhomogeneous Dissipation}

Finally, we investigate a more complex case with a biased sinusoidal dissipation distribution to further validate the proposed compensation strategy. 
The permittivity distortion remains identical to that in Fig.\ref{Model}.
The dissipation parameter $p(\boldsymbol{x})$ is modeled by the expression $p(\boldsymbol{x}) = p_0 \cdot [3+0.5 \cos(k\boldsymbol{x})]$, with the scaling factor for this case set to $p_0=1\times10^6$ \si{\per\second}. 
The spatial frequency is $k=4\pi/3\lambda$, where $\lambda$ is the wavelength in vacuum corresponding to the central frequency of \SI{8}{M\hertz}. This distribution is illustrated in Fig.~\ref{R for Sine Distribution}.
The corresponding conductivity distribution $\sigma(\boldsymbol{x})$ is illustrated in Fig.~\ref{Sigma for Sine Distribution}, and its profile along the vertical line at $\boldsymbol{x}=60$ m is shown in Fig.~\ref{Sigma Line for Sine Distribution}.
This experiment aims to test the robustness of the compensation method in a non-monotonically varying lossy environment.

The results are presented in Fig.~\ref{Data Comparison for Sine Distribution}. 
The figure displays four data traces corresponding to a transmitter at Sensor 10 and receivers at Sensors 5, 10, 15, and 20. 
The proposed method restores the attenuated signals with relative data misfits in the $L2$-norm are 3.20\%, 4.12\%, 3.30\%, and 4.43\% for these selected traces. 
This accuracy is maintained throughout the measurement set, resulting in an overall data misfit of 3.95\%.

\begin{figure}[!htb]
	\centering
	\begin{subfigure}[b]{43mm}
		\includegraphics[width=\textwidth]{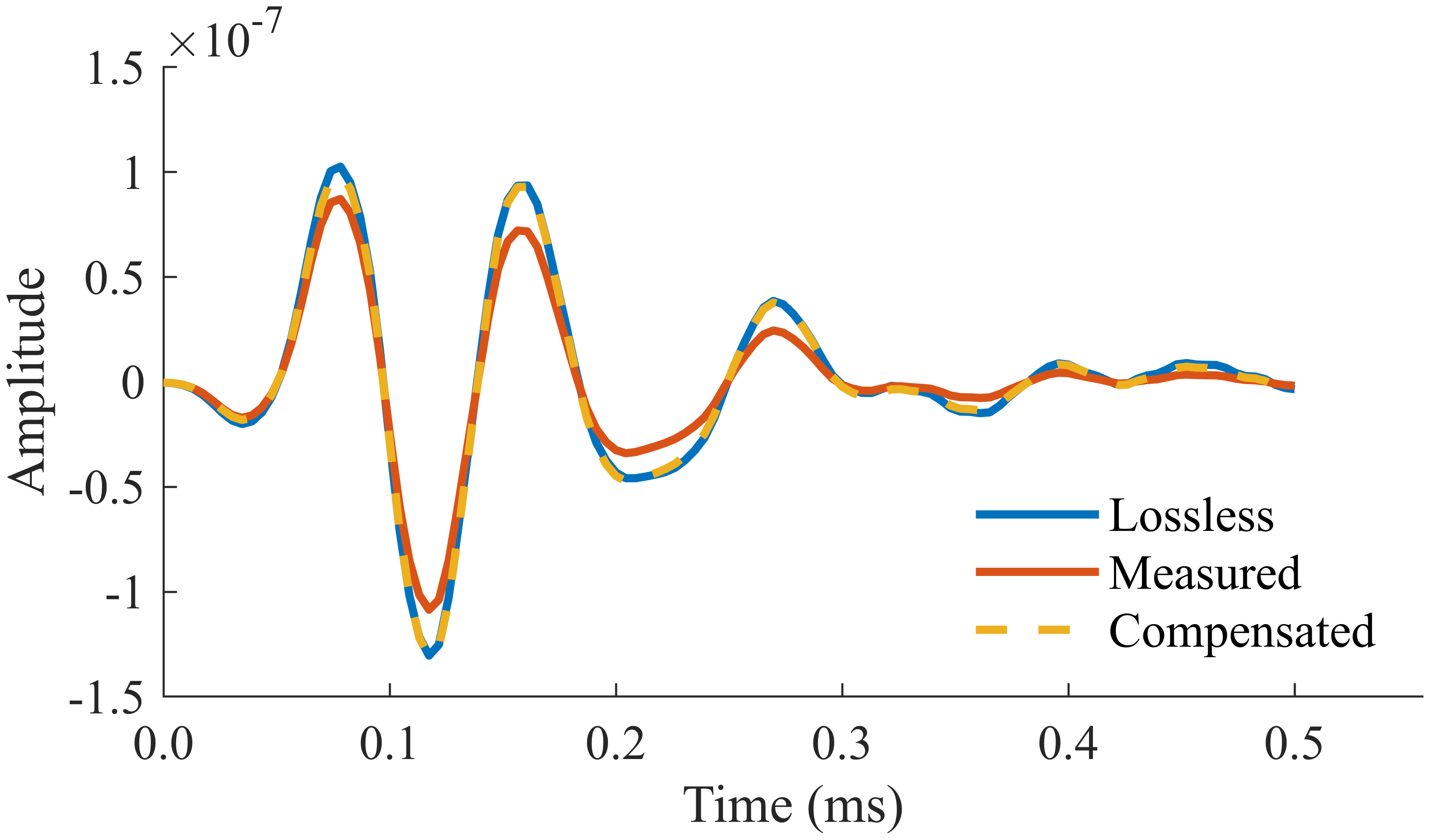}
		\caption{} 
	\end{subfigure}
    \hfill
    \begin{subfigure}[b]{43mm}
		\includegraphics[width=\textwidth]{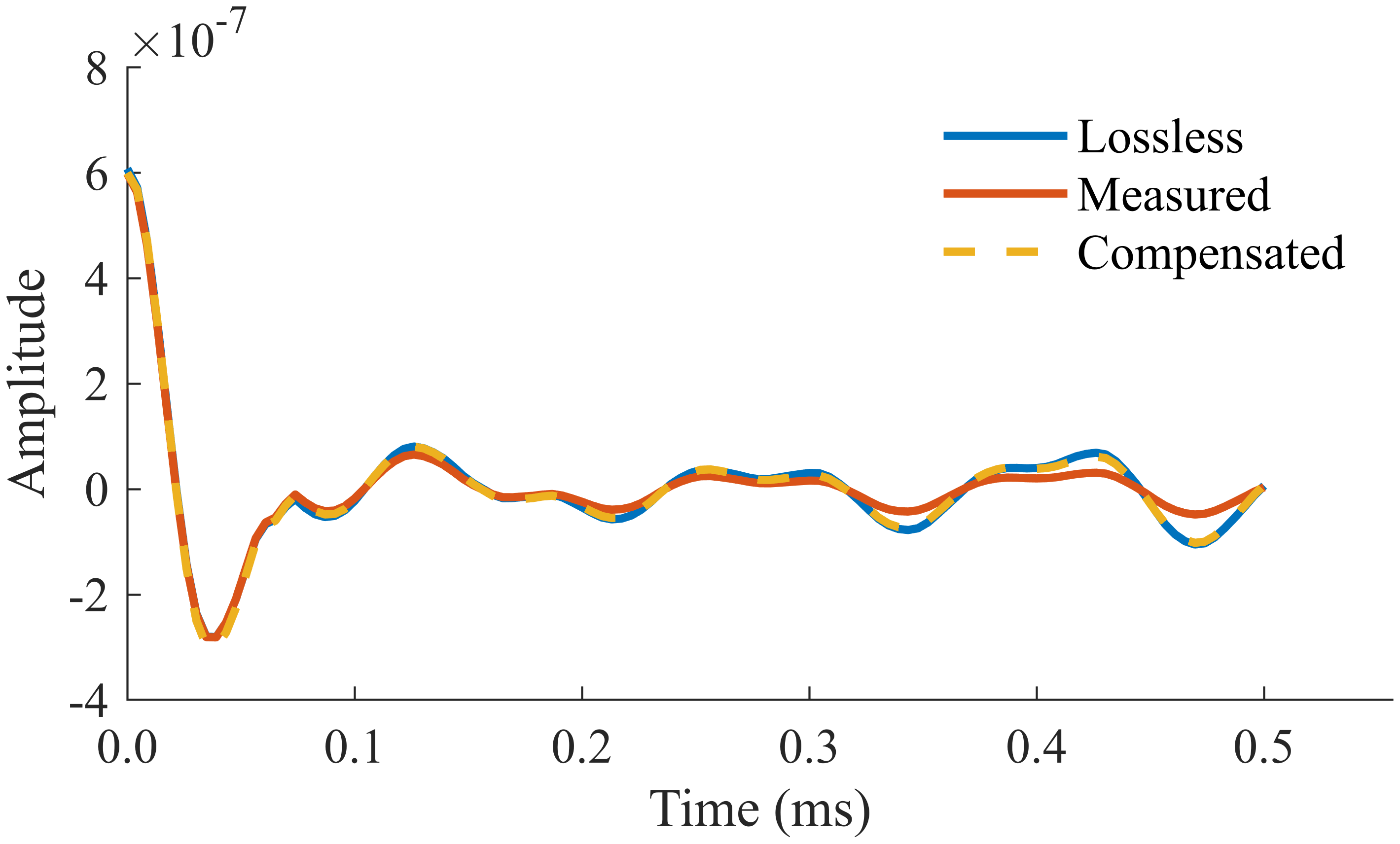}
		\caption{} 
	\end{subfigure}
  \centering
	\begin{subfigure}[b]{43mm}
		\includegraphics[width=\textwidth]{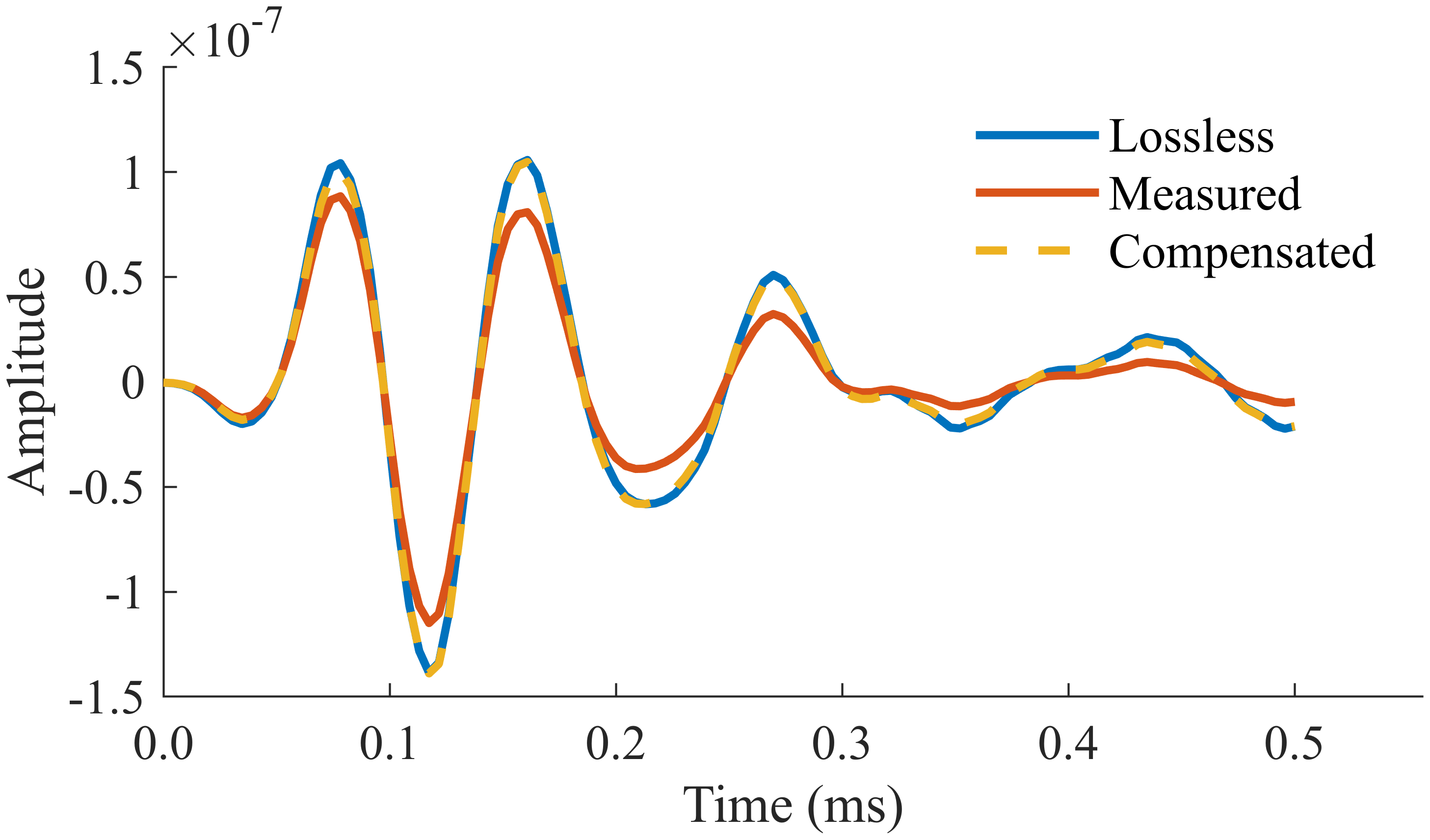}
		\caption{} 
	\end{subfigure}
    \hfill
    \begin{subfigure}[b]{43mm}
		\includegraphics[width=\textwidth]{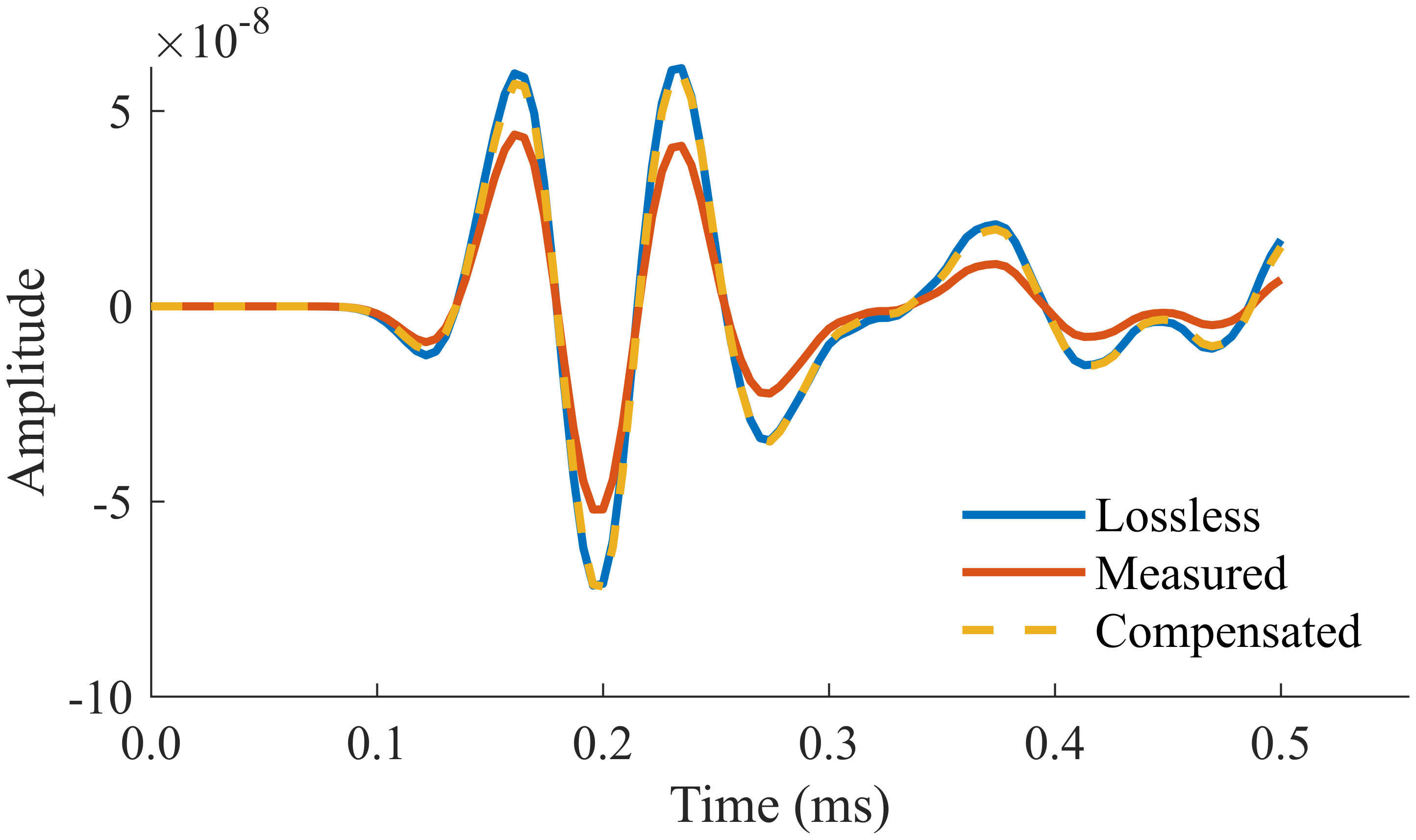}
		\caption{} 
	\end{subfigure}
	\caption{Comparison of lossless data, measured data and compensated data with biased sinusoidal inhomogeneous distribution. 
  The transmitter is at Sensor 10 and receivers are at (a) Sensor 5, (b) Sensor 10, (c) Sensor 15, (d) Sensor 20.} 
  \label{Data Comparison for Sine Distribution}
\end{figure}

\begin{figure}[!htb]
	\centering
	\begin{subfigure}[b]{43mm}
		\includegraphics[width=\textwidth]{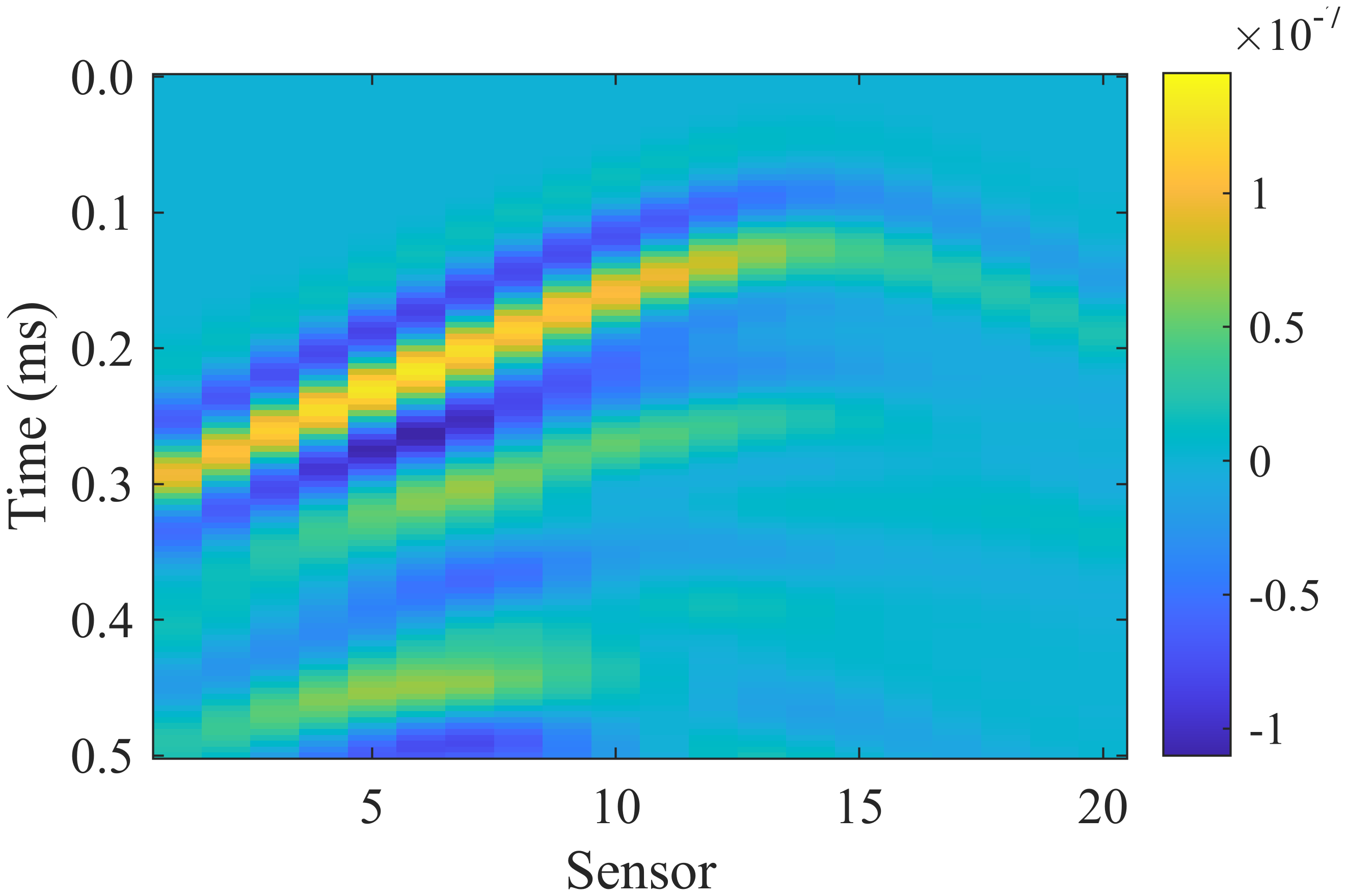}
		\caption{} \label{CSG for Sin Distribution: Lossless}
	\end{subfigure}
    \hfill
    \begin{subfigure}[b]{43mm}
		\includegraphics[width=\textwidth]{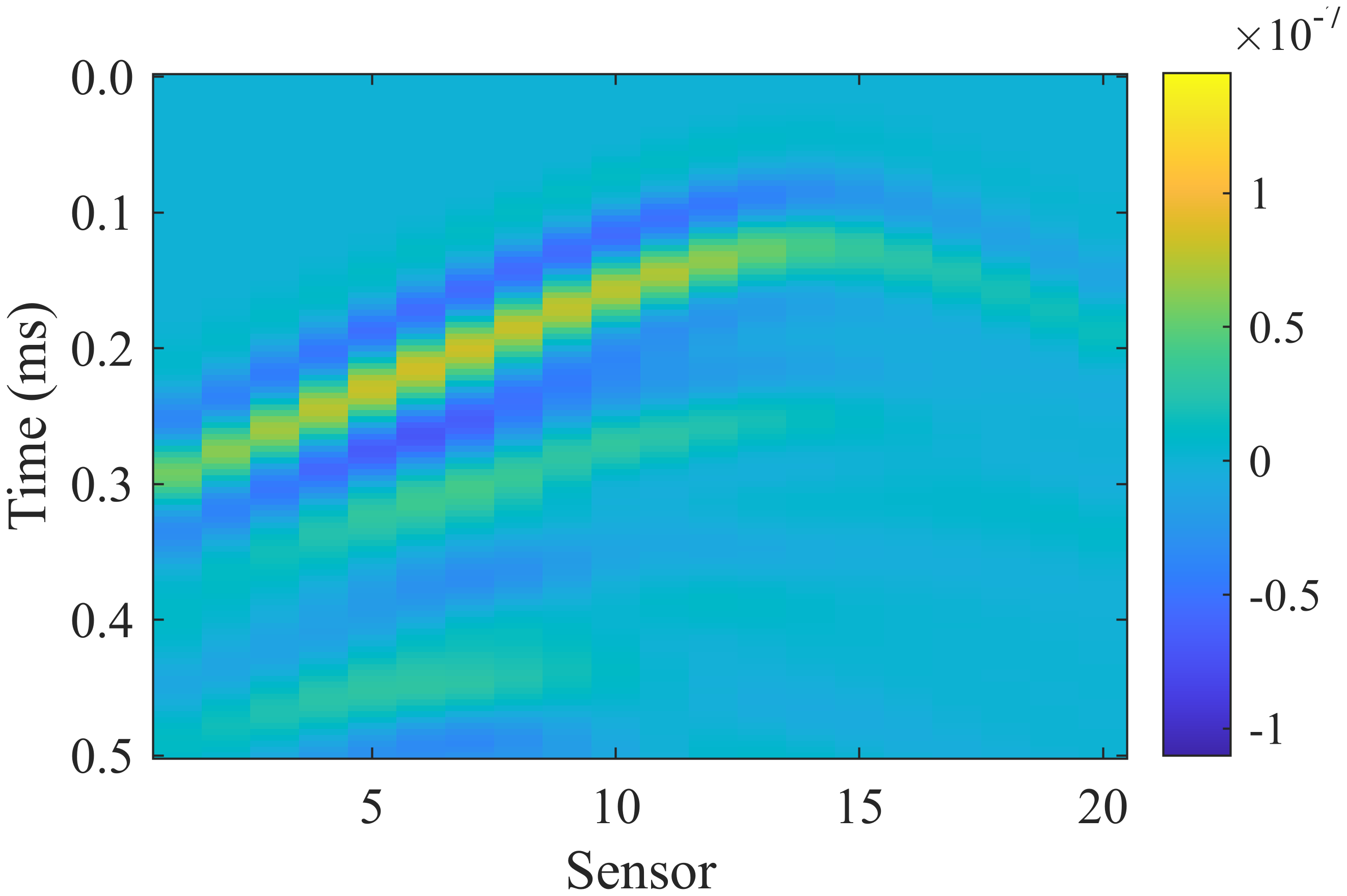}
		\caption{} \label{CSG for Sin Distribution: Lossy}
	\end{subfigure}
  	\centering
	\begin{subfigure}[b]{43mm}
		\includegraphics[width=\textwidth]{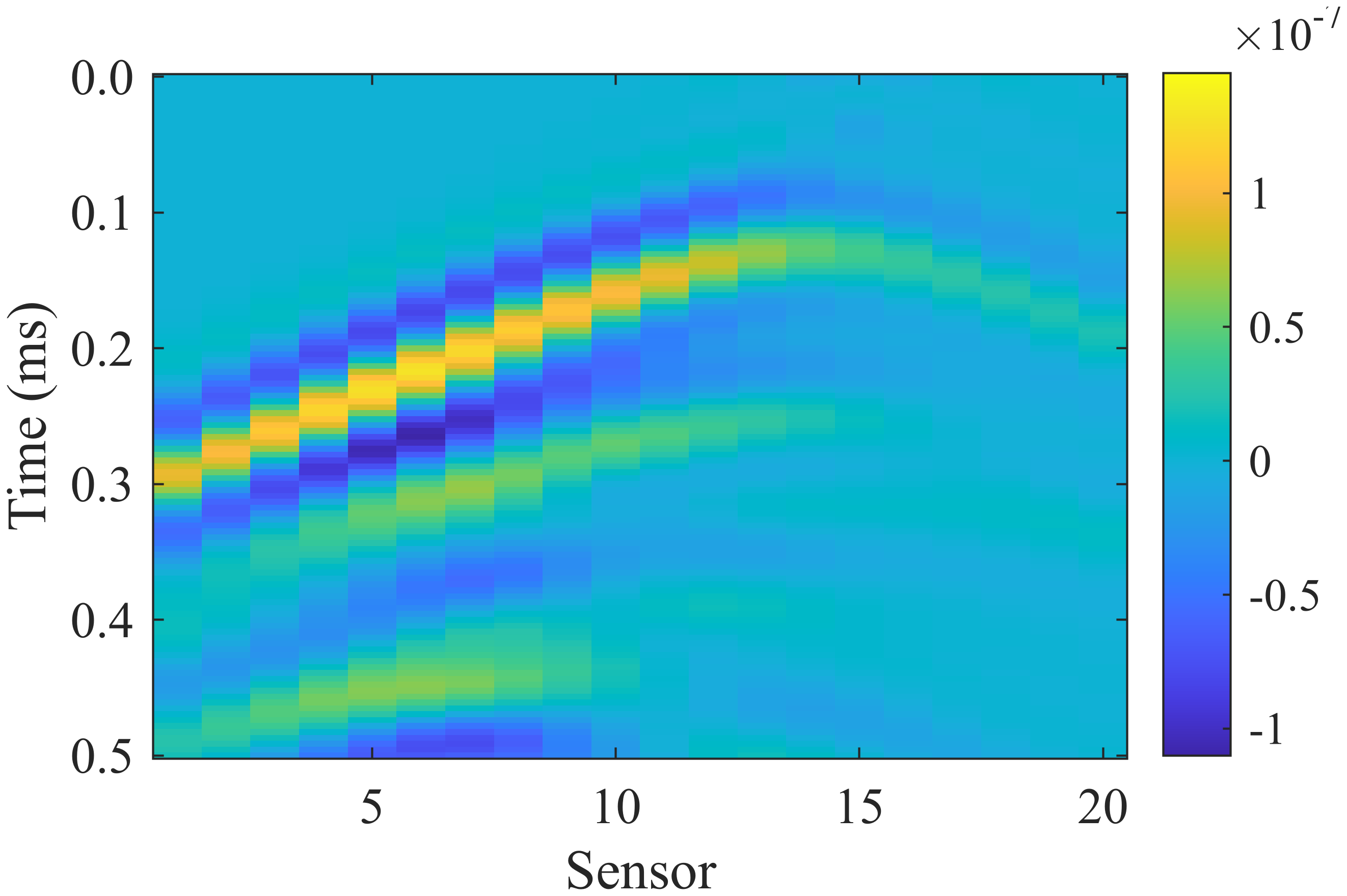}
		\caption{} \label{CSG for Sin Distribution: Comp}
	\end{subfigure}
	\caption{Common shot gathers at Sensor 15. 
  The colorbar indicates the magnitude of the collected data with biased sinusoidal inhomogeneous distribution. 
  (a) Synthetic data in lossless environment. (b) Measured data in lossy environment. (c) Compensated data.} 
  \label{CSG for Sin Distribution}
\end{figure}

The CSGs for the transmitter at Sensor 15 is shown in Fig.~\ref{CSG for Sin Distribution}. 
While the measured data in Fig.~\ref{CSG for Sin Distribution: Lossy} is distorted by complex attenuation patterns, the compensated CSG in Fig.~\ref{CSG for Sin Distribution: Comp} effectively restores the hyperbolic reflection events. 
The result closely matches the ideal lossless CSG in Fig.~\ref{CSG for Sin Distribution: Lossless}, with a relative error of 6.62\%. 
This test shows the robustness of the proposed compensation strategy, confirming its applicability to inhomogeneous dissipation parameter $p(\boldsymbol{x})$.

\subsection{Boundary of Compensation Method}

To investigate the performance boundary of the proposed attenuation compensation method, we test it on a series of lossy media.
The dissipation distribution is identical to the Case 3, described by $p(\boldsymbol{x})=p_0 \cdot [3+0.5 \cos(k\boldsymbol{x})]$ where $k=4\pi/3\lambda$.
The scaling factor $p_0$ is varied from $1\times10^6$ \si{\per\second} to $10\times10^6$ \si{\per\second}. 

\begin{figure}[!htb]
	\centering
	\begin{subfigure}[b]{43mm}
		\includegraphics[width=\textwidth]{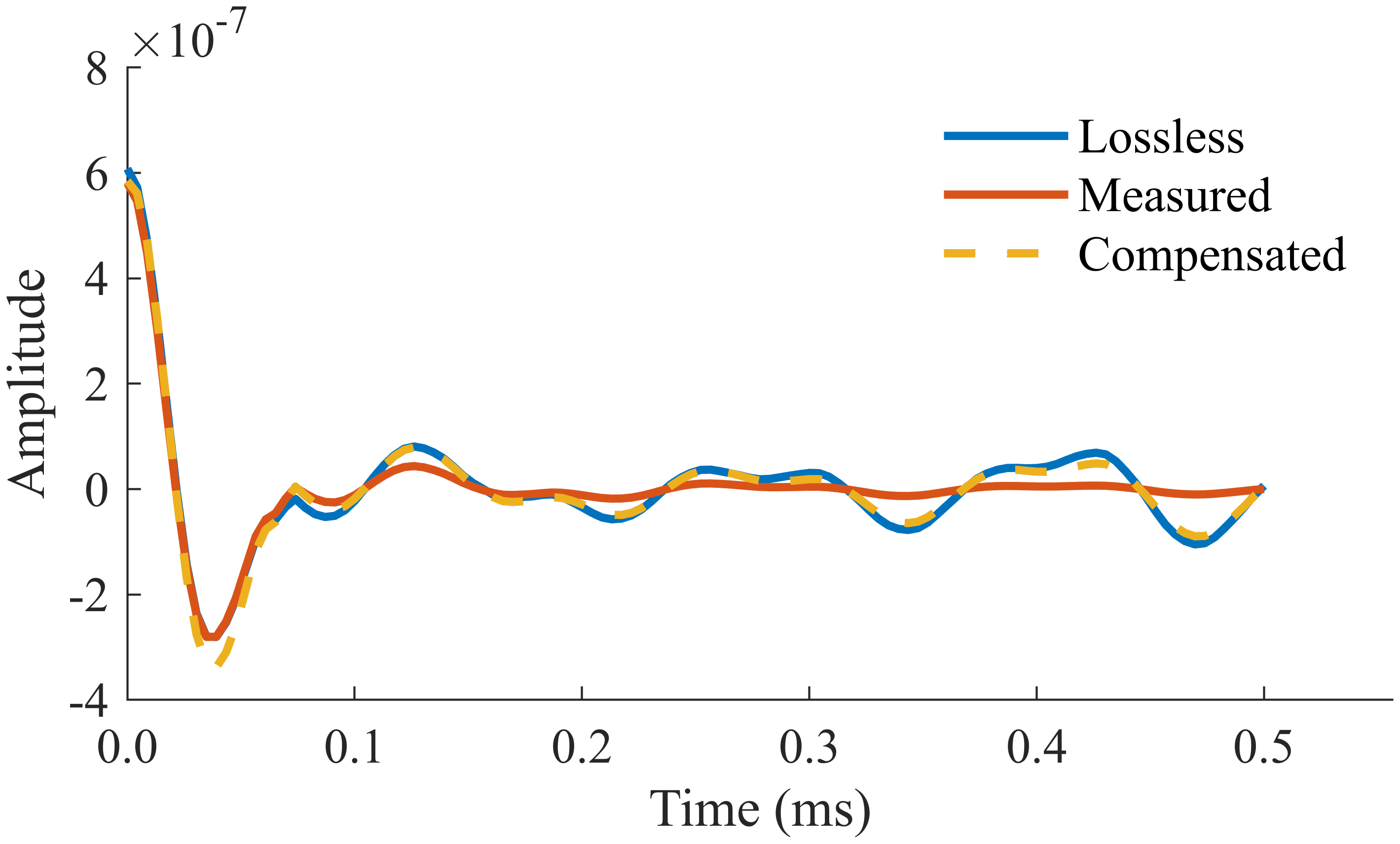}
		\caption{} 
	\end{subfigure}
    \hfill
    \begin{subfigure}[b]{43mm}
		\includegraphics[width=\textwidth]{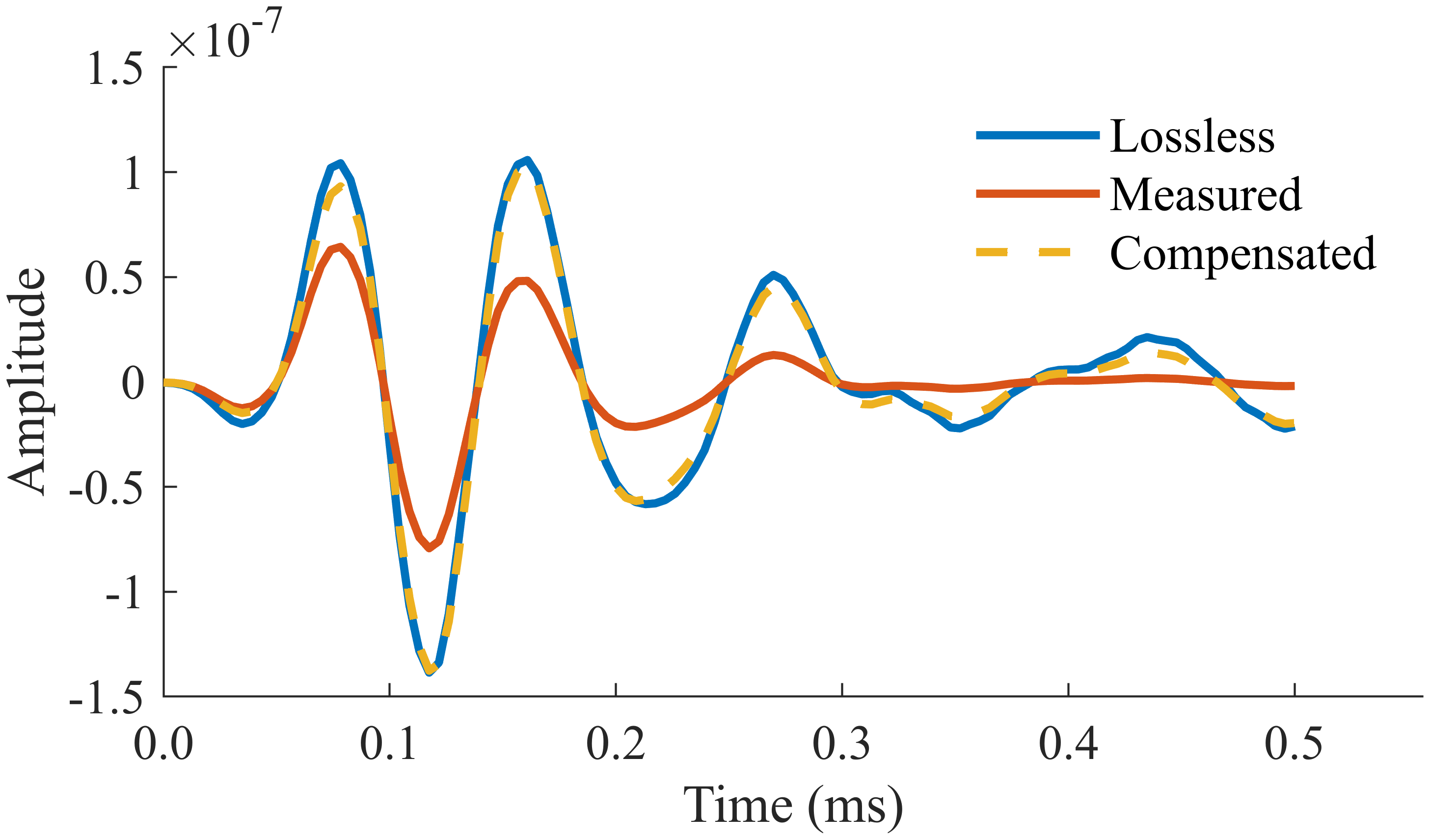}
		\caption{} 
	\end{subfigure}
  \centering
	\begin{subfigure}[b]{43mm}
		\includegraphics[width=\textwidth]{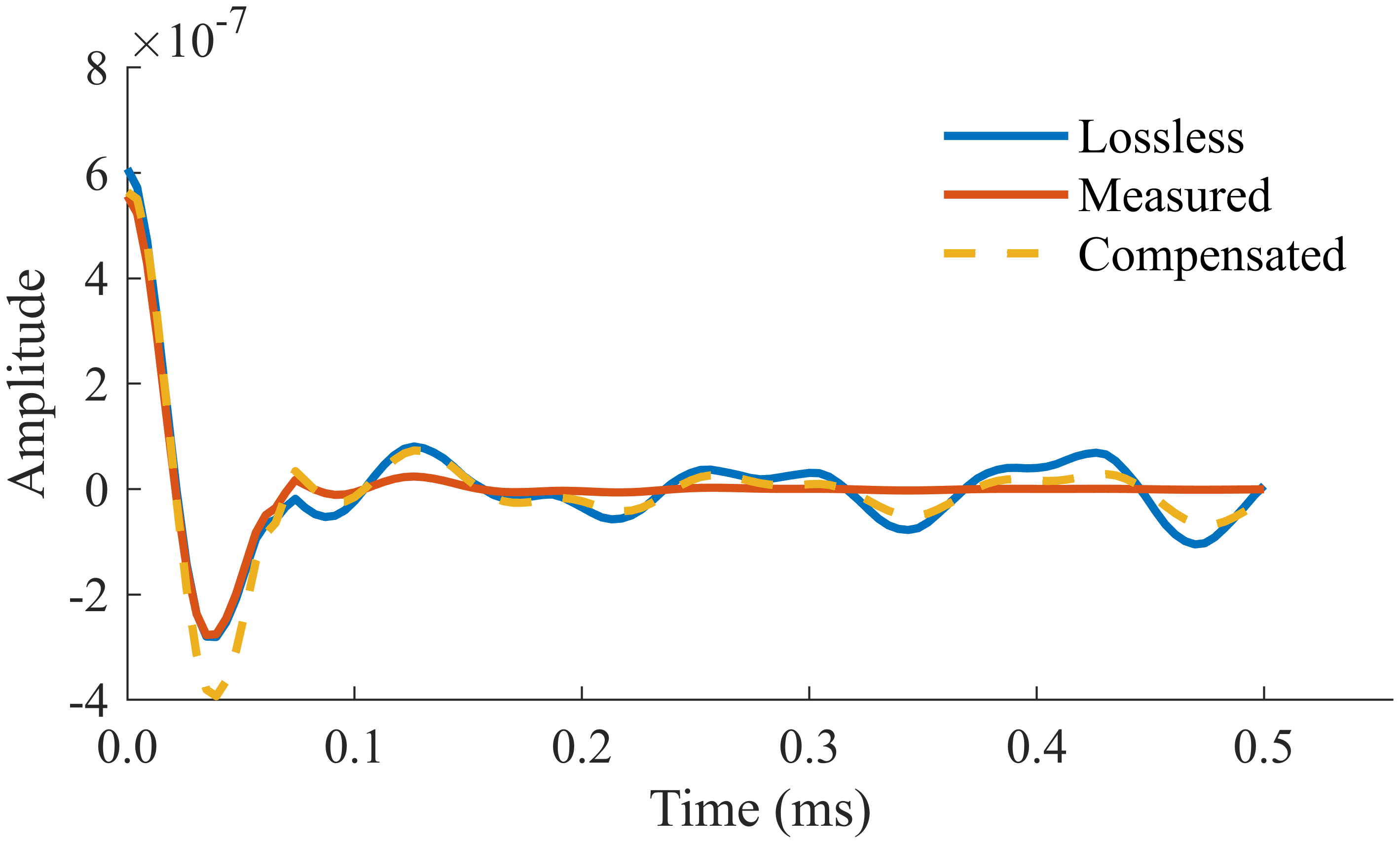}
		\caption{} 
	\end{subfigure}
    \hfill
    \begin{subfigure}[b]{43mm}
		\includegraphics[width=\textwidth]{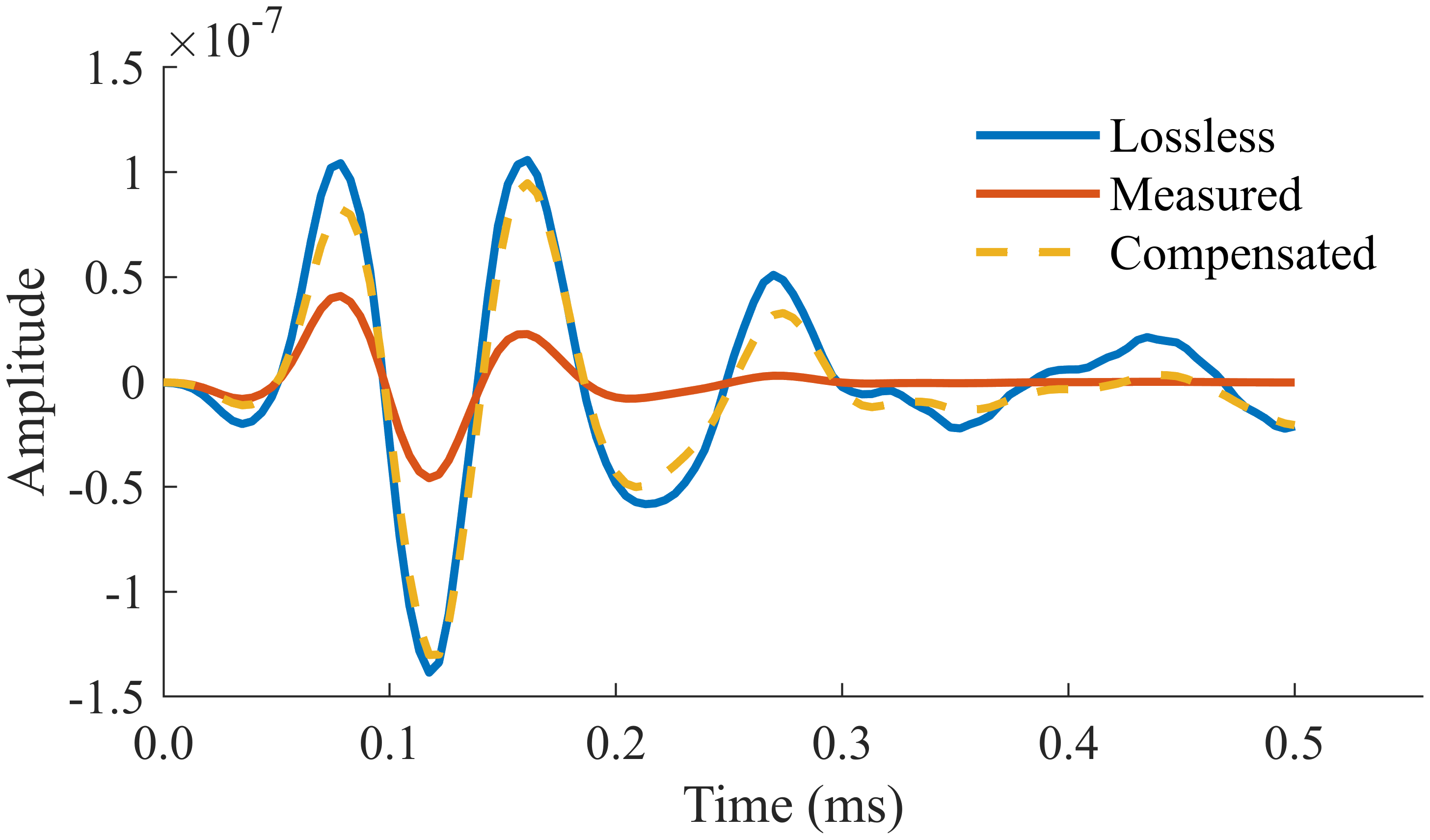}
		\caption{} 
	\end{subfigure}
    \centering
	\begin{subfigure}[b]{43mm}
		\includegraphics[width=\textwidth]{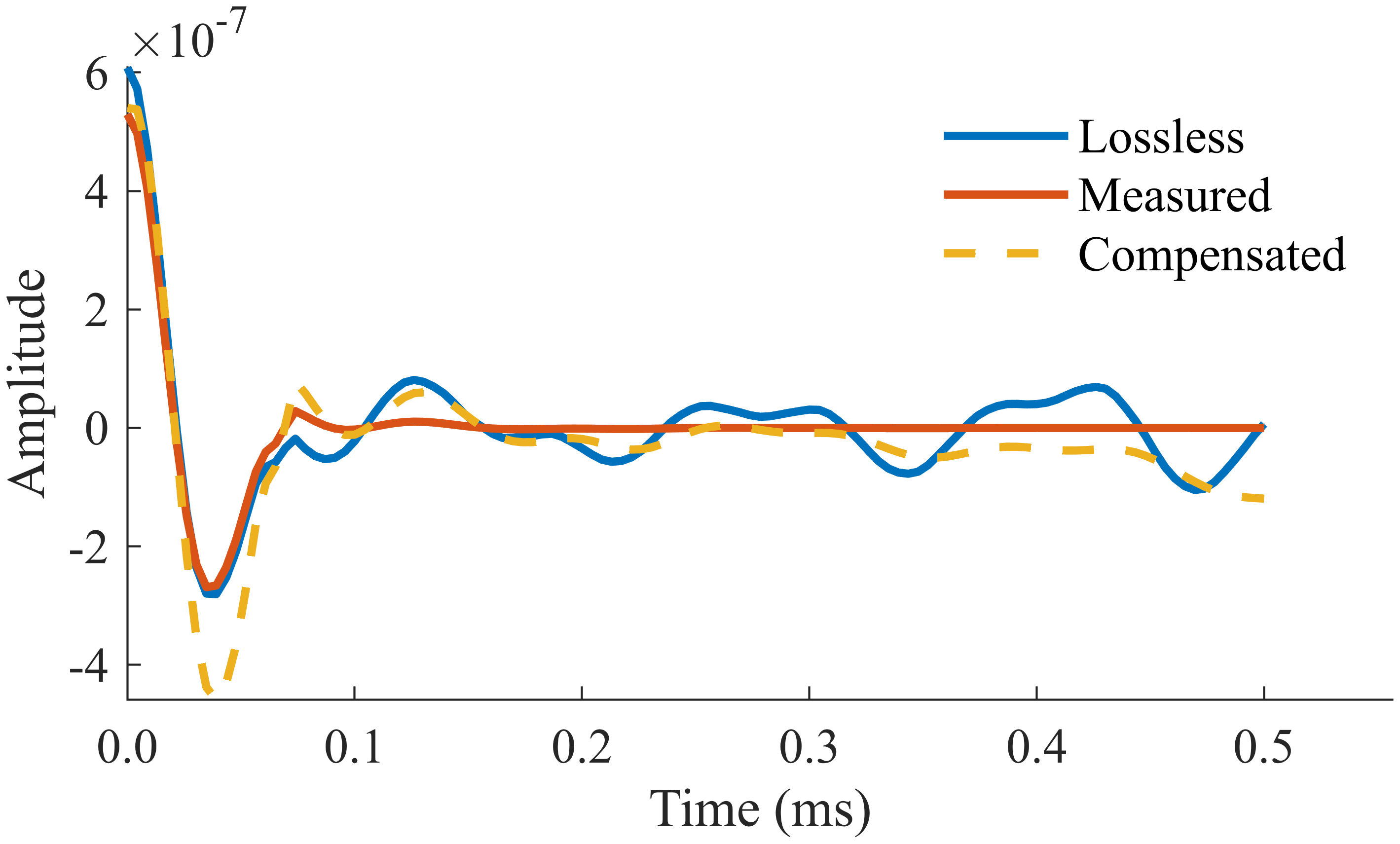}
		\caption{} 
	\end{subfigure}
    \hfill
    \begin{subfigure}[b]{43mm}
		\includegraphics[width=\textwidth]{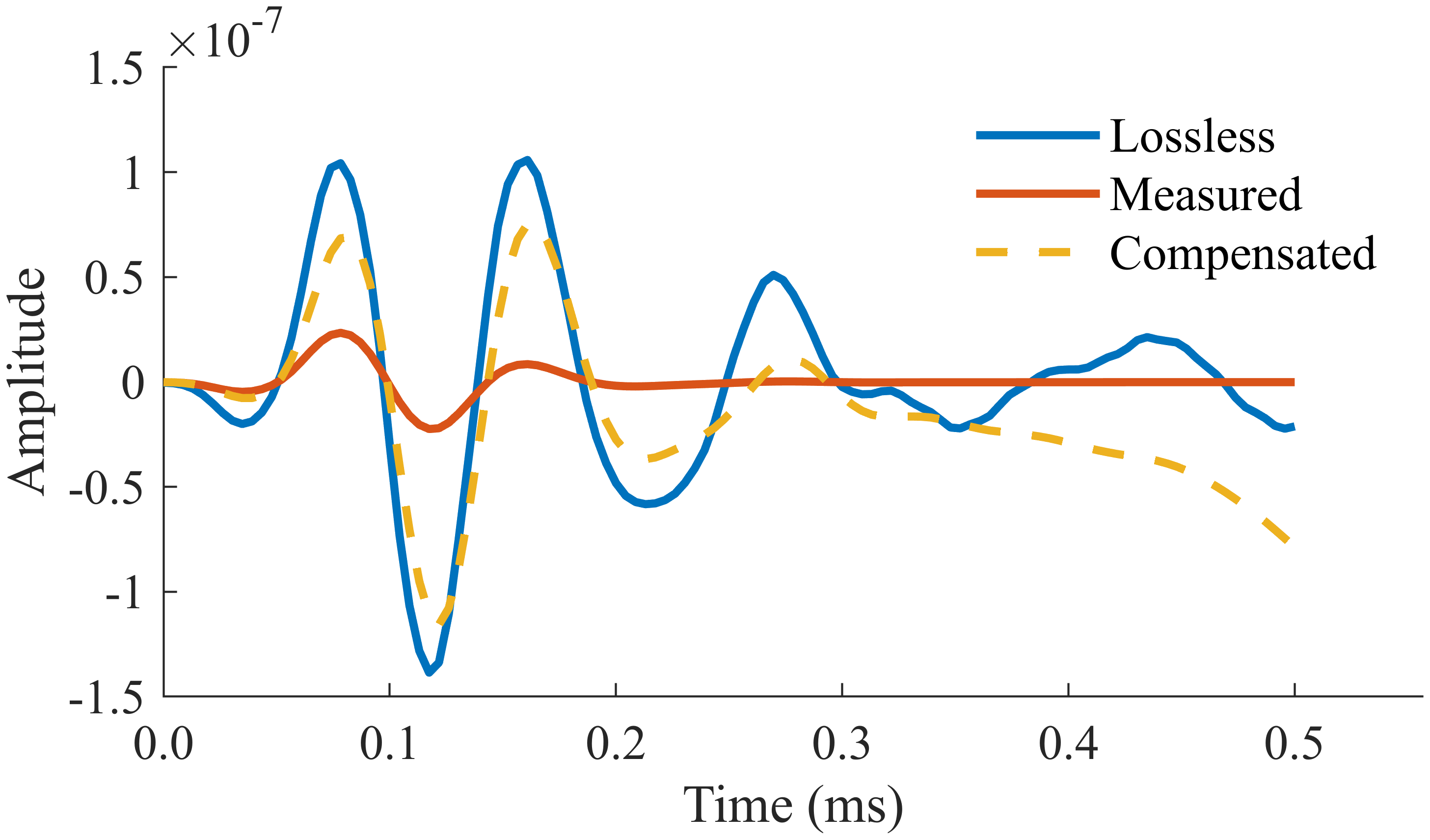}
		\caption{} 
	\end{subfigure}
	\caption{Comparison of lossless, measured, and compensated data for different loss levels. 
  The transmitter is at Sensor 10 and receiver is at Sensor 10 for monostatic data or Sensor 15 for bistatic data.} 
  \label{Data Comparison Boundary Test}
\end{figure}

\begin{figure}[!htb]
  \centering
  \includegraphics[width=0.28\textwidth]{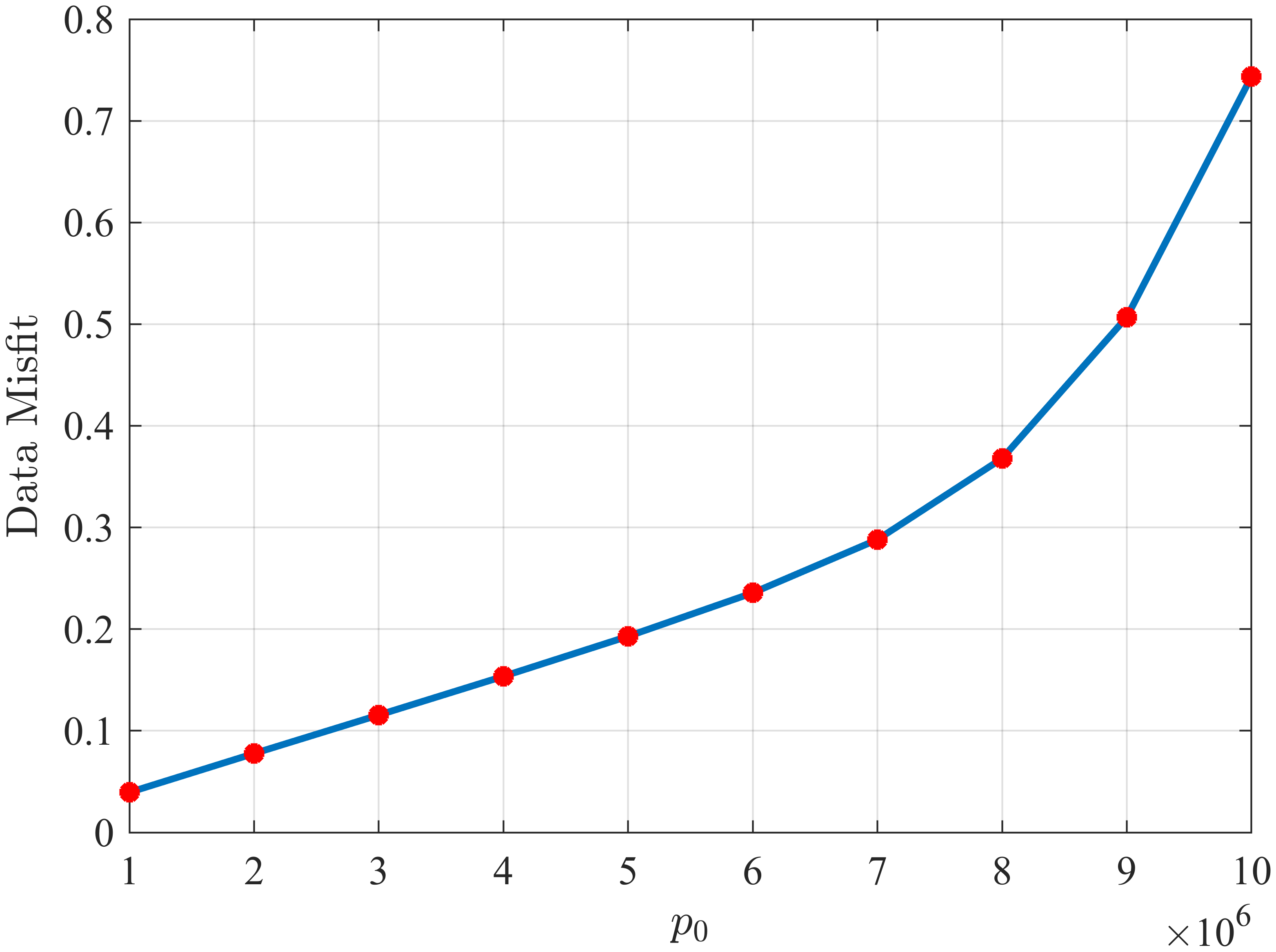}
  \caption{Relative data misfit of compensated data versus the dissipation scaling factor $p_0$.} 
  \label{Compensation Misfit}
\end{figure}

The compensation results for monostatic (Sensor 10) and bistatic (Sensor 15) data as the transmitter is at Sensor 10 is illustrated in Fig.~\ref{Data Comparison Boundary Test}, for three various lossy media with $p_0 = 3\times10^6$ \si{\per\second}, $6\times10^6$ \si{\per\second}, and $10\times10^6$ \si{\per\second}.
The plots illustrate that as medium dissipation intensifies, the waveforms of attenuated measured data become challenging to distinguish. 
Nevertheless, our compensation method can amplify these weak signals. 
For media with moderate loss ($p_0 = 3\times10^6$ \si{\per\second}), the compensated data approximates the lossless data with high fidelity. 
In highly dissipative media ($p_0 = 10\times10^6$ \si{\per\second}), the precision of the compensated data is limited due to the loss of information, while the compensation method can still recover features of the lossless waveform.

The relative data misfit between the compensated data and lossless data for the entire dataset with respect to $p_0$ is demonstrated in Fig.~\ref{Compensation Misfit}.
The error curve indicates that the proposed method achieves high compensation accuracy in moderately dissipative media.
The recovery accuracy decreases as the dissipation becomes more pronounced, attributed to the severe attenuation experienced by the waves in the early stage of the propagation process. 

\section{Conclusion}

In this paper, we addressed the challenge of extending the wave operator model to lossy media and developing an attenuation compensation strategy.
We derive a non-closed-form electric field distribution and decompose it into a closed-form propagation term and a non-closed-form dissipation term.
It is shown that by reversing the dominant exponential decay within the propagation term, the attenuated data can be stably restored to its approximate lossless state.
The accuracy and robustness of this compensation method were validated through numerical experiments. 
This work establishes the necessary theoretical foundation for applying the wave operator framework in lossy environments, paving the way for integration with ROM-based methods to solve inverse scattering problems.
The developed strategy holds promise for enhancing imaging quality in practical applications, such as ground-penetrating radar imaging, seismic exploration and biomedical imaging.

{\appendices

\section{Derivation of the Green's Function}
\label{Appendix Green's Function}

This appendix provides detailed derivations for the Green's function presented in Proposition 1.
The derivation demonstrates that solution \eqref{Green Function Solution 1} is exact under the condition of uniform dissipation and provides an error term for the non-uniform case.

To validate the Green's function \eqref{Green Function Solution 1}, we substitute it into the governing wave equation \eqref{Green Function 1}.
Since the spatial component $\delta_{\boldsymbol{x_s}}(\boldsymbol{x})$ is separable, i.e. $G(\boldsymbol{x}, t) = G(t)\delta_{\boldsymbol{x_s}}(\boldsymbol{x})$, we focus on the temporal component $G(t)$.

For brevity, we use $\mathbf{E}$, $\mathbf{C}$, $\mathbf{S}$ and $\mathbf{H}$ to denote $e^{-\frac{p(\boldsymbol{x})t}{2}}$, $\cos(t\sqrt{A'})$, $\sin(t\sqrt{A'})$ and $H(t)$, respectively.
Then we get
\begin{equation}
  G(t) = \mathbf{ECH},
\end{equation}
\begin{equation}
  \partial_t G(t) = -\frac{p}{2}\mathbf{ECH} - \mathbf{E}\sqrt{A'}\mathbf{SH} + \mathbf{EC}\delta(t),
\end{equation}
\begin{equation}
    \begin{aligned}
        \partial_t^2 G(t) =& \frac{p^2}{4}\mathbf{ECH} + \frac{p}{2}\mathbf{E}\sqrt{A'}\mathbf{SH} -\frac{p}{2}\mathbf{EC}\delta(t) \\
        &+ \frac{p}{2}\mathbf{E}\sqrt{A'}\mathbf{SH} - \mathbf{E}A'\mathbf{CH} - \mathbf{E}\sqrt{A'}\mathbf{S}\delta(t) \\
        &- \frac{p}{2}\mathbf{EC}\delta(t) - \mathbf{E}\sqrt{A'}\mathbf{S}\delta(t) + \mathbf{EC}\partial_t\delta(t) \\
        =& \frac{p^2}{4}\mathbf{ECH} - \mathbf{E}A'\mathbf{CH} + p\mathbf{E}\sqrt{A'}\mathbf{SH} - p\mathbf{EC}\delta(t) \\
        &- 2\mathbf{E}\sqrt{A'}\mathbf{S}\delta(t) + \mathbf{EC}\partial_t\delta(t).
    \end{aligned}
\end{equation}
Note that only if $p(\boldsymbol{x})=\frac{\sigma(\boldsymbol{x})}{\epsilon(\boldsymbol{x})}$ is a constant, the calculation of $A'$ and $p$ is commutative.
    
Then we derive
\begin{equation}
    \begin{aligned}
        (\partial_t^2 + p\partial_t & + A)G(t) \\
        =& \frac{p^2}{4}\mathbf{ECH} - \mathbf{E}A'\mathbf{CH} + p\mathbf{E}\sqrt{A'}\mathbf{SH} \\
        &- p\mathbf{EC}\delta(t)- 2\mathbf{E}\sqrt{A'}\mathbf{S}\delta(t) + \mathbf{EC}\partial_t\delta(t) \\
        &- \frac{p^2}{2}\mathbf{ECH} - p\mathbf{E}\sqrt{A'}\mathbf{SH}+ p\mathbf{EC}\delta(t) + A\mathbf{ECH} \\
        =& A'\mathbf{ECH} - \mathbf{E}A'\mathbf{CH} + \mathbf{EC}\partial_t\delta(t) - 2\mathbf{E}\sqrt{A'}\mathbf{S}\delta(t) \\
        =& \left[A', \mathbf{E}\right]\mathbf{CH} + \mathbf{EC}\partial_t\delta(t) - 2\mathbf{E}\sqrt{A'}\mathbf{S}\delta(t). \\
    \end{aligned}
    \label{Green Function Calculate 1}
\end{equation}
The result contains a commutator term $\left[A', \mathbf{E}\right]$, which measures the degree of non-commutativity between $A'$ and $\mathbf{E}$.
The commutator, defined as $[M, N] = MN - NM$, is zero if and only if the operators commute.
$A'$ and $\mathbf{E}$ are commutative only if $p(\boldsymbol{x})=p$ is spatially constant.

With uniform dissipation distribution, the commutator term vanishes.
At $t=0$, the Dirac function term $2\mathbf{E}\sqrt{A'}\mathbf{S}\delta(t)$ in \eqref{Green Function Calculate 1} also cancels, confirming that the wave equation is satisfied:
\begin{equation}
      (\partial_t^2 + p\partial_t + A)G(t) = \partial_t\delta(t).
\end{equation}
This validates that \eqref{Green Function Solution 1} is the exact solution of the Green's function in \eqref{Green Function 1} with a uniform distributed dissipation $p(\boldsymbol{x})$.



Conversely, if the dissipation is non-uniform, the commutator is non-zero and represents an approximation error. 
Assume the dissipation parameter $p(\boldsymbol{x})$ can be decomposed as $p(\boldsymbol{x}) = p_0 + \Delta p(\boldsymbol{x})$, where $p_0$ is a constant background and $\Delta p(\boldsymbol{x})$ is a spatial variation. 
Take the first-order Taylor expansion on the exponential term $\mathbf{E}$ to yield
\begin{equation}
    \mathbf{E} = e^{-\frac{p(\boldsymbol{x})}{2}t} =  e^{-\frac{p_0}{2}t} \cdot  e^{-\frac{\Delta p(\boldsymbol{x})}{2}t} \approx e^{-\frac{p_0}{2}t} \cdot \left(1 - \frac{\Delta p(\boldsymbol{x})}{2}t\right).
\end{equation}
With this approximation, the commutator becomes
\begin{equation}
    \begin{aligned}
        \left[A', e^{-\frac{p(\boldsymbol{x})}{2}t}\right] 
        \approx& \left[A', e^{-\frac{p_0}{2}t}\left(1 - \frac{\Delta p(\boldsymbol{x})}{2}t\right)\right] \\
        =& e^{-\frac{p_0}{2}t} \left[A', 1 - \frac{\Delta p(\boldsymbol{x})}{2}t \right] \\
        =& \frac{t}{2}e^{-\frac{p_0}{2}t}\left[\Delta p(\boldsymbol{x})A' - A'\Delta p(\boldsymbol{x})\right] \\
        =& \frac{t}{2}e^{-\frac{p_0}{2}t}\left[\Delta p(\boldsymbol{x}), A'\right].
    \end{aligned}
\end{equation}

This result reveals that the approximation error induced by \eqref{Green Function Solution 1} is proportional to the spatial variation of the dissipation $\Delta p(\boldsymbol{x})$, and accumulates linearly with propagation time $t$. 
Therefore, the proposed Green's function \eqref{Green Function Solution 1} provides an accurate approximation when the dissipation is nearly uniform at early time.


\section{Approximation of the Dissipation Fraction}
\label{Appendix Lemma}

This appendix provides an intermediate result used in the proof of Proposition 2.
We demonstrate that the dissipation fraction in \eqref{Dissipation Fraction} is dominated by the contribution from the dissipation term $\tilde{E}^{Dis}$.

\newtheorem{lemma}{Lemma}
\begin{lemma}
For small dissipation parameters $0 \le p_1 \le p_2 \ll \|A\|$, the dissipation fraction \eqref{Dissipation Fraction} can be approximated by the ratio of dissipation terms of the electric field:
\begin{equation}
  k(p_1, p_2) \approx \frac{p_1}{p_2} \cdot 
  \frac{\int_\Omega d\boldsymbol{x} \int_{-\infty}^{\infty}\delta^T_{\boldsymbol{x_r}}K(\omega, p_1)\hat{f}(\omega)\delta_{\boldsymbol{x_s}}d\omega}
  {\int_\Omega d\boldsymbol{x} \int_{-\infty}^{\infty}\delta^T_{\boldsymbol{x_r}}K(\omega, p_2)\hat{f}(\omega)\delta_{\boldsymbol{x_s}}d\omega},
  \label{Lemma Dissipation Fraction}
\end{equation}
where
\begin{equation}
  K(\omega, p) = \frac{A + \omega^2}{(\omega^2 - A)^2 + \omega^2p^2}.
  \label{Definition of K}
\end{equation}
\end{lemma}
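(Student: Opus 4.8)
The plan is to work directly from the decomposition $D(t)=D^{Pro}(t)+D^{Dis}(t)$ of \eqref{Data Function}, evaluated at $t=0$, and to show that the change in the $t=0$ data induced by the loss is carried almost entirely by the dissipation component $D^{Dis}$, the propagation component $D^{Pro}$ contributing only a higher-order correction in $p$. Combining \eqref{Even Dissipative Solution} with \eqref{Data Function} at $t=0$ gives $D^{Dis}(0\mid p)=\frac{p}{2\pi}\int_\Omega d\boldsymbol{x}\int_{-\infty}^{\infty}\delta^T_{\boldsymbol{x_r}}K(\omega,p)\hat f(\omega)\delta_{\boldsymbol{x_s}}\,d\omega$ with $K$ as in \eqref{Definition of K}; because of the explicit factor $p$ this vanishes at $p=0$, so $D(0\mid 0)=D^{Pro}(0\mid 0)$, and hence $D(0\mid p)-D(0\mid 0)=[\,D^{Pro}(0\mid p)-D^{Pro}(0\mid 0)\,]+D^{Dis}(0\mid p)$.

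Next I would estimate the propagation difference. From \eqref{Even Propagation Solution} at $t=0$ one has $D^{Pro}(0\mid p)=\frac12\int_\Omega\delta^T_{\boldsymbol{x_r}}[\,\hat f(\sqrt{A'}+j\tfrac p2)+\hat f(\sqrt{A'}-j\tfrac p2)\,]\delta_{\boldsymbol{x_s}}\,d\boldsymbol{x}$ with $A'=A-p^2/4$. Since $f$ has compact support $(-t_0,t_0)$ and is of Gaussian type, $\hat f$ is entire with derivatives bounded by explicit Gaussians, so the functional calculus for the self-adjoint operator $A$ permits a Taylor expansion in $p$: $\sqrt{A'}=\sqrt A-\tfrac{p^2}{8}A^{-1/2}+O(p^4)$, and the symmetric average $\frac12[\hat f(\cdot+j\tfrac p2)+\hat f(\cdot-j\tfrac p2)]$ removes the first-order term, leaving $\frac12[\hat f(\sqrt{A'}+j\tfrac p2)+\hat f(\sqrt{A'}-j\tfrac p2)]=\hat f(\sqrt A)-\tfrac{p^2}{8}(\hat f''(\sqrt A)+A^{-1/2}\hat f'(\sqrt A))+O(p^4)$. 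Hence $D^{Pro}(0\mid p)-D^{Pro}(0\mid 0)=O(p^2)$, and once the Gaussian weight $\hat f$ confines the effective spectrum of $A$ this is in fact $O(p^2/\|A\|)$.

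Finally I would compare magnitudes. The dissipation contribution $D^{Dis}(0\mid p)$ carries the explicit prefactor $p$, so the $O(p^2/\|A\|)$ propagation correction is one order smaller in $p$ and can be dropped, giving $D(0\mid p)-D(0\mid 0)=D^{Dis}(0\mid p)\,(1+O(p/\|A\|))$. Inserting this into the definition \eqref{Dissipation Fraction} of $k(p_1,p_2)$, the prefactors $\tfrac{1}{2\pi}$ cancel, the explicit $p_1$ and $p_2$ come out front, and the remaining operator integrals reproduce exactly \eqref{Lemma Dissipation Fraction}; the neglected terms are of order $O((p_2^2-p_1^2)/\|A\|)$, which is the remainder that is then carried into Proposition 2.

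The main obstacle is making the ``negligible'' step quantitative. One must show that the $O(p^2)$ propagation correction is genuinely subdominant to the dissipation integral $\tfrac{p}{2\pi}\int K(\omega,p)\hat f(\omega)\,d\omega$ — i.e. that, after the factor $p$ is extracted, this integral stays comparable to the data scale and does not degenerate as $p\to 0$ — and one must control $\hat f'$ and $\hat f''$ uniformly against the spectral measure of $A$ on the support of $\hat f$. This is precisely where the Gaussian-source hypothesis and the small-loss regime $0\le p_1\le p_2\ll\|A\|$ enter, and it is also where the bookkeeping that yields the $O((p_2^2-p_1^2)/\|A\|)$ remainder of Proposition 2 is carried out.
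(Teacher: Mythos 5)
Your proposal follows essentially the same route as the paper's proof of the lemma: decompose the $t=0$ data into the propagation and dissipation parts of \eqref{Even Extension}, Taylor-expand the symmetric propagation term so that the first-order (imaginary) contributions cancel and its loss-induced change is $O(p^2)$, and retain the explicitly $O(p)$ dissipation integral, from which the factor $p_1/p_2$ and the ratio of $K(\omega,p)$-integrals emerge. The quantitative control you flag as the remaining obstacle (a lower bound on the dissipation integral and the $O\left((p_2^2-p_1^2)/\|A\|\right)$ bookkeeping) is exactly what the paper defers to the proof of Proposition 2 in Appendix C, so nothing essential is missing from your argument for the lemma itself.
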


\begin{proof}

Without loss of generality, we assume the source $\hat{f}(\omega)$ is normalized.

We define the following two difference terms to convert the dissipation fraction into separate parts:
\begin{equation}
  \Delta \tilde{E}^{Pro}(t, p) = \tilde{E}^{Pro}(t, p) - \tilde{E}^{Pro}(t, 0),
\end{equation}
\begin{equation}
  \begin{aligned}
    \Delta \tilde{E}^{Dis}(t, p) =& \tilde{E}^{Dis}(t, p) - \tilde{E}^{Dis}(t, 0) \\
    =& \frac{p}{2\pi}\int_{-\infty}^{\infty}K(\omega, p)\hat{f}(\omega)\cos(\omega t)\delta_{\boldsymbol{x_s}}(\boldsymbol{x})d\omega,
  \end{aligned}
\end{equation}
thereby the dissipation fraction \eqref{Dissipation Approximation} can be represented as
\begin{equation}
  k(p_1, p_2) = \frac{\int_{\Omega}\delta^T_{\boldsymbol{x_r}}(\boldsymbol{x}) [\Delta \tilde{E}^{Dis}(0, p_1) + \Delta \tilde{E}^{Pro}(0, p_1)] d\boldsymbol{x}} 
  {\int_{\Omega}\delta^T_{\boldsymbol{x_r}}(\boldsymbol{x}) [\Delta \tilde{E}^{Dis}(0, p_2) + \Delta \tilde{E}^{Pro}(0, p_2)] d\boldsymbol{x}}.
  \label{Dissipation Approximation 2}
\end{equation}

First, consider the propagation term difference at $t=0$:
\begin{equation}
  \begin{aligned}
    \Delta E&^{Pro}(0, p) \\
    =& \frac{1}{2}\left[\hat{f}(\sqrt{A-\frac{p^2}{4}}+\frac{jp}{2}) + \hat{f}(\sqrt{A-\frac{p^2}{4}}-\frac{jp}{2})\right]\delta_{\boldsymbol{x_s}}(\boldsymbol{x}) \\
    & -\hat{f}(\sqrt{A})\delta_{\boldsymbol{x_s}}(\boldsymbol{x}).
  \end{aligned}
\end{equation}
Expanding the arguments of $\hat{f}$ around $\sqrt{A}$ by Taylor expansion yields
\begin{equation}
  \begin{aligned}
    \hat{f}(\sqrt{A-\frac{p^2}{4}}+\frac{jp}{2}) \approx& \hat{f}(\sqrt{A}-\frac{p^2}{8\sqrt{A}}+\frac{jp}{2}) \\
    \approx& \hat{f}(\sqrt{A}) - \hat{f}'(\sqrt{A})\cdot(\frac{p^2}{8\sqrt{A}} - j\frac{p}{2}),
  \end{aligned}
\end{equation}
\begin{equation}
  \begin{aligned}
    \hat{f}(\sqrt{A-\frac{p^2}{4}}-\frac{jp}{2}) \approx& \hat{f}(\sqrt{A}-\frac{p^2}{8\sqrt{A}}-\frac{jp}{2}) \\
    \approx& \hat{f}(\sqrt{A}) - \hat{f}'(\sqrt{A})\cdot(\frac{p^2}{8\sqrt{A}} + j\frac{p}{2}).
  \end{aligned}
\end{equation}
Therefore, imaginary parts of the first-order term in $\Delta E^{Pro}(0, p)$ cancel out, and this difference becomes
\begin{equation}
  \begin{aligned}
      \Delta E^{Pro}(0, p) =& \left(-\frac{p^2}{8\sqrt{A}}\hat{f}'(\sqrt{A}) - \frac{p^2}{8}\hat{f}''(\sqrt{A})\right)\delta_{\boldsymbol{x_s}}(\boldsymbol{x}). \\
      =& O(p^2).
  \end{aligned}
  \label{Approx Propagation Term}
\end{equation}

Next, we examine the dissipation term at $t=0$:
\begin{equation}
  \Delta \tilde{E}^{Dis}(0, p) = \frac{p}{2\pi}\int_{-\infty}^{\infty}K(\omega, p)\hat{f}(\omega)\delta_{\boldsymbol{x_s}}(\boldsymbol{x})d\omega = O(p).
  \label{Approx Dissipation Term}
\end{equation}

Comparing the two components, the contribution from the propagation term is of order $O(p^2)$, while the contribution from the dissipation term is of order $O(p)$. 
For a sufficiently small dissipation parameter $p$, the $O(p)$ term is dominant. 
Therefore, the term $D(0, p) - D(0, 0)$ is well approximated by the contribution from the dissipation term:
\begin{equation}
  D(0, p) - D(0, 0) \approx \int_{\Omega}\delta^T_{\boldsymbol{x_r}} \tilde{E}^{Dis}(0, p) d\boldsymbol{x}.
\end{equation}
Applying this approximation to both the numerator and the denominator of the dissipation fraction \eqref{Dissipation Fraction} yields
\begin{equation}
  \begin{aligned}
    k(p_1, p_2) \approx& \frac{\int_{\Omega}\delta^T_{\boldsymbol{x_r}} \tilde{E}^{Dis}(0, p_1) d\boldsymbol{x}}
    {\int_{\Omega}\delta^T_{\boldsymbol{x_r}} \tilde{E}^{Dis}(0, p_2) d\boldsymbol{x}}\\
    =& \frac{p_1}{p_2} \cdot 
    \frac{\int_\Omega d\boldsymbol{x} \int_{-\infty}^{\infty}\delta^T_{\boldsymbol{x_r}}K(\omega, p_1)\hat{f}(\omega)\delta_{\boldsymbol{x_s}}d\omega}
    {\int_\Omega d\boldsymbol{x} \int_{-\infty}^{\infty}\delta^T_{\boldsymbol{x_r}}K(\omega, p_2)\hat{f}(\omega)\delta_{\boldsymbol{x_s}}d\omega}.
  \end{aligned}
\end{equation}
This completes the proof of \eqref{Lemma Dissipation Fraction}.
\end{proof}

\section{Proof of Error Bound of the Dissipation Fraction}
\label{Appendix Dissipation Approximation}

This appendix provides the proof of the error bound of the dissipation fraction \eqref{Dissipation Approximation} in Proposition 2.
The proof leverages the result from Lemma 1 \eqref{Lemma Dissipation Fraction}.
Without loss of generality, we assume the source $\hat{f}(\omega)$ is normalized.
Besides, since the source $f(t)$ is a Gaussian-type signal, $\hat{f}(\omega)$ is monotonically decreasing in $[0, \infty)$.

Based on \eqref{Lemma Dissipation Fraction}, proving Proposition 2 is equivalent to show that the rational fraction $r(p_1, p_2)$ defined as
\begin{equation}
  r(p_1, p_2) = \frac{I(p_1)}{I(p_2)},
  \label{Rational Fraction}
\end{equation}
is close to unity, i.e.
\begin{equation}
  r(p_1, p_2) = 1 + O\left(\frac{p_2^2 - p_1^2}{\|A\|}\right),
\end{equation}
where the scalar integral is defined as
\begin{equation}
  I(p) = \int_{-\infty}^{\infty}K(\omega, p)\hat{f}(\omega)d\omega.
\end{equation}


We begin by analyzing a discrepancy term
\begin{equation}
    \begin{aligned}
        \Delta I =& I(p_1) - I(p_2) \\
        =& \int_{-\infty}^{\infty}[K(\omega, p_1) - K(\omega, p_2)]\hat{f}(\omega)d\omega \\
        =& 2(p_2^2-p_1^2)\int_{0}^{\infty}\frac{\omega^2(A+\omega^2)}{M(\omega, p_1, p_2)}\hat{f}(\omega)d\omega, \\
    \end{aligned}
\end{equation}
where
\begin{equation}
  M(\omega, p_1, p_2) = \left[(A-\omega^2)^2+\omega^2p_1\right]\left[(A-\omega^2)^2+\omega^2p_2\right].
\end{equation}
To estimate the order of this integral, we separate the integration domain into two parts:
\begin{equation}
    \begin{aligned}
        \Delta I_N =& 2(p_2^2-p_1^2)\int_{0}^{N}\frac{\omega^2(A+\omega^2)}{M(\omega, p_1, p_2)}\hat{f}(\omega)d\omega, \\
        \Delta I_\infty =& 2(p_2^2-p_1^2)\int_{N}^{\infty}\frac{\omega^2(A+\omega^2)}{M(\omega, p_1, p_2)}\hat{f}(\omega)d\omega. \\
    \end{aligned}
\end{equation}
Here $N$ is a frequency cutoff ensuring $\frac{\omega^2(A+\omega^2)}{M(\omega,p_1,p_2)}$ is monotonically decreasing in $[N, \infty)$.
Then for the high-frequency part, since
\begin{equation}
  \begin{aligned}
    M(\omega, p_1, p_2) \ge& \left[(A-\omega^2)^2+\omega^2p_1\right]^2 \\
    =& \left[\omega^4 - (2A - p_1)\omega^2 +A^2\right]^2 \\
    >& (\omega^4 + A^2)^2,
  \end{aligned}
\end{equation}
we have
\begin{equation}
    \begin{aligned}
        \Delta I_\infty 
        <& 2(p_2^2-p_1^2)\int_{N}^{\infty}\frac{\omega^2(A+\omega^2)}{(\omega^4 + A^2)}d\omega \\
        <& 2(p_2^2-p_1^2)\int_{N}^{\infty}\frac{\omega^2(A+\omega^2)}{\omega^8}d\omega \\
        =& 2(p_2^2-p_1^2)\left[\left.\frac{A}{5\omega^5}\right|_{\infty}^N + \left.\frac{1}{3\omega^3}\right|_{\infty}^N\right] \\
        =& 2(p_2^2-p_1^2)\cdot\left(\frac{A}{5N^5} + \frac{1}{3N^3}\right).
    \end{aligned}
\end{equation}
For the low-frequency part, we can establish the following bound
\begin{equation}
    \begin{aligned}
        \Delta I_N
        <& 2(p_2^2-p_1^2)\int_{0}^{N}\frac{\omega^2(A + \omega^2)}{(A - \omega^2)^4}d\omega \\
        <& 2(p_2^2-p_1^2)\int_{0}^{N}\frac{\omega^2(A + N^2)}{(A - N^2)^4}d\omega \\
        =& 2(p_2^2-p_1^2)\frac{N^3(A + N^2)}{3(A - N^2)^4}.
    \end{aligned}
\end{equation}
This bound is controlled by $N$.

Since $0 \le p_1 \le p_2 \ll \|A\|$, we choose $N=2\sqrt{A}$.
This choice simplifies the bounds for the high-frequency part:
\begin{equation}
  \begin{aligned}
      \Delta I_\infty <& 2(p_2^2-p_1^2) \cdot \left(\frac{A}{5(2\sqrt{A})^5} + \frac{1}{3(2\sqrt{A})^3}\right) \\
      =& (p_2^2-p_1^2) \cdot O(\|A\|^{-3/2}),
  \end{aligned}
\end{equation}
and for the low-frequency part:
\begin{equation}
  \begin{aligned}
    \Delta I_N <& 2(p_2^2-p_1^2)\frac{(2\sqrt{A})^3\left[A + (2\sqrt{A})^2\right]}{3\left[A - (2\sqrt{A})^2\right]^4} \\
    =& (p_2^2-p_1^2) \cdot O(\|A\|^{-3/2}).
  \end{aligned}
\end{equation}
Combining both parts, the integral is of order $O\left(\frac{p_2^2-p_1^2}{\|A\|^{3/2}}\right)$.
Therefore, the order of the discrepancy term is
\begin{equation}
  \Delta I = O\left(\frac{p_2^2-p_1^2}{\|A\|^{3/2}}\right).
  \label{Upper Bound for Delta I}
\end{equation}




Next, we estimate the order of the integral $I(p_1)$.
The integrando of $I(p_1)$ has a peak around the resonance $\omega^2 \approx A$, then we cam obtain a lower bound by truncating the integration domain to $[0, \sqrt{A}]$:
\begin{equation}
    \begin{aligned}
        I(p_1) =& 2\int_{0}^{\infty}\frac{A + \omega^2}{(A-\omega^2)^2+\omega^2p_1^2}\hat{f}(\omega)d\omega \\
        >& 2\int_{0}^{\sqrt{A}}\frac{A + \omega^2}{(A+\omega^2)^2+\omega^2p_1^2}\hat{f}(\omega)d\omega \\
        =& 2\int_{0}^{\sqrt{A}}\frac{1}{A+\omega^2+\frac{\omega^2}{A+\omega^2}p_1^2}\hat{f}(\omega)d\omega. \\
    \end{aligned}
\end{equation}
Since $\frac{\omega^2}{A+\omega^2}<1$ and $\hat{f}(\omega)$ is monotonically decreasing in $[0, \sqrt{A}]$, a lower bound is derived
\begin{equation}
  \begin{aligned}
    I(p_1) >& 2\hat{f}(\sqrt{A})\int_{0}^{\sqrt{A}}\frac{1}{\omega^2+A+p_1^2}d\omega \\
        >& 2\hat{f}(\sqrt{A})\int_{0}^{\sqrt{A}}\frac{1}{2A+p_1^2}d\omega \\
        =& 2\hat{f}(\sqrt{A})\frac{\sqrt{A}}{2A+p_1^2} = O\left(\frac{1}{\|A\|^{1/2}}\right).
  \end{aligned}
  \label{Lower Bound for I}
\end{equation}

Finally, we combine the bounds from \eqref{Upper Bound for Delta I} and \eqref{Lower Bound for I} to derive the order of the relative difference:
\begin{equation}
    \frac{\Delta I}{I(p_1)} = \frac{O\left(\frac{p_2^2-p_1^2}{\|A\|^{3/2}}\right)}{O\left(\frac{1}{\|A\|^{1/2}}\right)} = O\left(\frac{p_2^2-p_1^2}{\|A\|}\right).
\end{equation}
This implies the order of the rational fraction \eqref{Rational Fraction} is
\begin{equation}
  r(p_1, p_2) = 1 + \frac{\Delta I}{I(p_2)} = 1 + O\left(\frac{p_2^2-p_1^2}{\|A\|}\right).
\end{equation}
Therefore, the proof of \eqref{Dissipation Approximation} is completed.



}

\bibliographystyle{ieeetr}
\bibliography{Ref}

\begin{thebibliography}{10}

\bibitem{Bucci2001subsurface}
O.~Bucci, L.~Crocco, T.~Isernia, and V.~Pascazio, ``Subsurface inverse
  scattering problems: quantifying, qualifying, and achieving the available
  information,'' {\em IEEE Transactions on Geoscience and Remote Sensing},
  vol.~39, no.~11, pp.~2527--2538, 2001.

\bibitem{guo2020joint}
R.~Guo, H.~M. Yao, M.~Li, M.~K.~P. Ng, L.~Jiang, and A.~Abubakar, ``Joint
  inversion of audio-magnetotelluric and seismic travel time data with deep
  learning constraint,'' {\em IEEE Transactions on Geoscience and Remote
  Sensing}, vol.~59, no.~9, pp.~7982--7995, 2020.

\bibitem{Moghaddam2020Soil}
A.~Tabatabaeenejad, R.~H. Chen, M.~S. Burgin, X.~Duan, R.~H. Cuenca, M.~H.
  Cosh, R.~L. Scott, and M.~Moghaddam, ``Assessment and validation of airmoss
  {P}-band root-zone soil moisture products,'' {\em IEEE Transactions on
  Geoscience and Remote Sensing}, vol.~58, no.~9, pp.~6181--6196, 2020.

\bibitem{tan2021precise}
K.~Tan and X.~Chen, ``Precise near-range {3-D} image reconstruction based on
  {MIMO} circular synthetic aperture radar,'' {\em IEEE Transactions on
  Microwave Theory and Techniques}, vol.~69, no.~5, pp.~2651--2661, 2021.

\bibitem{pastorino2002global}
M.~Pastorino, S.~Caorsi, and A.~Massa, ``A global optimization technique for
  microwave nondestructive evaluation,'' {\em IEEE Transactions on
  Instrumentation and Measurement}, vol.~51, no.~4, pp.~666--673, 2002.

\bibitem{wu2019time}
B.~Wu, Y.~Gao, J.~Laviada, M.~T. Ghasr, and R.~Zoughi, ``Time-reversal {SAR}
  imaging for nondestructive testing of circular and cylindrical multilayered
  dielectric structures,'' {\em IEEE Transactions on Instrumentation and
  Measurement}, vol.~69, no.~5, pp.~2057--2066, 2019.

\bibitem{song2020study}
X.~Song, M.~Li, F.~Yang, S.~Xu, and A.~Abubakar, ``Study on {3-D} acoustic
  imaging for human thorax based on contrast source inversion,'' {\em IEEE
  Transactions on Ultrasonics, Ferroelectrics, and Frequency Control}, vol.~67,
  no.~8, pp.~1533--1543, 2020.

\bibitem{abubakar2002imaging}
A.~Abubakar, P.~M. Van~den Berg, and J.~J. Mallorqui, ``Imaging of biomedical
  data using a multiplicative regularized contrast source inversion method,''
  {\em IEEE Transactions on Microwave Theory and Techniques}, vol.~50, no.~7,
  pp.~1761--1771, 2002.

\bibitem{bolomey1990microwave}
J.-C. Bolomey and C.~Pichot, ``Microwave tomography: from theory to practical
  imaging systems,'' {\em International Journal of Imaging Systems and
  Technology}, vol.~2, no.~2, pp.~144--156, 1990.

\bibitem{henriksson2010quantitative}
T.~Henriksson, N.~Joachimowicz, C.~Conessa, and J.-C. Bolomey, ``Quantitative
  microwave imaging for breast cancer detection using a planar 2.45 {GHz}
  system,'' {\em IEEE Transactions on Instrumentation and Measurement},
  vol.~59, no.~10, pp.~2691--2699, 2010.

\bibitem{mojabi2009microwave}
P.~Mojabi and J.~LoVetri, ``Microwave biomedical imaging using the
  multiplicative regularized {Gauss--Newton} inversion,'' {\em IEEE Antennas
  and Wireless Propagation Letters}, vol.~8, pp.~645--648, 2009.

\bibitem{chen2018computational}
X.~Chen, {\em Computational methods for electromagnetic inverse scattering},
  vol.~244.
\newblock Wiley Online Library, 2018.

\bibitem{chew1999waves}
W.~C. Chew, {\em Waves and fields in inhomogenous media}, vol.~16.
\newblock John Wiley \& Sons, 1999.

\bibitem{mueller2012linear}
J.~L. Mueller and S.~Siltanen, {\em Linear and nonlinear inverse problems with
  practical applications}.
\newblock SIAM, 2012.

\bibitem{Catapano2008buried}
I.~Catapano, L.~Crocco, and T.~Isernia, ``Improved sampling methods for shape
  reconstruction of {3-D} buried targets,'' {\em IEEE Transactions on
  Geoscience and Remote Sensing}, vol.~46, no.~10, pp.~3265--3273, 2008.

\bibitem{cui2000novel}
T.~J. Cui and W.~C. Chew, ``Novel diffraction tomographic algorithm for imaging
  two-dimensional targets buried under a lossy earth,'' {\em IEEE Transactions
  on Geoscience and Remote Sensing}, vol.~38, no.~4, pp.~2033--2041, 2000.

\bibitem{colton1996simple}
D.~Colton and A.~Kirsch, ``A simple method for solving inverse scattering
  problems in the resonance region,'' {\em Inverse problems}, vol.~12, no.~4,
  p.~383, 1996.

\bibitem{crocco2012linear}
L.~Crocco, I.~Catapano, L.~Di~Donato, and T.~Isernia, ``The linear sampling
  method as a way to quantitative inverse scattering,'' {\em IEEE Transactions
  on Antennas and Propagation}, vol.~60, no.~4, pp.~1844--1853, 2012.

\bibitem{bevacqua2025improved}
M.~Bevacqua, T.~Isernia, and L.~Di~Donato, ``An improved virtual experiments
  based linear inversion for {2D} inverse scattering,'' in {\em 2025 19th
  European Conference on Antennas and Propagation (EuCAP)}, pp.~1--3, IEEE,
  2025.

\bibitem{haynes2023t}
M.~S. Haynes and I.~Fenni, ``T-matrix backprojection imaging for scalar and
  vector electromagnetic waves,'' {\em IEEE Transactions on Antennas and
  Propagation}, vol.~71, no.~3, pp.~2719--2734, 2023.

\bibitem{yavuz2008sensitivity}
M.~E. Yavuz and F.~L. Teixeira, ``On the sensitivity of time-reversal imaging
  techniques to model perturbations,'' {\em IEEE transactions on antennas and
  propagation}, vol.~56, no.~3, pp.~834--843, 2008.

\bibitem{wu2022tailoring}
C.~T. Wu, N.~M. Nobre, E.~Fort, G.~D. Riley, and F.~Costen, ``Tailoring
  instantaneous time mirrors for time reversal focusing in absorbing media,''
  {\em IEEE Transactions on Antennas and Propagation}, vol.~70, no.~10,
  pp.~9630--9640, 2022.

\bibitem{kirsch2002music}
A.~Kirsch, ``The {MUSIC}-algorithm and the factorization method in inverse
  scattering theory for inhomogeneous media,'' {\em Inverse problems}, vol.~18,
  no.~4, p.~1025, 2002.

\bibitem{gel1951determination}
I.~M. Gel'fand and B.~M. Levitan, ``On the determination of a differential
  equation from its spectral function,'' {\em Izvestiya Rossiiskoi Akademii
  Nauk. Seriya Matematicheskaya}, vol.~15, no.~4, pp.~309--360, 1951.

\bibitem{slob2016green}
E.~Slob, ``Green’s function retrieval and {Marchenko} imaging in a
  dissipative acoustic medium,'' {\em Physical review letters}, vol.~116,
  no.~16, p.~164301, 2016.

\bibitem{habashy1987some}
T.~Habashy and R.~Mittra, ``On some inverse methods in electromagnetics,'' {\em
  Journal of Electromagnetic waves and applications}, vol.~1, no.~1,
  pp.~25--58, 1987.

\bibitem{parker1990forward}
K.~H. Parker and C.~J.~H. Jones, ``Forward and backward running waves in the
  arteries: Analysis using the method of characteristics,'' {\em Journal of
  Biomechanical Engineering}, vol.~112, pp.~322--326, 08 1990.

\bibitem{wang1989iterative}
Y.~Wang and W.~C. Chew, ``An iterative solution of the two-dimensional
  electromagnetic inverse scattering problem,'' {\em International Journal of
  Imaging Systems and Technology}, vol.~1, no.~1, pp.~100--108, 1989.

\bibitem{chew1990reconstruction}
W.~C. Chew and Y.-M. Wang, ``Reconstruction of two-dimensional permittivity
  distribution using the distorted {Born} iterative method,'' {\em IEEE
  transactions on medical imaging}, vol.~9, no.~2, pp.~218--225, 1990.

\bibitem{wu2025inverse}
S.~Q. Wu and B.~O. Zhu, ``The inverse scattering of second-order nonlinear
  media based on distorted {Born} iterative method,'' {\em IEEE Transactions on
  Antennas and Propagation}, 2025.

\bibitem{van2001contrast}
P.~M. Van~den Berg and A.~Abubakar, ``Contrast source inversion method: State
  of art,'' {\em Progress in Electromagnetics research}, vol.~34, no.~11,
  pp.~189--218, 2001.

\bibitem{xu2020deep}
K.~Xu, L.~Wu, X.~Ye, and X.~Chen, ``Deep learning-based inversion methods for
  solving inverse scattering problems with phaseless data,'' {\em IEEE
  Transactions on Antennas and Propagation}, vol.~68, no.~11, pp.~7457--7470,
  2020.

\bibitem{Chen2010subspace}
X.~Chen, ``Subspace-based optimization method for solving inverse-scattering
  problems,'' {\em IEEE Transactions on Geoscience and Remote Sensing},
  vol.~48, no.~1, pp.~42--49, 2010.

\bibitem{wang2024cross}
M.~Wang, S.~Sun, D.~Dai, Y.~Zhang, and Y.~Su, ``Cross-correlated subspace-based
  optimization method for solving electromagnetic inverse scattering
  problems,'' {\em IEEE Transactions on Antennas and Propagation}, 2024.

\bibitem{zhong2020multiresolution}
Y.~Zhong, M.~Salucci, K.~Xu, A.~Polo, and A.~Massa, ``A multiresolution
  contraction integral equation method for solving highly nonlinear inverse
  scattering problems,'' {\em IEEE Transactions on Microwave Theory and
  Techniques}, vol.~68, no.~4, pp.~1234--1247, 2020.

\bibitem{salucci2023recent}
M.~Salucci, Z.~Lin, M.~Li, and A.~Massa, ``Recent advances in {IMSA} for
  electromagnetic inverse problems,'' in {\em 2023 17th European Conference on
  Antennas and Propagation (EuCAP)}, pp.~1--4, IEEE, 2023.

\bibitem{zhong2016new}
Y.~Zhong, M.~Lambert, D.~Lesselier, and X.~Chen, ``A new integral equation
  method to solve highly nonlinear inverse scattering problems,'' {\em IEEE
  Transactions on Antennas and Propagation}, vol.~64, no.~5, pp.~1788--1799,
  2016.

\bibitem{bevacqua2021effective}
M.~T. Bevacqua and T.~Isernia, ``An effective rewriting of the inverse
  scattering equations via {Green’s} function decomposition,'' {\em IEEE
  Transactions on Antennas and Propagation}, vol.~69, no.~8, pp.~4883--4893,
  2021.

\bibitem{bevacqua2024support}
M.~T. Bevacqua and T.~Isernia, ``Support reconstruction of dielectric and
  metallic targets via contraction integral equation,'' {\em IEEE Transactions
  on Antennas and Propagation}, 2024.

\bibitem{borcea2024data}
L.~Borcea, J.~Garnier, A.~V. Mamonov, and J.~Zimmerling, ``When data driven
  reduced order modeling meets full waveform inversion,'' {\em SIAM Review},
  vol.~66, no.~3, pp.~501--532, 2024.

\bibitem{borcea2023waveform}
L.~Borcea, J.~Garnier, A.~V. Mamonov, and J.~Zimmerling, ``Waveform inversion
  via reduced order modeling,'' {\em Geophysics}, vol.~88, no.~2,
  pp.~R175--R191, 2023.

\bibitem{borcea2023waveforminternal}
L.~Borcea, J.~Garnier, A.~V. Mamonov, and J.~Zimmerling, ``Waveform inversion
  with a data driven estimate of the internal wave,'' {\em SIAM Journal on
  Imaging Sciences}, vol.~16, no.~1, pp.~280--312, 2023.

\bibitem{borcea2020reduced}
L.~Borcea, V.~Druskin, A.~V. Mamonov, M.~Zaslavsky, and J.~Zimmerling,
  ``Reduced order model approach to inverse scattering,'' {\em SIAM Journal on
  Imaging Sciences}, vol.~13, no.~2, pp.~685--723, 2020.

\bibitem{shao2025robust}
T.~Shao, Z.~Jia, M.~Li, S.~Xu, and F.~Yang, ``Robust iterative permittivity
  inversion with projection-based reduced order models,'' {\em IEEE
  Transactions on Antennas and Propagation}, vol.~73, no.~6, pp.~3928--3942,
  2025.

\bibitem{borcea2018untangling}
L.~Borcea, V.~Druskin, A.~V. Mamonov, and M.~Zaslavsky, ``Untangling the
  nonlinearity in inverse scattering with data-driven reduced order models,''
  {\em Inverse Problems}, vol.~34, no.~6, p.~065008, 2018.

\bibitem{jia2021enhanced}
Z.~Jia, R.~Guo, M.~Li, F.~Yang, and S.~Xu, ``Enhanced {Born} approximation for
  wave equations,'' in {\em 2021 IEEE International Symposium on Antennas and
  Propagation and USNC-URSI Radio Science Meeting (APS/URSI)}, pp.~1817--1818,
  IEEE, 2021.

\bibitem{jia2022linearization}
Z.~Jia, M.~Li, F.~Yang, and S.~Xu, ``Linearization of {2D} inverse scattering
  problems based on reduced order models,'' in {\em 2022 IEEE International
  Symposium on Antennas and Propagation and USNC-URSI Radio Science Meeting
  (AP-S/URSI)}, pp.~1714--1715, IEEE, 2022.

\bibitem{shao2024robust}
T.~Shao, Z.~Jia, M.~Li, F.~Yang, and S.~Xu, ``Robust iterative permittivity
  inversion method based on reduced order models,'' in {\em 2024 IEEE
  International Symposium on Antennas and Propagation and USNC-URSI Radio
  Science Meeting (AP-S/URSI)}, IEEE, 2024.

\bibitem{jia2024estimation}
Z.~Jia, M.~Li, F.~Yang, and S.~Xu, ``Estimation of the {Born} data in inverse
  scattering of layered media,'' {\em Inverse Problems}, vol.~40, no.~4,
  p.~045005, 2024.

\bibitem{druskin2021lippmann}
V.~Druskin, S.~Moskow, and M.~Zaslavsky, ``{Lippmann--Schwinger--Lanczos}
  algorithm for inverse scattering problems,'' {\em Inverse Problems}, vol.~37,
  no.~7, p.~075003, 2021.

\bibitem{druskin2024reduced}
V.~Druskin, S.~Moskow, and M.~Zaslavsky, ``Reduced order modeling inversion of
  monostatic data in a multi-scattering environment,'' {\em SIAM journal on
  imaging sciences}, vol.~17, no.~1, pp.~334--350, 2024.

\bibitem{druskin2024rom}
V.~Druskin, S.~Moskow, and M.~Zaslavsky, ``{ROM} inversion of monostatic data
  lifted to full {MIMO},'' {\em SIAM journal on imaging sciences}, vol.~17,
  no.~4, pp.~2196--2211, 2024.

\bibitem{baker2025regularized}
J.~Baker, E.~Cherkaev, V.~Druskin, S.~Moskow, and M.~Zaslavsky, ``Regularized
  reduced order {Lippmann--Schwinger--Lanczos} method for inverse scattering
  problems in the frequency domain,'' {\em Journal of Computational Physics},
  p.~113725, 2025.

\bibitem{borcea2021reduced}
L.~Borcea, V.~Druskin, and J.~Zimmerling, ``A reduced order model approach to
  inverse scattering in lossy layered media,'' {\em Journal of Scientific
  Computing}, vol.~89, no.~1, p.~1, 2021.

\bibitem{yavuz2006full}
M.~E. Yavuz and F.~L. Teixeira, ``Full time-domain {DORT} for ultrawideband
  electromagnetic fields in dispersive, random inhomogeneous media,'' {\em IEEE
  Transactions on Antennas and Propagation}, vol.~54, no.~8, pp.~2305--2315,
  2006.

\bibitem{wang2024enhanced}
N.~Wang, Y.~Shi, J.~Ni, J.~Fang, and B.~Yu, ``Enhanced seismic attenuation
  compensation: integrating attention mechanisms with residual learning in
  neural networks,'' {\em IEEE Transactions on Geoscience and Remote Sensing},
  2024.

\bibitem{hargreaves1991inverse}
N.~D. Hargreaves and A.~J. Calvert, ``Inverse {Q} filtering by {Fourier}
  transform,'' {\em Geophysics}, vol.~56, no.~4, pp.~519--527, 1991.

\bibitem{wang2002stable}
Y.~Wang, ``A stable and efficient approach of inverse {Q} filtering,'' {\em
  Geophysics}, vol.~67, no.~2, pp.~657--663, 2002.

\bibitem{yilmaz2001seismic}
{\"O}.~Yilmaz, {\em Seismic data analysis: Processing, inversion, and
  interpretation of seismic data}.
\newblock Society of exploration geophysicists, 2001.

\bibitem{yavuz2005frequency}
M.~E. Yavuz and F.~L. Teixeira, ``Frequency dispersion compensation in time
  reversal techniques for {UWB} electromagnetic waves,'' {\em IEEE Geoscience
  and Remote sensing letters}, vol.~2, no.~2, pp.~233--237, 2005.

\bibitem{abduljabbar2016frequency}
A.~M. Abduljabbar, M.~E. Yavuz, F.~Costen, R.~Himeno, and H.~Yokota,
  ``Frequency dispersion compensation through variable window utilization in
  time-reversal techniques for electromagnetic waves,'' {\em IEEE Transactions
  on Antennas and Propagation}, vol.~64, no.~8, pp.~3636--3639, 2016.

\bibitem{abduljabbar2017continuous}
A.~M. Abduljabbar, M.~E. Yavuz, F.~Costen, R.~Himeno, and H.~Yokota,
  ``Continuous wavelet transform-based frequency dispersion compensation method
  for electromagnetic time-reversal imaging,'' {\em IEEE Transactions on
  Antennas and Propagation}, vol.~65, no.~3, pp.~1321--1329, 2017.

\bibitem{xanthos2021resolution}
L.~Xanthos, M.~E. Yavuz, R.~Himeno, H.~Yokota, and F.~Costen, ``Resolution
  enhancement of {UWB} time-reversal microwave imaging in dispersive
  environments,'' {\em IEEE Transactions on Computational Imaging}, vol.~7,
  pp.~925--934, 2021.

\bibitem{kosmas2005time}
P.~Kosmas and C.~M. Rappaport, ``Time reversal with the {FDTD} method for
  microwave breast cancer detection,'' {\em IEEE Transactions on microwave
  theory and techniques}, vol.~53, no.~7, pp.~2317--2323, 2005.

\bibitem{mittet1995prestack}
R.~Mittet, R.~Sollie, and K.~Hokstad, ``Prestack depth migration with
  compensation for absorption and dispersion,'' {\em Geophysics}, vol.~60,
  no.~5, pp.~1485--1494, 1995.

\bibitem{zhang2013compensation}
J.~Zhang, J.~Wu, and X.~Li, ``Compensation for absorption and dispersion in
  prestack migration: An effective {Q} approach,'' {\em Geophysics}, vol.~78,
  no.~1, pp.~S1--S14, 2013.

\bibitem{zhang2010compensating}
Y.~Zhang, P.~Zhang, and H.~Zhang, ``Compensating for visco-acoustic effects in
  reverse-time migration,'' in {\em SEG technical program expanded abstracts
  2010}, pp.~3160--3164, Society of Exploration Geophysicists, 2010.

\bibitem{wang2018adaptive}
Y.~Wang, H.~Zhou, H.~Chen, and Y.~Chen, ``Adaptive stabilization for
  {Q-compensated} reverse time migration,'' {\em Geophysics}, vol.~83, no.~1,
  pp.~S15--S32, 2018.

\bibitem{dutta2014attenuation}
G.~Dutta and G.~T. Schuster, ``Attenuation compensation for least-squares
  reverse time migration using the viscoacoustic-wave equation,'' {\em
  Geophysics}, vol.~79, no.~6, pp.~S251--S262, 2014.

\bibitem{yang2019viscoacoustic}
J.~Yang and H.~Zhu, ``Viscoacoustic least-squares reverse time migration using
  a time-domain complex-valued wave equation,'' {\em Geophysics}, vol.~84,
  no.~5, pp.~S479--S499, 2019.

\bibitem{borcea2019robust}
L.~Borcea, V.~Druskin, A.~V. Mamonov, and M.~Zaslavsky, ``Robust nonlinear
  processing of active array data in inverse scattering via truncated reduced
  order models,'' {\em Journal of Computational Physics}, vol.~381, pp.~1--26,
  2019.

\bibitem{mahdad2011separation}
A.~Mahdad, P.~Doulgeris, and G.~Blacquiere, ``Separation of blended data by
  iterative estimation and subtraction of blending interference noise,'' {\em
  Geophysics}, vol.~76, no.~3, pp.~Q9--Q17, 2011.

\end{thebibliography}

\end{document}